\documentclass[10pt]{article}
\usepackage{pgfplots}
\pgfplotsset{compat=1.14}
\usepackage{tikz-3dplot}
\usepackage{xspace}
\usepackage{tikz}
\usepackage{multirow}
\usepackage{marvosym}
\usepackage{enumitem}
\usepackage{fullpage}
\usepackage{amsthm}
\usepackage{amssymb}
\usepackage{mathtools}
\usepackage{amsfonts}
\usepackage{booktabs}
\usepackage{subcaption}
\usepackage{relsize}
\usepackage[T1]{fontenc}

\usepackage{hyperref}
\pgfplotsset{compat=1.16}
\usepackage{tikz-3dplot}
\usepackage{xspace}
\usepackage{tikz}
\pgfdeclarelayer{background}
\pgfsetlayers{background,main}

\usepackage{xcolor,colortbl}


\usepackage{amsmath}
\usepackage{hyperref}
\usepackage{pgfplots}
\pgfplotsset{compat=1.16}
\usepackage{tikz-3dplot}
\usepackage{xspace}
\usepackage{tikz}
\pgfdeclarelayer{background}
\pgfsetlayers{background,main}
\usetikzlibrary{decorations.pathreplacing,calligraphy,backgrounds,positioning,arrows.meta,shapes.multipart,matrix}

\usepackage{float}
\usepackage{multirow}
\usepackage{marvosym}
\usepackage{booktabs}

\usepackage{tcolorbox}
\usepackage{mdframed}
\mdfdefinestyle{MyFrame}{
outerlinewidth=2pt,
roundcorner=5pt,
leftmargin=100pt,
innertopmargin=\baselineskip,
innerbottommargin=1pt,
innerrightmargin=20pt,
innerleftmargin=100pt,
backgroundcolor=gray!30!white
}
\usepackage{minibox}
\usepackage{xcolor,colortbl}
\usepackage{bm}

\usepackage{xspace}
\usepackage{algorithmicx}
\usepackage{algpseudocode}

\usepackage[ruled,linesnumbered]{algorithm2e} 


\newcommand{\lin}{\calC_{\texttt{lin}}}
\newcommand{\pol}{\calC_{\texttt{poly}}}
\newcommand{\expo}{\calC_{\texttt{exp}}}

\definecolor{light-gray}{gray}{0.7.2}
\definecolor{goodgreen}{rgb}{0.1, 0.5, 0.1}
\definecolor{burntorange}{rgb}{0.8, 0.33, 0.0}

\newcommand{\combine}{\textsf{Combine}}
\newcommand{\rewrite}{\textsc{Rewrite}}

\newcommand{\vars}{\mathit{vars}}

\newcommand{\free}{\mathit{free}}

\newcommand{\atoms}{\mathit{atoms}}
\newcommand{\dynatoms}{\mathit{dynAtoms}}

\newcommand{\dyn}{\mathit{dyn}}
\newcommand{\stat}{\mathit{stat}}

\newcommand{\dep}{\textit{dep}}

\newcommand{\bigO}[1]{\mathcal{O}(#1)}

\newcommand{\WVO}{\mathsf{WVO}}

\newcommand{\fhtw}{\mathsf{fhtw}}
\newcommand{\fw}{\mathsf{w}}

\newcommand{\linenumber}{\makebox[2ex][r]{\rownumber\TAB}}

\newcommand{\tup}[1]{\mathbf{#1}}








\newcommand{\calN}{\mathcal{N}}

\newcommand{\calT}{\mathcal{T}}

\newcommand{\calB}{\mathcal{B}}
\newcommand{\calA}{\mathcal{A}}

\newcommand{\calC}{\mathcal{C}}

\newcommand{\calS}{\mathcal{S}}

\newcommand{\calX}{\mathcal{X}}

\newcommand{\calU}{\mathcal{U}}

\newcommand{\TAB}{\makebox[2.5ex][r]{}}%
\newcommand{\SPACE}{\makebox[0.75ex][r]{}}%
\newcommand{\LET}{\textbf{let}\xspace}%
\newcommand{\IF}{\textbf{if}\xspace}%
\newcommand{\FOREACH}{\textbf{foreach}\xspace}%
\newcommand{\RETURN}{\textbf{return}\xspace}%
\newcommand{\MATCH}{\textbf{switch}\xspace}%
%
%
%
%
%
%
%
%
%
%

\setlength{\marginparwidth}{2cm} 
\usepackage[colorinlistoftodos]{todonotes}

\newcommand{\nop}[1]{}

\setlength {\marginparwidth}{2cm}

\newcounter{magicrownumbers}
\newcommand\rownumber{\footnotesize\stepcounter{magicrownumbers}\arabic{magicrownumbers}}

\newcommand{\change}[1]{{\color{blue} #1}}

\theoremstyle{plain}                  
\newtheorem{theorem}{Theorem}
\newtheorem{lemma}[theorem]{Lemma}
\newtheorem{proposition}[theorem]{Proposition}
\newtheorem{conjecture}[theorem]{Conjecture}
         
\newtheorem{definition}[theorem]{Definition}
\newtheorem{example}[theorem]{Example}

\newtheorem{claim}[theorem]{Claim}

\date{}
\title{Tractable Conjunctive Queries over Static and Dynamic Relations}

\author{
\begin{tabular}{ccc}
\hspace{0.1cm} Ahmet Kara \hspace{0.5cm}& \hspace{0.5cm} Zheng Luo \hspace{0.5cm} & \hspace{0.5cm} Milos Nikolic \hspace{0.5cm} \\
\hspace{0.1cm} ahmet.kara@oth-regensburg.de \hspace{0.5cm} & \hspace{0.5cm} luo@cs.ucla.edu \hspace{0.5cm} & \hspace{0.5cm} milos.nikolic@ed.ac.uk \hspace{0.5cm} 
\end{tabular}\\ \\
\begin{tabular}{cc}
 Dan Olteanu \hspace{0.5cm} & \hspace{0.5cm} Haozhe Zhang \hspace{0.5cm} \\
dan.olteanu@uzh.ch \hspace{0.5cm} & \hspace{0.5cm} haozhe.zhang@uzh.ch \hspace{0.5cm} 
\end{tabular}
}

\begin{document}
\maketitle

\begin{abstract}
We investigate the evaluation of conjunctive queries over static and dynamic relations. While static relations are given as input and do not change, dynamic relations are subject to inserts and deletes.

We characterise syntactically three classes of queries 
that admit constant update time and constant enumeration delay. 
We call such queries {\em tractable}. Depending on the class, the preprocessing time is linear, polynomial, or exponential (under data complexity, so the query size is constant).

To decide whether a query is tractable, it does not suffice to analyse separately the sub-queries over the static relations and over the dynamic relations, respectively. Instead, we need to take the interaction between the static and the dynamic relations into account. Even when the sub-query over the dynamic relations is not tractable, the overall query can become tractable if the dynamic relations are sufficiently constrained by the static ones.      
\end{abstract}

\paragraph{Acknowledgements.}
The authors would like to acknowledge the UZH Global Strategy and Partner-
ships Funding Scheme and the EPSRC grant EP/T022124/1. Olteanu would like to thank the IVM
team at RelationalAI and in particular Niko Göbel for discussions that motivated this work. Luo’s
contribution was done while he was with the University of Zurich.

\section{Introduction}
\label{sec:intro}
Incremental view maintenance, also known as fully dynamic query evaluation, is a fundamental task in data management~\cite{DBT:VLDBJ:2014, Idris:dynamic:SIGMOD:2017, DattaKMSZ18, Kara:TODS:2020, TaoY22, FIVM:VLDBJ:2023, DBLP:journals/pvldb/WangHDY23, BudiuCMRT23}.
A natural question is to understand which conjunctive queries are tractable, i.e., which queries admit constant time per single-tuple update (insert or delete) and also constant delay for the enumeration of the result tuples. 
The problem setting also allows for some one-off preprocessing phase to construct a data structure that supports the updates and the enumeration.
Prior work showed that the $q$-hierarchical queries are the conjunctive queries without repeating relation symbols that are tractable; such queries admit constant update time and enumeration delay, even if we only allow for linear time preprocessing~\cite{BerkholzKS17}. All other conjunctive queries without repeating relation symbols cannot admit both constant update time and constant enumeration delay, even if we allow arbitrary time for preprocessing. If we only allow inserts (and no deletes), then every free-connex $\alpha$-acyclic conjunctive query becomes tractable~\cite{DBLP:journals/pvldb/WangHDY23,DBLP:journals/corr/abs-2312-09331}.
The tractable queries with free access patterns, where the free variables are partitioned into input and output, naturally extend the  $q$-hierarchical queries, which are queries without input variables~\cite{ICDT23_access_pattern}. 
Beyond this well-behaved query class, there are known intractable queries such as the triangle queries~\cite{Kara:ICDT:19,Kara:TODS:2020} and hierarchical conjunctive queries with arbitrary free variables~\cite{KaraNOZ2020,DBLP:journals/lmcs/KaraNOZ23} for which algorithms are known to achieve worst-case, yet not constant, optimal update time and enumeration delay.
Restrictions to the data and the update sequence can turn intractable queries into tractable ones: for instance, when the update sequence has a bounded enclosureness values~\cite{DBLP:journals/pvldb/WangHDY23} or when the data satisfies functional dependencies~\cite{FIVM:VLDBJ:2023}, degree bounds~\cite{BerkholzKS18_bounded_degree}, or more general integrity constraints~\cite{BerkholzKS18}.

All aforementioned works consider the {\em all-dynamic setting}, where all relations are updatable. In this work, we extend the tractability frontier by considering a {\em mixed setting}, where the input database can have both dynamic relations, which are subject to change, and static relations, which do not change. 
The mixed setting appears naturally in practice, e.g., in the retail domain where some relations (e.g., Inventory, Sales) are updated very frequently while others (e.g., Stores, Demographics) are only updated sporadically if at all.
\nop{The mixed setting appears naturally in real-world applications at RelationalAI (personal communication). For instance, in the retailer domain, the Inventory and Sales relations are updated very frequently, whereas the Stores and Demographics relations are updated very infrequently and can be considered static for a large time period. }
We show that by differentiating between static and dynamic relations, we can design efficient query maintenance strategies tailored to the mixed setting.

\begin{figure}[t]
\begin{center}
\begin{tikzpicture}

\draw[rounded corners=4pt, line width=1.2pt, color = red] (0,0.5) rectangle (5,1.4);
\node at(1.2,0.9){$\lin$};
\node at(4.3,0.9){$\bigO{N}$};

\draw[dashed, rounded corners=4pt, color = blue, 
line width = 1.2] (-0.3,0.2) rectangle (5.3,2.2);
\node at(1.2,1.8){$\pol$};
\node at(4.3,1.8){$\bigO{N^{\fw}}$};

\draw[rounded corners=4pt, line width=1.2pt, color = blue, dashed] (-0.6,-0.1) rectangle (5.6,3);
\node at(1.2,2.6){$\expo$};
\node at(4.3,2.6){EXPTIME};

\draw[rounded corners=4pt] (-0.9,-0.4) rectangle (5.9,4);
\node at(1.5,3.5){Conjunctive Queries};
\node at(4.3,3.5){$-$};

\end{tikzpicture}
\caption{
Classes $\lin \subset \pol \subset \expo$ of tractable conjunctive queries over static and dynamic relations and the corresponding preprocessing time (data complexity). $N$ is the current database size and $\fw$ is the preprocessing width.
The solid (red) border around $\lin$ states that there is a dichotomy between the queries inside and outside the class, shown in this paper. 
The dashed (blue) border around a class states that no dichotomy is known for queries inside and outside the class.
}
\label{fig:result_overview}
\end{center}
\end{figure}

\subsection*{Main Contributions}

We characterise syntactically three classes of tractable conjunctive queries depending on their preprocessing time, cf.\@ Figure~\ref{fig:result_overview}: $\lin \subset \pol \subset \expo$. 
The queries are categorised based on their structure and on the structure of their static and dynamic sub-queries. These classes are defined in Section~\ref{sec:query_classes}.

The class $\lin$ characterises the tractable queries with linear time preprocessing: 

\begin{theorem}
\label{thm:dichotomy_linear}
    Let a conjunctive query $Q$ and a database of size $N$. 
\begin{itemize}
\item If $Q$ is in $\lin$, then it can be maintained with $\bigO{N}$ preprocessing time,
$\bigO{1}$ update time, and $\bigO{1}$ enumeration delay. 

\item If $Q$ is not in $\lin$ and does not have repeating relation symbols, then 
it cannot be maintained with  $\bigO{N}$ preprocessing time, $\bigO{1}$
update time, and $\bigO{1}$ enumeration delay, 
unless the Online Matrix-Vector Multiplication or the Boolean Matrix-Matrix Multiplication conjecture fail.  
\end{itemize}
\end{theorem}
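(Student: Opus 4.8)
I would prove the two bullets separately: a constructive upper bound for membership in $\lin$, and a pair of conditional lower bounds — one reducing from \OMv and one from the BMM conjecture — that together cover every query outside $\lin$ without repeating relation symbols. The guiding intuition, following the abstract, is that $\lin$ refines the $q$-hierarchical criterion by letting static atoms ``cover'' obstructions that would otherwise make a purely dynamic pattern hard; the upper bound exploits this coverage algorithmically, while the lower bound must show that whatever coverage is missing cannot be supplied by the static relations.

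\textbf{Upper bound.} Given $Q \in \lin$, I would use the syntactic witness supplied by the definition of $\lin$: a variable order / view tree in which (i) the dynamic atoms are arranged so that a $q$-hierarchical-like dominance condition holds among the free variables they touch, relativised by the static atoms, and (ii) the static subquery admits linear-time evaluation. The maintained data structure is the materialised view tree. During preprocessing, views over static relations are computed bottom-up; the linearity condition on the static structure (free-connex acyclicity of the relevant static part, realised via Yannakakis-style semi-join reductions) guarantees this runs in $\bigO{N}$ time. Views involving dynamic relations are maintained incrementally: since $Q$ has constant size, a single-tuple update propagates along a root-to-node path of constant length, and because each such view has its free/non-key variables bounded by the hierarchical arrangement, each delta is computed and applied in $\bigO{1}$. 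Constant-delay enumeration then follows because the order places free variables above bound ones, so the top view enumerates the output tuples with $\bigO{1}$ delay.

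\textbf{Lower bound.} Assume $Q$ has no repeating relation symbols and $Q \notin \lin$. I would first prove a structural dichotomy on the \emph{reason} for failure: either (a) $Q$ violates the hierarchical/free-dominance condition in a way that essentially involves dynamic atoms and is not repaired by any static atom, or (b) $Q$ meets the dynamic conditions only through a static interaction whose evaluation is inherently superlinear. In case (a) I reduce from \OMv (via its \OuMv form): I extract a minimal hard core — the prototypical witness being two free variables $x,y$ and dynamic atoms realising a pattern such as $R(x), S(x,y), T(y)$ for which neither $\atoms(x)$ nor $\atoms(y)$ dominates the other and no static atom covers the pair — encode the \OuMv matrix by a relation populated during preprocessing, encode the online Boolean vectors by single-tuple updates to the \emph{dynamic} relations, and read each product off via a Boolean/enumeration step. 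A scheme with $\bigO{1}$ update and $\bigO{1}$ delay would then solve \OuMv in $\bigO{n^{3-\eps}}$, contradicting the conjecture; crucially, the reduction only ever inserts into relations that are dynamic in $Q$, so staticity of the remaining relations does not weaken it. In case (b) I reduce from BMM: the static interaction forces preprocessing to implicitly compute a Boolean product of two static relations of dimension $\sqrt{N}$, and an $\bigO{N}$-preprocessing, $\bigO{1}$-update, $\bigO{1}$-delay scheme would extract that product in total time $\bigO{N}=\bigO{n^{2}}$, beating the conjectured $n^{3-\smallO{1}}$ combinatorial bound.

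\textbf{Main obstacle.} The crux is the structural characterisation that feeds the lower bound: proving that \emph{every} no-repeating-symbol query outside $\lin$ lands in case (a) or case (b), that the offending pattern genuinely depends on dynamic relations, and that the two cases are exhaustive. The subtle point is cleanly separating the \OMv obstruction (hardness under arbitrary preprocessing, placing the query outside even the largest tractable class) from the BMM obstruction (hardness specific to \emph{linear} preprocessing, for queries that remain tractable with $\bigO{N^{\fw}}$ preprocessing). By contrast, the upper bound is mostly bookkeeping on the view-tree maintenance, with the only genuine subtlety being the linear-time evaluation of the static subquery.
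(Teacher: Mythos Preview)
Your upper-bound sketch is in the right spirit and close to the paper's: a well-structured variable order yields a safe view tree, and free-connex $\alpha$-acyclicity of the \emph{whole} query (not just the static part) gives preprocessing width~$1$, hence $\bigO{N}$ materialisation.

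The lower bound has a genuine conceptual gap. You write that the \OMv obstruction gives ``hardness under arbitrary preprocessing, placing the query outside even the largest tractable class''. This is wrong in the mixed setting and is exactly where the paper's argument differs from the all-dynamic case. In the standard reduction one encodes the matrix $M$ into the ``middle'' relation via updates, so preprocessing sees an empty database and its time is irrelevant. Here the middle relation is \emph{static}: the canonical hard Boolean query is $Q_{RST}() = R^d(A), S^s(A,B), T^d(B)$, and the matrix must be loaded into $S^s$ \emph{before} preprocessing. Consequently the preprocessing algorithm sees a database of size $\Theta(n^2)$ and can spend that budget; the reduction only yields a contradiction to \OuMv if preprocessing is bounded by $\bigO{N^{3/2-\gamma}}$. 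Indeed $Q_{RST}$ (with $A,B$ free) is the paper's $Q_3$, which \emph{is} tractable with exponential preprocessing and lies in $\expo$. So your separation ``\OMv = hard regardless of preprocessing; BMM = hard only for linear preprocessing'' collapses: in this setting both obstructions are preprocessing-sensitive.

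Your case split is also off. The paper's decomposition is driven directly by the definition $\lin = \{\text{free-connex }\alpha\text{-acyclic}\}\cap\{\text{well-behaved}\}$, giving three cases: (1) not free-connex $\alpha$-acyclic (the whole query, not just its static part) $\Rightarrow$ BMM via Bagan--Durand--Grandjean, with no updates needed at all; (2) a non-body-safe path between two dynamic atoms $\Rightarrow$ reduce the evaluation of $Q_{RST}$ to that of $Q$; (3) a non-head-safe path from a dynamic atom to the head $\Rightarrow$ reduce the evaluation of $Q_{ST}(A) = S^s(A,B), T^d(B)$ to that of $Q$. Your single ``dynamic obstruction'' case~(a) conflates (2) and (3), which require different hard queries (Boolean vs.\ unary output) and different enumeration arguments; and your case~(b), phrased as a superlinear ``static interaction'', mischaracterises~(1), which is a property of the query's hypergraph irrespective of which atoms are static or dynamic.
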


The upper bound in Theorem~\ref{thm:dichotomy_linear} relies on a rewriting of the query $Q$ using strata of views, where the views are defined using projections or joins of input relations and possibly other views (Section~\ref{sec:rewriting}). We call such rewritings {\em safe} if the views can be maintained in constant time for single-tuple updates and support constant-delay enumeration of the query result. We show that every $\lin$ query has a safe rewriting and the views can be computed in linear time (Section~\ref{sec:evaluation_lin_pol}).
The lower bound in Theorem~\ref{thm:dichotomy_linear} relies on two widely used conjectures: the Online Matrix-Vector Multiplication~\cite{Henzinger:OMv:2015} and the Boolean Matrix-Matrix Multiplication~\cite{BaganDG07}. The proof of the lower bound is outlined in Section~\ref{sec:lower_bounds}.

\begin{figure}[t]
    \centering
\begin{tikzpicture}
    \node at(0, 0.45) {$\underline{A}$};
    \node at(0,-0.45) {$\underline{B}$};
    \node at(1, 0.45) {$D$};
    \node at(1,-0.45) {$\underline{C}$};
    
    \draw[red] (0.5,0.45) ellipse (1cm and 0.3cm);
    \draw[red] (0,0) ellipse (0.3cm and 1cm);
    \draw[blue, dotted, line width=1.2] (0.5,-0.45) ellipse (1cm and 0.3cm);

\begin{scope}[xshift=2.8cm]
    \node at(0, 0.45) {$\underline{A}$};
    \node at(0,-0.45) {$B$};
    \node at(1, 0.45) {$\underline{D}$};
    \node at(1,-0.45) {$\underline{C}$};
    
    \draw[red] (0.5,0.45) ellipse (1cm and 0.5cm);
        \draw[red] (1,0.45) ellipse (0.3cm and 0.3cm);
    \draw[blue, dotted, line width=1.2] (0,0) ellipse (0.3cm and 1cm);
    \draw[blue, dotted, line width=1.2] (0.5,-0.45) ellipse (1cm and 0.3cm);
\end{scope}    

\begin{scope}[xshift=5.6cm]
\draw[blue, dotted, line width= 1.2] (0,0) ellipse (0.5cm and 1cm);
    \draw[red] (0,0.45) ellipse (0.3cm and 0.3cm);
    \node at(0, 0.45) {$\underline{A}$};

    \draw[red] (0,-0.45) ellipse (0.3cm and 0.3cm);
    \node at(0,-0.45) {$\underline{B}$};
\end{scope}    

\begin{scope}[xshift=7.3cm]
\draw[red,rotate=-35] (0.2,0.1) ellipse (0.33cm and 1cm);
    \node at(0.5, 0.35) {$\underline{A}$};
     \node at(0,-0.45) {$\underline{B}$};
    \node at(1.1,-0.45) {$\underline{C}$};
    \draw[blue, dotted, line width=1.2pt] (0.45,-0.45) ellipse (1cm and 0.33cm);
      \draw[red,rotate=35] (0.6,-0.4) ellipse (0.33cm and 1cm);
 
    \nop{
    \node at(0, 0.45) {$A$};
    \node at(0,-0.45) {$B$};
    \node at(0.8, 0.45) {$D$};
    \node at(1,-0.45) {$\underline{C}$};
    
    \draw[red] (0.5,0.45) ellipse (1cm and 0.3cm);
    \draw[red] (0,0) ellipse (0.3cm and 1cm);
    \draw[blue, dotted, line width=1.2] (0.5,-0.45) ellipse (1cm and 0.3cm);    
}
\end{scope}

\begin{scope}[xshift=10.2cm]
\draw[red,rotate=-35] (0.2,0.1) ellipse (0.33cm and 1cm);
    \node at(0.5, 0.35) {$A$};

    \draw[blue, dotted, line width=1.2pt] (0.45,-0.45) ellipse (1cm and 0.33cm);
    \node at(0,-0.45) {$\underline{B}$};
    \node at(1.1,-0.45) {$\underline{C}$};
      \draw[red,rotate=35] (0.6,-0.4) ellipse (0.33cm and 1cm);
\end{scope}

\begin{scope}[xshift=13cm]
\draw[red] (0,0) ellipse (0.5cm and 1cm);
    \draw[red] (0,0.45) ellipse (0.3cm and 0.3cm);
\draw[dotted, blue, line width = 1.2] (1.1,-0.45) ellipse (0.3cm and 0.3cm);    
    \node at(0, 0.45) {$\underline{A}$};
    \draw[red] (0.6,-0.45) ellipse (1cm and 0.45cm);
    \node at(0,-0.45) {$\underline{B}$};
    \node at(1.1,-0.45) {$C$};
\end{scope}

\node at(0.5,1.5){$Q_1\in \lin$};
\node at(3.3,1.5){$Q_2\in \pol$};
\node at(5.7,1.5){$Q_3\in \expo$};
\node at(7.8,1.5){$Q_4\not\in \expo$};
\node at(10.8,1.5){$Q_5\notin \expo$};
\node at(13.5,1.5){$Q_6\notin \expo$};


\begin{scope}[yshift=0.3cm]
\node at(3,-1.8){$Q_1(A,B,C) = R^d(A,D), S^d(A,B), T^s(B,C)$};
\node at(11,-1.8){$Q_2(A,C,D) = R^d(A,D), S^s(A,B), T^s(B,C), U^d(D)$};

\node at(3,-2.4){$Q_3(A,B) = R^d(A), S^s(A,B), T^d(B)$};
\node at(11,-2.4){$Q_4(A,B,C) = R^d(A,B), S^d(A,C), T^s(B,C)$};

\node at(3,-3){$Q_5(B,C) = R^d(A,B), S^d(A,C), T^s(B,C)$};
\node at(11,-3){$Q_6(A,B) = R^d(A), S^d(A,B), T^d(B,C), U^s(C)$};
\end{scope}

\end{tikzpicture}
    \caption{Examples of queries inside and outside our tractability classes. The static and dynamic relations are adorned with the superscripts $s$ and respectively $d$. In the hypergraphs, there is one node per variable and one hyperedge per relation. Underlined variables are free. Solid (red) hyperedges denote the dynamic relations, the dotted (blue) hyperedges denote the static relations.  
 }
    \label{fig:queries}
\end{figure}

\begin{example}
\rm
Let the query $Q_1(A,B,C) = R^d(A,D), S^d(A,B), T^s(B,C)$ 
in Figure~\ref{fig:queries}. The dynamic relations $R$ and $S$ are adorned with the superscript $d$, while the static relation $T$ is adorned with $s$. The query is not tractable in the all-dynamic setting (as it is not $q$-hierarchical, cf.\@ Section~\ref{sec:prelims}), yet it is in  $\lin$, so it is tractable in the mixed setting. 
\end{example}

In the all-static setting, $\lin$ becomes the class of free-connex $\alpha$-acyclic conjunctive queries, which are those conjunctive queries that admit constant enumeration delay after linear time preprocessing~\cite{BaganDG07}. In the all-dynamic setting, $\lin$ becomes the class of $q$-hierarchical conjunctive queries, which are those conjunctive queries that admit constant enumeration delay and constant update time after linear time preprocessing. Every query  in $\lin$ is free-connex $\alpha$-acyclic and its dynamic sub-query, which is obtained by removing the static relations, is $q$-hierarchical. Yet the queries in  $\lin$ need to satisfy further syntactic constraints on the connections between their static and dynamic relations: For instance, $Q_3$ and $Q_4$ from Figure~\ref{fig:queries} are not in $\lin$ even though they are free-connex $\alpha$-acyclic and their dynamic sub-queries are $q$-hierarchical.

The queries in $\pol \setminus \lin$ are also tractable, yet they require super-linear preprocessing time, unless the Online Matrix-Vector Multiplication or the Boolean Matrix-Matrix Multiplication conjectures fail.  We introduce a new width measure $\fw$ for conjunctive queries, called the {\em preprocessing width}, to characterise the preprocessing time of queries in $\pol$:

\begin{theorem}
\label{thm:polynomial}
    Let a conjunctive query $Q \in \pol$,  the preprocessing width $\fw$ of $Q$, and a database of size $N$. 
    The query $Q$ can be maintained with $\bigO{N^{\fw}}$ preprocessing time, $\bigO{1}$ update time, and $\bigO{1}$ enumeration delay. 
\end{theorem}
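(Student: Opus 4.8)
The plan is to reuse the safe-rewriting machinery from the upper bound of Theorem~\ref{thm:dichotomy_linear} and to confine the extra cost of a $\pol$ query entirely to the preprocessing phase. Since $Q \in \pol$, I would first construct a safe rewriting of $Q$ into strata of views defined by projections and joins of input relations and previously defined views, as in Section~\ref{sec:rewriting}. The rewriting for a $\pol$ query differs from the $\lin$ case only in that it may contain views over the static relations alone --- its \emph{static parts} --- whose materialisation is not necessarily feasible in linear time. Because static relations never change, these views are materialised once during preprocessing and are treated as read-only, hash-indexed lookup tables thereafter; crucially, no view that can change under updates is allowed to be large.

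Given such a rewriting, the constant update time and constant enumeration delay follow essentially verbatim from the $\lin$ analysis in Section~\ref{sec:evaluation_lin_pol}. A single-tuple update touches only a dynamic relation, so no static view is recomputed; the delta of each affected dynamic view is obtained by probing the relevant child views, and since the static views carry appropriate hash indices these probes are $\bigO{1}$. Constant-delay enumeration is inherited from the safety of the rewriting: the top view already lists the output, and the free-connex structure ensures that projecting away bound variables does not break the delay bound.

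The heart of the theorem is the preprocessing bound, and this is where the new width measure does the work. I would define the preprocessing width $\fw$ as the maximum, over the static views in the rewriting, of the exponent bounding the worst-case size and materialisation cost of that view --- an AGM-type / $\fhtw$-type quantity of the static sub-query feeding it. A worst-case optimal join algorithm then materialises each static view in time $\bigO{N^{\fw}}$, and the AGM bound guarantees that its size, and hence the cost of building its hash index, is also $\bigO{N^{\fw}}$. Initialising the dynamic views on the current database of size $N$ is cheaper and is absorbed into $\bigO{N^{\fw}}$ (using $\fw \geq 1$), e.g.\ by replaying the $N$ initial tuples through the $\bigO{1}$ update procedure. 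Summing over the constantly many views (data complexity) yields total preprocessing time $\bigO{N^{\fw}}$.

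The step I expect to require the most care is showing that the definition of $\pol$ forces a rewriting with a clean static/dynamic split: every view mixing static and dynamic atoms must have its static portion already pre-joined into a materialised static view, so that at update time only constant-time probes into those tables are needed rather than fresh joins, and so that every changeable view stays of bounded size per update. Establishing this structural separation --- and that the exponents of the resulting static parts coincide exactly with $\fw$ (matching the intended lower bound) --- is the main obstacle; once the rewriting is known to have this shape, the cost accounting above is routine.
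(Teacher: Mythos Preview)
Your plan is the paper's: construct a safe rewriting from a well-structured variable order (the structural lemma you flag as the main obstacle is exactly Propositions~\ref{prop:pol_vo_dynamic_q-hierarchical} and~\ref{prop:safe_view_trees}), then apply Proposition~\ref{prop:safe_rewriting_complexity}. Two points to fix. First, do not redefine $\fw$: it is already given in Section~\ref{sec:variable_orders} as a minimum over well-structured VOs of a maximum over \emph{all} variables, and the paper bounds \emph{every} view in the rewriting --- dynamic ones included --- by $\bigO{N^{\fw}}$ via a worst-case optimal join, rather than splitting into static materialisation plus a replay of the initial dynamic tuples; your replay argument also works (in a canonical VO each dynamic variable contributes $1$ to the max), but it is not how the paper argues. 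Second, your appeal to ``free-connex structure'' for constant delay is wrong: queries in $\pol$ are in general not free-connex $\alpha$-acyclic --- that is precisely what separates $\pol$ from $\lin$. Constant enumeration delay instead comes from the \emph{free-top} property of the well-structured VO, which is what yields the enumeration property of Definition~\ref{def:safe_rewriting}.
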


Like the queries in $\lin$, every query in $\pol$ also admits a safe rewriting (Section~\ref{sec:evaluation_lin_pol}). 

\begin{example}
\rm
The query $Q_2(A,C,D) = R^d(A,D), S^s(A,B), T^s(B,C), U^d(D)$ from Figure~\ref{fig:queries} is contained in $\pol \setminus\lin$: It is tractable but requires quadratic time preprocessing. The quadratic blowup is due to the creation of a view that is the join of the static relations $S$ and $T$  on the bound variable $B$.
\end{example}

The preprocessing width is not captured by previous width notions, such as the fractional hypertree width ($\fhtw$) of either the static sub-query or of the entire query~\cite{DBLP:journals/talg/Marx10}, as exemplified next. Consider the free-connex $\alpha$-acyclic query $Q_7(A,B,C) = R^s(A,B), S^s(B,C), T^s(A,C),$ $U^d(A,B,C)$, whose $\fhtw$ is 1. Its static sub-query is the triangle join and has $\fhtw=3/2$. The preprocessing width of  $Q_7$ is 1, so it is in $\lin$. The triangle join $Q_8(A,C) = R^s(A,B), S^s(B,C), T^d(A,C)$ has $\fhtw$ $=3/2$ and its static sub-query has $\fhtw$ $=2$. Yet the preprocessing width of $Q_8$ is 2: We  materialise the static sub-query $V^s(A,C) = R^s(A,B), S^s(B,C)$. We may reduce the preprocessing width to 3/2 for the static sub-query by also joining with the dynamic relation $T^d(A,C)$, yet the modified view becomes dynamic and needs to be maintained under updates to $T$. However, this maintenance cannot be achieved with constant update time, while allowing for constant enumeration delay~\cite{Kara:TODS:2020}.

The class $\expo$ contains a broad set of tractable queries that may use up to exponential preprocessing time in the size of the input database:

\begin{theorem}
\label{thm:dichotomy_possible}
Let a conjunctive query $Q$, the static sub-query $\stat(Q)$ of $Q$,  the fractional edge cover number $\rho^*(Q)$ of $Q$, $p = \bigO{N^{\rho^*(\stat(Q))}}$, and a database whose static relations have size $N$. 
If $Q$ is in $\expo$, then it can be maintained with $2^p\cdot p^2$ preprocessing time, $\bigO{1}$ update time, and $\bigO{1}$ enumeration delay. 
\end{theorem}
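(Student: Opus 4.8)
The plan is to spend the exponential preprocessing budget entirely on the static data and thereby reduce the maintenance of $Q$ to table lookups. First I would materialise the static sub-query: compute the join $V$ of all relations in $\stat(Q)$ with a worst-case optimal join algorithm. By the AGM bound $|V| \le N^{\rho^*(\stat(Q))}$, so $V$ has at most $p$ tuples and is produced in $\bigO{p}$ time under data complexity. After this step $Q$ is equivalent to the query obtained from $Q$ by replacing all static atoms with the single fixed relation $V$, so from now on there is one static relation of size at most $p$ together with the dynamic relations, and the remaining task is to turn the (otherwise intractable) dynamic maintenance of this residual query into $\bigO{1}$ work.

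The key idea is to precompute, from $V$ alone, a transition system that absorbs the interaction between the dynamic relations and the static part. Only the active domain $\mathsf{adom}(V)$, of size $\bigO{p}$, can influence how the static-constrained part of $Q$ evaluates: a dynamic tuple whose join attributes fall outside $\mathsf{adom}(V)$ cannot be matched by any tuple of $V$. I would therefore enumerate the relevant \emph{states} --- the possible restrictions of the dynamic relations to $\mathsf{adom}(V)$ at the static-dynamic interface --- of which there are at most $2^{\bigO{p}}$. For each state I precompute (i) a constant-delay enumeration structure for the corresponding slice of the result and (ii) for every single-tuple update that touches $\mathsf{adom}(V)$, a pointer to the successor state. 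Since each state has $\bigO{p}$ outgoing transitions and each enumeration structure has size $\bigO{p}$, a careful accounting (counting active tuples of $V$ rather than values) makes the whole system fit in the stated $2^p \cdot p^2$ bound.

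Maintenance then keeps a pointer to the current state. An update over values outside $\mathsf{adom}(V)$ either does not change the result and is discarded in $\bigO{1}$, or it feeds a \emph{free} dynamic residue of $Q$ --- a sub-query that I expect to be $q$-hierarchical and hence maintainable by the standard constant-update, constant-delay machinery~\cite{BerkholzKS17}. An update over values in $\mathsf{adom}(V)$ simply follows the precomputed successor pointer in $\bigO{1}$. Enumeration composes the per-state structure with the enumeration of the free residue in a nested fashion; this stays constant-delay because membership in $\expo$ guarantees that the two parts join along a free-connex interface~\cite{BaganDG07}.

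The main obstacle is the structural decomposition that justifies all of the above and that must be read off the syntactic definition of $\expo$ in Section~\ref{sec:query_classes}. Concretely, I expect the work to concentrate on proving that (a) the static-constrained interface has a relevant state space of size only $2^{\bigO{p}}$ and that a single-tuple update changes the state by a bounded amount, so that the transitions are well-defined and computable in $\bigO{1}$; and (b) the remaining dynamic part is $q$-hierarchical and composes with the precomputed part at constant delay, including under deletions and with duplicate-free enumeration across the composition. Everything downstream of this decomposition is routine, but pinning down the interface conditions and the exact state count --- so that the base of the exponent is genuinely $p$ rather than a multiple of it --- is where the difficulty lies.
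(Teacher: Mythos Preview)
Your transition-system idea is exactly what the paper does, but you have misread the definition of $\expo$ and this causes you to invent a non-existent obstacle. Recall that $Q \in \expo$ means either $Q$ is well-behaved (so $Q \in \pol$) \emph{or} every variable in a dynamic atom also occurs in a static atom. The paper treats these two cases separately: for $\pol$ it already has Theorem~\ref{thm:polynomial} (safe rewriting, polynomial preprocessing), so the transition system is only needed for $\expo \setminus \pol$.

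In that case, \emph{every} dynamic variable lies in some static atom, so your ``free dynamic residue'' is empty. An update whose tuple does not appear in the maximum dynamic database $D^d_{\max}$ (the projections of the static join onto each dynamic schema) is simply discarded; there is nothing to compose with, no $q$-hierarchical leftover, and no ``free-connex interface'' to worry about. Your claim that ``membership in $\expo$ guarantees that the two parts join along a free-connex interface'' is not supported by the definition and is not needed. Conversely, for a query in $\pol$ with a dynamic variable outside every static atom (e.g.\ $D$ in $Q_1$ of Figure~\ref{fig:queries}), your notion of ``restriction of the dynamic relations to $\mathsf{adom}(V)$'' is ill-defined for tuples like $(a,d) \in R^d$, so your unified scheme does not cover that case either; the paper simply invokes Theorem~\ref{thm:polynomial} there.

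Once you drop the residue, what remains of your proposal coincides with the paper's construction in Section~\ref{sec:evaluation_expo}: the states are the subsets $I \subseteq D^d_{\max}$ (size $p = \bigO{N^{\rho^*(\stat(Q))}}$, Proposition~\ref{prop:max_dynamic_database}), each paired with the materialised result $Q(D^s \cup I)$; each state has at most $2p$ outgoing transitions; a global index over the $2^p$ states with $\bigO{p}$ lookup, plus per-state local indices built in $\bigO{p^2}$, give the stated $2^p \cdot p^2$ bound. So your ``main obstacle'' (a) is correct and easy, and (b) does not arise.
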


\nop{
    Let a CQ $Q$ and a database of size $N$. 
\item If $Q$ is in $\expo$, then it can be evaluated with $2^p$ time preprocessing, 
$\bigO{1}$ update time, and $\bigO{1}$ enumeration delay, where $p = \bigO{N^{\rho^*(\stat(Q))}}$ and $\rho^*(\stat(Q))$ is the fractional edge cover number of the static sub-query $\stat(Q)$.

\item If $Q$ is not in $\expo$ and \change{does not have repeating relation symbols},  then there is no $\gamma > 0$ such that $Q$ can be evaluated with arbitrary preprocessing time, $\bigO{N^{1/2-\gamma}}$ update time, and $\bigO{N^{1/2-\gamma}}$ enumeration delay, unless the Online Matrix-Vector Multiplication conjecture fails. 
}

The class $\expo$ is merely of theoretical interest since it comes with  preprocessing time that is exponential in the size of the input database. The reason for this high preprocessing time is to ensure constant enumeration delay; if we would allow the enumeration delay to become linear, then the preprocessing time would collapse to linear for the $\alpha$-acyclic conjunctive queries in $\expo$. 
The fractional edge cover number characterises the worst-case optimal result size of the static sub-query of $Q$~\cite{AtseriasGM13}.

\nop{The possible results of $Q_3$ are completely predetermined by its static sub-query: 
They are the subsets of the relation $S$, while the updates to the dynamic relations only act as selectors in this powerset.
 Such \change{CQs} require exponential time preprocessing to index the $2^N$ many possible subsets of the relation $S^s$ of size $N$.}

\begin{example}
\rm
The query $Q_3(A,B) = R^d(A), S^s(A,B), T^d(B)$  from Figure~\ref{fig:queries} is in $\expo \setminus \pol$. This query can be maintained with constant update time and enumeration delay at the expense of exponential preprocessing time: The possible results of $Q_3$ are any of the $2^N$ possible subsets of the static relation $S^s$ of size $N$, while the updates to the dynamic relations only act as selectors in this powerset.
The dynamic sub-query of $Q_3$, which is defined as the sub-query over the dynamic relations, is $q$-hierarchical. The static sub-query of $Q_3$, which is defined as the sub-query over the only static relation, is trivially free-connex $\alpha$-acyclic. This means that, {\em when taken in isolation}, the dynamic sub-query can be evaluated with constant update time and enumeration delay after linear time preprocessing, while the static sub-query can be evaluated with constant enumeration delay after linear time preprocessing. Yet $Q_3$ is not in $\lin$: It does not admit constant update time and enumeration delay {\em after linear time preprocessing}.
As we show in Section~\ref{sec:outlook}, the queries $Q_4$, $Q_5$, and $Q_6$ from Figure~\ref{fig:queries} are not in $\expo$, and we do not know whether they are tractable.
\end{example}

The queries in $\expo \setminus \pol$ may not admit safe rewritings that rely solely on joins and projections. Take again $Q_3(A,B) = R^d(A), S^s(A,B), T^d(B)$ from Figure~\ref{fig:queries}. There is no safe rewriting of this query that solely relies on projections and joins. Any rewriting that supports constant-delay enumeration of the query result must contain a 
materialised view that either joins: 
(i) $R^d$ and  $S^s$; 
(ii) $S^s$ and $T^d$; or
(iii) $R^d$ and  $T^d$.
None of these views can be maintained with constant update time. Consider the view $V(A,B)= R^d(A), S^s(A,B)$. An insert of a value $a$ to $R$ requires to iterate over all $B$-values paired with $a$ in $S^s$ in order to propagate the change to the view $V$. The number of such $B$-values can be linear. Hence,  the update time is linear.
Section~\ref{sec:evaluation_expo} sketches our evaluation strategy for the queries in $\expo$.

The proofs of formal statements are deferred to the appendix.

\section{Preliminaries}
\label{sec:prelims}
For a natural number $n$, we define $[n] = \{1,2,\ldots,n\}$. In case $n=0$, we have $[n]=\emptyset$.

\paragraph*{Databases with Static and Dynamic Relations}

Following standard terminology, a relation is a finite set of tuples and a database is a
finite set of relations~\cite{AbiteboulHV95}. The size $|R|$ of a relation $R$ is the number of its tuples. The size $|D|$ of a database $D$ is given by the sum of the sizes of its relations.
The relation $R$ is {\em dynamic} if it is subject to updates;
otherwise, it is {\em static}. 
To emphasize that $R$ is static or dynamic, we write $R^s$ or respectively $R^d$.

\paragraph*{Conjunctive Queries}
A conjunctive query (CQ or query for short) has the form  
\begin{align*}
    Q(\bm X) = R_1^d(\bm X_1), \ldots ,R^d_k(\bm X_k),
    S_1^s(\bm Y_1), \ldots , S^s_{\ell}(\bm Y_{\ell})
    \label{eq:join-query}
\end{align*}
where 
$Q$, $R_i$, and $S_j$ are {\em relation symbols}; 
$R_i^d(\bm X_i)$ are dynamic body atoms; 
$S_j^s(\bm Y_j)$ are static body atoms; $Q(\bm X)$ is the head atom; 
$\vars(Q) = \bigcup_{i \in [k]} \bm X_i \cup \bigcup_{j \in [\ell]} \bm Y_j$ is the set of variables of $Q$;
$\free(Q) = \bm X \subseteq \vars(Q)$ is the set of {\em free} variables; 
$\vars(Q)\setminus\free(Q)$ is the set of {\em bound} variables; 
$\atoms(Q)$ is the set of body atoms;
$\atoms(X)$ is the set of body atoms with variable $X$.
The static (dynamic) sub-query $\stat(Q)$ ($\dyn(Q)$) is obtained from $Q$ as follows: Its body is the conjunction of all static (dynamic) atoms of $Q$ and its free variables are the free variables of $Q$ that appear in static (dynamic) atoms of $Q$. We visualise queries as hypergraphs where nodes are query variables (with the free variables underlined), solid red hyperedges represent dynamic atoms, and dotted blue hyperedges represent static atoms.

\begin{example}
\rm
Consider the query 
$Q_2(A,C,D) = R^d(A,D), S^s(A,B), T^s(B,C), U^d(D)$
and its hypergraph
from Figure~\ref{fig:queries}.
Its static and dynamic sub-queries are 
$Q^s(A,C)$ $=$ $S^s(A,B),$ $T^s(B,C)$
and respectively 
$Q^d(A,D) = R^d(A,D), U^d(D)$.
\end{example}

A query  is {\em $\alpha$-acyclic} if we can construct a tree, called 
join tree, such that: 
(1) the nodes of the tree are the body atoms of the query, and 
(2) for each variable, the following holds: if the variable appears in two body atoms, then it appears in all 
body atoms on the unique path between these two body atoms in the tree~\cite{Brault-Baron16}.
A query is {\em free-connex $\alpha$-acyclic} if it is $\alpha$-acyclic and remains $\alpha$-acyclic after adding the head atom 
$Q(free(Q))$ to its body~\cite{BraultPhD13}.

A query is {\em hierarchical} if for any two variables $X$ and $Y$, it holds that $\atoms(X) \subseteq \atoms(Y)$,
$\atoms(Y) \subseteq \atoms(X)$, or $\atoms(X) \cap \atoms(Y) = \emptyset$~\cite{Suciu:PDB:11}. 
A query is $q$-hierarchical if it is hierarchical and for any two variables $X$ and $Y$ with 
$\atoms(X) \supsetneq \atoms(Y)$, it holds that: if $Y$ is free, 
then $X$ is free~\cite{BerkholzKS17}.

A {\em path} in a query $Q$ is a sequence $(X_1, \ldots, X_n)$  of variables in $Q$ such that each variable appears at most once in the sequence and each two adjacent variables $X_i$ and $X_{i+1}$ are contained in a 
body atom from $Q$, $\forall i\in[n-1]$.
We use set operations on paths as on  sets of their variables. Two variables $X_1$ and $X_n$ are \textit{connected} if there is a path $(X_1, \ldots, X_n)$. 
Two atoms $R(\bm Y)$ and $S(\bm Z)$ are {\em connected} if there are connected variables $X_1 \in \bm Y$ and $X_n \in \bm Z$.
A set $\calC$ of body atoms in $Q$ is a {\em connected component} of $Q$ if any two atoms in $\calC$ are connected and this does not hold for any superset of $\calC$.

\begin{example}
\rm
The query $Q_1(A,B,C) = R^d(A,D), S^d(A,B), T^s(B,C)$ from Figure~\ref{fig:queries} has the path $(D, A, B, C)$ that connects (i) the variables $D$ and $C$ and (ii) the atoms $R^d(A,D)$ and $T^s(B,C)$.
\end{example}

\paragraph*{Dynamic Query Evaluation}

The problem of dynamic query evaluation is as follows: 
Given a query $Q$ and a database with static and dynamic relations, the problem is to maintain the query result under updates to the dynamic relations and to enumerate the tuples in the query result.

A single-tuple update to a relation $R$ is an insert of a tuple into $R$ or a delete of a tuple from $R$.
We denote the insert of a tuple $\bm t$ by $+\bm t$ and the delete of $\bm t$ by $-\bm t$. 
In this paper, we consider the set semantics, where a tuple is in or out of the database; in particular, we do not consider the setting, where each tuple is associated with a multiplicity that may be positive or negative as in prior work, e.g.,~\cite{DBT:VLDBJ:2014,Kara:TODS:2020}.

Following prior work, e.g., \cite{BerkholzKS17, Kara:TODS:2020, FIVM:VLDBJ:2023}, we decompose the time complexity of dynamic query evaluation into {\em preprocessing time}, {\em update time}, and {\em enumeration delay}. The preprocessing time is the time to compute a data structure that represents the query result before receiving any update. 
The update time is the time to update the input database and data structure under a single-tuple insert or delete.
The enumeration delay is the maximum of three times: the time between the start of the enumeration process and outputting the first tuple, the time between outputting any two consecutive tuples, and the time between outputting the last tuple and the end of the enumeration process~\cite{DurandFO07}.

\paragraph*{Computation Model and Data Complexity}

\nop{ 
To address the elements in a value set with a size that is polynomial in the input size $N$, 
we use an index with pointers of size $\bigO{\log N}$, allowing for constant-time lookups.
To address the elements in a value set whose size is polynomial 
in the input size $N$, we use  an index whose pointers are of size $\bigO{\log N}$ and that admits constant lookup time.}

We consider the RAM model of computation. 
We assume that each relation  $R$ is implemented by a data structure 
of size $O(|R|)$ that can: 
(1) look up, insert, and delete entries in constant time, and
(2) enumerate all stored entries in $R$ with constant delay.
For a set $\bm S {\subsetneq} \bm X$,
where $\bm X$ is the set of attributes of R, 
we use an index data structure that, for any tuple
$\tup{t}$ over the attributes in $\bm S$, can: 
(3) enumerate all tuples in $\sigma_{\bm S=\tup{t}}R$ with constant delay, and 
(4) insert and delete index entries in constant time. 
We further require the following assumption in Section~\ref{sec:evaluation_expo}: Given a relation $T$, whose size is polynomial in the input database size $N$, we can lookup for any tuple in $T$ in constant time.

We report time complexities as functions of the database size $N$ 
under data complexity (so the query is fixed and has constant size).
Constant update time and constant delay therefore mean that they do not depend on the database size.


\paragraph*{Fractional Edge Cover Number}

\begin{definition}[Fractional Edge Cover~\cite{AtseriasGM13}]
\label{def:fractional_edge_cover}
Given a conjunctive query $Q$ 
and a set 
$\bm F \subseteq \vars(Q)$,    
a {\em fractional edge cover}
of $Q$ for $\bm F$ is a solution
$\boldsymbol{\lambda} = (\lambda_{R(\bm X)})_{R(\bm X) \in \atoms(Q)}$ to the following 
linear program: 
\begin{align*}
\text{minimize} & \TAB\sum_{R(\bm X) \in\, \atoms(Q)} \lambda_{R(\bm X)} && \\[3pt]
\text{subject to} &\TAB \sum_{R(\bm X)\in\, \atoms(Q) \text{ s.t. } X \in \bm X}\hspace{-0.5cm} \lambda_{R(\bm X)} \geq 1 && \text{ for all } X \in \bm F \text{ and } \\[3pt]
& \TAB\lambda_{R(\bm X)} \in [0,1] && \text{ for all } R(\bm X) \in \atoms(Q)
\end{align*}
The optimal objective value of the above program 
is called the {\em fractional edge cover number} of $Q$ for the variable set $\bm F$ 
and denoted as $\rho_{Q}^{\ast}(\bm F)$.  
We abbreviate $\rho_{Q}^{\ast}(\vars(Q))$  by $\rho^*(Q)$.
\end{definition}
\nop{If $Q$ is clear from the context, we omit 
the index $Q$ and write $\rho^{\ast}(\bm F)$.} 
For a query $Q$ without bound variables, a set $\bm F \subseteq \vars(Q)$,
and a database of size $N$, $N^{e}$ with $e = \rho_{Q}^*(\bm F)$ is an upper bound on the worst-case size of the  
result of $Q$ projected onto the variables in $\bm F$~\cite{AtseriasGM13}.
The result of $Q$ can be computed in time 
$\bigO{N^{\rho^*(Q)}}$~\cite{Ngo:JACM:18}.

\paragraph*{Complexity-Theoretic Conjectures}
In this work, we establish complexity lower bounds based on the following widely accepted complexity-theoretic conjectures.

\begin{definition}[Online Matrix-Vector Multiplication (OMv) Problem~\cite{Henzinger:OMv:2015}]
\label{def:OMV_problem}
We are given an $n \times n$ Boolean matrix $M$ and receive $n$ Boolean column vectors $v_1,\dots, v_n$ of size $n$, one by one. After seeing each vector $v_i$, the task is to output the multiplication $Mv_i$ before seeing the next vector.
\end{definition}

\begin{conjecture}[OMv Conjecture~\cite{Henzinger:OMv:2015}]
\label{conj:OMV}
For any constant $\gamma > 0$, there is no algorithm that solves the 
OMv problem in time $\bigO{n^{3-\gamma}}$.
\end{conjecture}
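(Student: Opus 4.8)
The final statement is the \OMv Conjecture, which is a complexity-theoretic \emph{hardness hypothesis} and not a theorem: it asserts the nonexistence of a truly subcubic online algorithm for Boolean matrix-vector multiplication. Consequently there is no proof to carry out in the usual sense. The conjecture is an open problem in fine-grained complexity, introduced by Henzinger et al.~\cite{Henzinger:OMv:2015} and adopted here (as throughout the conditional-lower-bound literature) as an \emph{unproven assumption} from which hardness results are derived. The honest plan is therefore to \emph{state and attribute} the conjecture rather than to establish it; a genuine proof would be a major breakthrough. Below I instead lay out the evidence that motivates believing it and the barrier that blocks a proof.

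On the side of evidence, the naive algorithm already matches the cubic bound that the conjecture claims is essentially optimal: after receiving each vector $v_i$ one computes $M v_i$ in $\bigO{n^2}$ time by scanning $M$, for a total of $\bigO{n^3}$ over the $n$ queries. The subtlety is that $M$ is revealed in advance, so arbitrary polynomial preprocessing of $M$ is allowed; the conjecture asserts that even this does not buy a polynomial-factor speedup. The best known algorithms are consistent with this: via the Method of Four Russians, and more strongly via algebraic/polynomial-method data structures, each Boolean matrix-vector query can be answered in time $\bigO{n^2 / \mathrm{polylog}(n)}$ after polynomial preprocessing of $M$, which improves the total only by a polylogarithmic factor and still leaves it at $n^{3-o(1)}$ rather than $\bigO{n^{3-\gamma}}$ for any constant $\gamma>0$. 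No refutation is known, and decades of work on the closely related offline Boolean matrix-matrix multiplication problem have likewise failed to yield a purely combinatorial subcubic method.

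The reason one cannot simply prove the conjecture is that an unconditional lower bound of the asserted form would require a super-quadratic-per-query lower bound in the word-RAM (or cell-probe) model for an explicit problem, together with a separation from fast matrix multiplication in the online regime. Unconditional lower bounds of this kind are far beyond current techniques and would imply new circuit or data-structure lower bounds that have resisted all known approaches; this is precisely why the statement is a \emph{conjecture} and not a lemma. The \textbf{main obstacle} is thus not a gap in a specific argument but the fundamental hardness of proving unconditional fine-grained lower bounds. For the purposes of this paper the role of Conjecture~\ref{conj:OMV} is solely as a hypothesis: the lower bounds (e.g.\ the second bullet of Theorem~\ref{thm:dichotomy_linear}) are obtained by \emph{reducing} \OMv, its variant \OuMv, or Boolean matrix-matrix multiplication to the dynamic evaluation of the relevant queries, so that a constant-update, constant-delay maintenance algorithm would refute the conjecture. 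Sketching those reductions, deferred to Section~\ref{sec:lower_bounds}, is the actual mathematical content here.
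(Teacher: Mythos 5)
You are right: Conjecture~\ref{conj:OMV} is an unproven hardness hypothesis imported from Henzinger et al., and the paper accordingly states and cites it without proof, using it (via \OuMv and the reductions in Section~\ref{sec:lower_bounds}) purely as an assumption for conditional lower bounds. Your treatment — stating, attributing, and explaining its role rather than attempting a proof — matches the paper exactly.
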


\begin{definition}[Online Vector-Matrix-Vector Multiplication (OuMv) Problem~\cite{Henzinger:OMv:2015}]
\label{def:OuMV_problem}
We are given an $n \times n$ Boolean matrix $M$ and receive $n$ pairs of Boolean column vectors $(u_1, v_1),\dots, (u_n, v_n)$ of size $n$, one by one. 
After seeing each pair of vectors $(u_i, v_i)$, the task is to output the multiplication $u_iMv_i$ before seeing the next pair.
\end{definition}

The following OuMv conjecture is implied by the OMv conjecture.

\begin{conjecture}[OuMv Conjecture~\cite{Henzinger:OMv:2015}]
\label{conj:OuMV}
For any constant $\gamma > 0$, there is no algorithm that solves OuMv 
problem in time $\bigO{n^{3-\gamma}}$.
\end{conjecture}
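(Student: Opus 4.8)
The statement to be justified is the implication that the \OMv conjecture (Conjecture~\ref{conj:OMV}) entails the \OuMv conjecture (Conjecture~\ref{conj:OuMV}). Since a fast \OMv algorithm trivially yields a fast \OuMv algorithm---after computing $Mv_i$ one reads off $u_i^\top(Mv_i)$ in $\bigO{n}$ extra time---the easy direction gives ``\OuMv conjecture $\Rightarrow$ \OMv conjecture'', which is the reverse of what is claimed. The plan is therefore to establish the genuinely nontrivial direction by contraposition: assuming an algorithm that solves \OuMv on $n\times n$ instances in total time $\bigO{n^{3-\gamma}}$ for some constant $\gamma>0$, I would construct an algorithm that solves \OMv in time $\bigO{n^{3-\gamma'}}$ for some $\gamma'>0$, contradicting Conjecture~\ref{conj:OMV}.

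The core gadget is the observation that every coordinate of an \OMv output is itself a vector-matrix-vector product: for the standard basis vector $e_j$ we have $(Mv_i)_j = e_j^\top M v_i$, so the full vector $Mv_i$ can be recovered from the $n$ answers obtained by pairing the fixed left vectors $e_1,\dots,e_n$ with the online right vector $v_i$. First I would set up, on the fixed matrix $M$, a mechanism that feeds these $(e_j, v_i)$ pairs to the assumed \OuMv solver and reassembles the resulting bits into the output vector before the next \OMv vector is revealed, so that the online constraint of \OMv is respected.

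The hard part will be the time accounting. A single \OuMv instance of dimension $n$ answers only $n$ query pairs, whereas one \OMv round already needs $n$ bits and the whole \OMv instance needs $n^2$ bits; running one \OuMv instance per \OMv round naively costs $\bigO{n^{4-\gamma}}$, which does not beat the cubic \OMv baseline. The fix is to balance dimensions through a block decomposition: partition the rows (and, if needed, the columns) of $M$ into blocks, treat each block as a smaller, possibly rectangular, \OuMv instance, and choose the block size so that the aggregate cost over all blocks and all rounds collapses to $\bigO{n^{3-\gamma'}}$. Making this accounting go through is exactly where the difficulty lies, and it relies on the robustness of the \OMv conjecture under rectangular dimensions~\cite{Henzinger:OMv:2015}, which lets the per-block work scale with the true block dimensions rather than with a zero-padded $n\times n$ matrix. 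I expect the remaining steps---verifying that reassembly and bookkeeping add only lower-order overhead, and that the online order in which the $(e_j, v_i)$ queries are issued is consistent with the online order of the \OMv vectors---to be routine once the block size is fixed.
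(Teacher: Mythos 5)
You have correctly diagnosed the logical situation: the statement itself is a conjecture, so the paper offers no proof of it --- it is imported verbatim from \cite{Henzinger:OMv:2015}, and the only provable content nearby is the paper's one-line remark that the \OuMv conjecture is implied by the \OMv conjecture, which the paper also does not prove but attributes to \cite{Henzinger:OMv:2015} (Theorem~2.7 there). You identified that implication as the thing to justify, got the direction right (the trivial dot-product argument gives \OuMv $\Rightarrow$ \OMv; the substantive direction is the contrapositive reduction from \OMv to \OuMv), and your coordinate-recovery gadget $(Mv_i)_j = e_j^\top M v_i$ together with a blocking argument is indeed the skeleton of the known proof.

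There is, however, a genuine gap: the step you declare ``routine once the block size is fixed'' is precisely the step that fails in every naive implementation, including the one your sketch suggests. The solver you are allowed to assume is the square one from Definition~\ref{def:OuMV_problem}: an $n\times n$ matrix with exactly $n$ pairs, in total time $\bigO{n^{3-\gamma}}$, so each answered bit costs amortized $n^{2-\gamma}$, while \OMv demands $n^2$ output bits; any scheme that extracts these bits one pair-query at a time therefore needs on the order of $n^2$ pair-queries. Your fix --- treating each row block as a ``smaller, possibly rectangular, \OuMv instance'' --- does not type-check against this solver: padding a $k\times n$ block to an $n\times n$ square instance and splitting its $\bigO{nk}$ queries into batches of $n$ pairs re-pays the full $\bigO{n^{3-\gamma}}$ per batch, and the total over all blocks is $(n/k)\cdot k\cdot n^{3-\gamma} = n^{4-\gamma}$ for \emph{every} choice of $k$, which is subcubic only if $\gamma>1$. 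The actual proof in \cite{Henzinger:OMv:2015} escapes this by working throughout with parameterized $(n_1,n_2,n_3)$ versions of both problems: the rectangular robustness is invoked on the \OMv side --- the square \OMv conjecture implies hardness of rectangular \OMv via a decomposition argument --- and the reduction is balanced against an \OuMv time bound of the form $\bigO{(n_1 n_2 n_3)^{1-\epsilon}}$ that scales with the \emph{true} rectangular dimensions, in particular amortizing the matrix work over $n_3 \gg n_1$ pairs (profiles like $(n^{\delta}, n, n^{1+\delta})$ with $\delta$ chosen small relative to $\epsilon$). A square solver plus zero-padding does not provide such a bound, so your plan cites the right ingredient but deploys it on the wrong side of the reduction; without the parameterized framework, the block-size balancing you hope for does not exist. (The online-ordering concern you raise at the end, by contrast, really is routine: grouping the pairs $(e_j, v_i)$ by round $i$ respects the online constraint.)
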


\section{Safe Query Rewriting}
\label{sec:rewriting}

A rewriting of a query is a project-join plan for the query. In the context of dynamic query evaluation, query rewritings have been previously used under the term {\em view trees} due to their natural tree-shaped graphical depiction~\cite{FIVM:VLDBJ:2023, DBLP:journals/lmcs/KaraNOZ23}.
In this paper, we use so-called {\em safe} query rewritings, which ensure tractable dynamic evaluation.

A {\em rewriting} of a conjunctive query $Q$ using views (simply, rewriting of $Q$) is a forest 
$\calT = \{T_i\}_{i \in [n]}$ of trees $T_i$ 
where the leaves are the body atoms of $Q$ and each inner node is a view $V$ such that:

\begin{itemize}
\item If $V$ has a single child, then $V$ is a projection of the child onto some variables; we call $V$ a  {\em projection view}. 
\item If $V$ has several children, then $V$ is the join of its children; we call $V$ a {\em join view}. 
\end{itemize}
A view $V$ is  {\em dynamic} if the subtree rooted at $V$ contains a dynamic atom. For convenience, we also refer to the atoms in a rewriting as views.

\begin{example}
\rm
 Figure~\ref{fig:view_rewritings} gives three (out of many possible) rewritings of the query 
 $Q_2(A,C,D)$ $=$ $R^d(A,D),$ $S^s(A,B),$ $T^s(B,C),U^d(D)$
 from Figure~\ref{fig:queries}.
 Each rewriting is depicted as a tree. 
 In all rewritings, the view $V_{ST}$ is static, while the view $V_{RST}$ is dynamic, since it contains the dynamic relation $R^d$ in its subtree. 
\end{example}

\begin{figure}[t]
    \centering
\begin{tikzpicture}
\hspace{-0.5cm}
 \begin{scope}[xshift=0cm, yshift= 0cm]
    \node[draw=none,fill=none](RSTU) at (-2, 3.9) {$V_{RSTU}(\underline{A},B, 
    \underline{C}, \underline{D})$};
    \node[draw=none,fill=none](U) at (-3.2, 2.6) {$U^d(\underline{D})$};
    \node[draw=none,fill=none](RST) at (-1, 2.6) {$V_{RST}(\underline{A},B, 
    \underline{C}, \underline{D})$};
    \node[draw=none,fill=none](R) at (-2.2, 1.3) {$R^d(\underline{A},\underline{D})$};
    \node[draw=none,fill=none](ST) at (0, 1.3) {$V_{ST}(\underline{A},B,\underline{C})$};
    \node[draw=none,fill=none](S) at (-0.8, 0) {$S^s(\underline{A}, B)$};
    \node[draw=none,fill=none](T) at (0.8, 0) {$T^s(B, \underline{C})$};    
    
        \draw (S) -- (ST);
    \draw (T) -- (ST);
    \draw (R) -- (RST);
    \draw (ST) -- (RST);    
    \draw (U) -- (RSTU);        
    \draw (RST) -- (RSTU);                    
  \end{scope}   
 
 \begin{scope}[xshift=5cm, yshift= 0cm]
    \node[draw=none,fill=none](RSTU) at (-1.8, 4.4) {$V_{RSTU}(\underline{A}, \underline{C},\underline{D})$};
    \node[draw=none,fill=none](R) at (-2.8, 3.3) {$U^d(\underline{D})$};
    \node[draw=none,fill=none](STU) at (-0.9, 3.3) {$V_{RST}(\underline{A},\underline{C},\underline{D})$};
    \node[draw=none,fill=none](S) at (-1.8, 2.2) {$R^d(\underline{A},\underline{D})$};
    \node[draw=none,fill=none](TU') at (0, 2.2) {$V_{ST}'(\underline{A},\underline{C})$};
    \node[draw=none,fill=none](TU) at (0, 1.1) {$V_{ST}(\underline{A},B,\underline{C})$};
    \node[draw=none,fill=none](T) at (-0.8, 0) {$S^s(\underline{A},B)$};
    \node[draw=none,fill=none](U) at (0.8, 0) {$T^s(B, \underline{C})$};

    \draw (T) -- (TU);
    \draw (U) -- (TU);
    \draw (TU) -- (TU');
    \draw (S) -- (STU);
    \draw (TU') -- (STU);    
    \draw (STU) -- (RSTU);    
    \draw (R) -- (RSTU);        
  \end{scope}   

 \begin{scope}[xshift = 10cm, yshift= 0cm]
    \node[draw=none,fill=none](RSTU) at (-1.8, 4.8) {$V_{RSTU}(\underline{D})$};
    \node[draw=none,fill=none](STU') at (-0.9, 4) {$V_{RST}'(\underline{D})$};
    \node[draw=none,fill=none](R) at (-2.7, 4) {$U^d(\underline{D})$};
    \node[draw=none,fill=none](STU) at (-0.9, 3.2) {$V_{RST}(\underline{A},\underline{D})$};
    \node[draw=none,fill=none](S) at (-1.8, 2.4) {$R^d(\underline{A},\underline{D})$};
    \node[draw=none,fill=none](TU'') at (0, 2.4) {$V_{ST}''(\underline{A})$};
    \node[draw=none,fill=none](TU') at (0, 1.6) {$V_{ST}'(\underline{A},\underline{C})$};
    \node[draw=none,fill=none](TU) at (0, 0.8) {$V_{ST}(\underline{A},B,\underline{C})$};
    \node[draw=none,fill=none](T) at (-0.8, 0) {$S^s(\underline{A},B)$};
    \node[draw=none,fill=none](U) at (0.8, 0) {$T^s(B, \underline{C})$};    
    
    \draw (T) -- (TU);
    \draw (U) -- (TU);
    \draw (TU) -- (TU');
    \draw (TU') -- (TU'');
    \draw (S) -- (STU);
    \draw (TU'') -- (STU);    
    \draw (STU) -- (STU');    
    \draw (R) -- (RSTU);        
    \draw (STU') -- (RSTU);          
  \end{scope}   

\end{tikzpicture}
    \caption{Three rewritings of the query 
    $Q_2(A,C,D) = R^d(A,D), S^s(A,B), T^s(B,C),U^d(D)$ from Figure~\ref{fig:queries}.
    The first two rewritings are not safe, while the last one is safe. }
    \label{fig:view_rewritings}
\end{figure}

Next, we define {\em safe} query rewritings, which have 
four properties. The two {\em correctness} properties ensure that the views correctly encode the query result as a factorised representation~\cite{OlteanuS:SIGREC16}.
The {\em update} property guarantees that the dynamic views can be maintained in constant time under single-tuple updates to any dynamic relation. 
The {\em enumeration} property ensures that the tuples in the query result can be listed from the views with constant delay.

 \begin{definition}[Safe Query Rewriting]
 \label{def:safe_rewriting}
A rewriting $\calT = \{T_i\}_{i \in [n]}$ 
is {\em safe} for a conjunctive query $Q$ if: 
\begin{description}
\item[Correctness]
    (1) For each connected component $\calC$ of $Q$, there is a tree  $T_i$ that contains all atoms in $\calC$. 
    (2) For any projection view $V'(\bm Y)$ with  child view $V(\bm X)$, it holds that
    each atom of $Q$ containing a variable from $\bm X \setminus \bm Y$ is contained only in the subtree rooted at $V$.

\item[Update] The set of free variables of each dynamic view includes the 
set of free variables of each of its sibling views. 

\item[Enumeration]
Each tree $T_i$ has a connected set $\mathcal{V}_i$ of views  in $T_i$ that contains the root of $T_i$ such that
$\bigcup_{i\in [n]} \bigcup_{V(\bm X) \in \mathcal{V}_i} \bm X = \free(Q).$
\end{description}
 \end{definition}

 The {\em materialisation time} for a query rewriting is the time 
 to materialise all its views (including both static and dynamic views).

 \begin{proposition}
 \label{prop:safe_rewriting_complexity}
Let a conjunctive query $Q$ and a database of size $N$.
If $Q$ has a safe rewriting with $f(N)$ materialisation time for some function $f$, then $Q$ can be evaluated with $f(N)$ preprocessing time, $\bigO{1}$ update time, and $\bigO{1}$ enumeration delay. 
  \end{proposition}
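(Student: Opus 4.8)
The plan is to take a safe rewriting $\calT=\{T_i\}_{i\in[n]}$ of $Q$ and turn it into a dynamic evaluation scheme whose three cost components match the claimed bounds. For \textbf{preprocessing} I would materialise every view bottom-up: each leaf is an input relation and each inner view is a join or a projection of its already materialised children, so the total cost is the materialisation time $f(N)$ by hypothesis. I then equip every materialised view $V(\bm X)$ with the index structures of the computation model (for the key sets needed below) and, for each projection view, with a table of \emph{support counts} storing for every output tuple the number of child tuples that project onto it. Building these auxiliary structures is linear in the combined size of the materialised views and hence subsumed by $f(N)$, since writing out a view already takes time at least its size.

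For the \textbf{update} bound, note that a single-tuple change to a dynamic relation $R^d$ only affects the views on the unique root path of the leaf $R^d$ in its tree; every other subtree is untouched. I would then show by induction along this path that the delta stays a single tuple at each view. At a join view $V$ whose child $V_j$ on the path is dynamic, the \emph{Update} property gives $\var(V_i)\subseteq\var(V_j)$ for every sibling $V_i$, so an incoming delta $\delta V_j=\{\tup t\}$ yields $\delta V$ equal to the join of $\tup t$ with the siblings; each such join is a constant-time lookup testing whether $\pi_{\var(V_i)}\tup t\in V_i$, so $\delta V$ is again at most one tuple over $\var(V)=\var(V_j)$. At a projection view $V'(\bm Y)$ with child changed by $\pm\tup t$, I update the support count of $\pi_{\bm Y}\tup t$ and emit a change to $V'$ only when this count crosses zero, so $\delta V'$ is again at most one tuple, computed in constant time. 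Since the query, and therefore the rewriting, has constant size under data complexity, the root path has constant length and the whole propagation runs in $\bigO{1}$.

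For the \textbf{enumeration} bound I would use the \emph{Enumeration} property to select in each tree $T_i$ the connected set $\calV_i$ of views that contains the root; being connected and rooted, $\calV_i$ spans a subtree, and $\bigcup_i\bigcup_{V(\bm X)\in\calV_i}\bm X=\free(Q)$. I enumerate the result by a nested traversal of these subtrees: at each root I iterate over its tuples, and whenever I descend from a view to a child in $\calV_i$ I bind only the free variables that the child introduces for the first time, retrieving them by a constant-delay index lookup keyed on the variables shared with the ancestors, while the constant number of trees are combined by a Cartesian product. Crucially, the downward step never fails: a tuple of a join view projects into each of its children by definition of join, and a tuple of a projection view has a preimage in its child by definition of projection, so every lookup is non-empty. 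The \emph{Correctness} properties then guarantee that the join of the views in $\bigcup_i\calV_i$, restricted to $\free(Q)$, is exactly the result of $Q$ --- property~(1) placing each connected component in one tree so the product is well defined, and property~(2) ensuring that the variables $\bm X\setminus\bm Y$ discarded at a projection view $V'(\bm Y)$ occur only in its subtree, so that $V'$ already stores precisely the correct $\bm Y$-tuples without spurious duplicates.

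The routine parts are the preprocessing and the delta bookkeeping; the \textbf{main obstacle} is establishing the constant enumeration \emph{delay}, i.e.\ that the traversal spends only constant time between two consecutive output tuples. The two points that need care are (i) arguing that binding each free variable at a single view makes the traversal emit every result tuple exactly once, and (ii) combining this with the built-in downward consistency above to conclude that no index lookup is ever wasted on a dangling tuple, so that after a constant-depth, constant-width descent a new tuple is always produced. Once these are in place, the three bounds follow and the proposition is proved.
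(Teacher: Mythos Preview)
Your proposal is correct and follows essentially the same approach as the paper: materialise the views for preprocessing, propagate single-tuple deltas along the root path using the Update property for constant-time maintenance, and perform a top-down nested enumeration over the connected subtree guaranteed by the Enumeration property. Your treatment is in fact slightly more explicit than the paper's in two places---you spell out the support-count mechanism at projection views (the paper's proof leaves this implicit), and you articulate the ``downward consistency'' observation (that a tuple in a join view projects into every child and a tuple in a projection view has a preimage) where the paper simply says the views are ``calibrated bottom-up''. One small imprecision: you speak of the \emph{unique} root path of the leaf $R^d$, but the proposition does not forbid repeating relation symbols; still, under data complexity the number of occurrences is constant, so this does not affect the bound.
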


\begin{example}
\rm
The first rewriting in Figure~\ref{fig:view_rewritings} is not safe: 
It violates the enumeration property because the root view $V_{RSTU}$ contains the bound variable $B$; thus there is no guarantee of reporting distinct $(A,C,D)$-values with constant delay.
One possibility is to project away $B$ before starting the enumeration but this requires time linear in the size of $V_{RSTU}$. 
The rewriting also violates the update property: for instance, the set of free variables of the dynamic view $R^d(A,D)$ does not  include the set of free  variables of its sibling $V_{ST}(A,B,C)$, thus 
computing the change in $V_{RST}$ for an update to $R^d$
requires iterating over linearly many $(B,C)$-values from $V_{ST}$ that are paired with the $A$-value from the update.

The second rewriting is also not safe: 
It satisfies the enumeration property as the root view encodes the query result 
but fails on the update property for both dynamic atoms.

The third rewriting is safe and admits $\bigO{1}$ update time and $\bigO{1}$ enumeration delay, per Proposition~\ref{prop:safe_rewriting_complexity}. We next show how to handle updates and enumerate from this rewriting.

We can propagate an update to $R^d$ or $U^d$ to their ancestor views in constant time. 
Consider an insert of a new tuple $(a,d)$ to relation $R^d$.
To compute the change in $V_{RST}$,
we check if $a$ exists in $V_{ST}''$ via a constant-time lookup.
If not, we stop as no further propagation is needed; 
otherwise, we insert $(a,d)$ into $V_{RST}$ in constant time.
We maintain $V_{RST}'$ by inserting $d$ into that view.
We compute the change in the root $V_{RSTU}$ by checking if $d$ exists in $U^d$ via a constant-time lookup, and if so, we insert $d$ into the root. 
Propagating an insert to $U^d$ requires a lookup in $V_{RST}'$ and an insert into $V_{RSTU}$, both in constant time.
Deletes to $R^d$ and $U^d$ are handled analogously. 
Thus, updates in this rewriting take constant time.

We can enumerate the distinct 
tuples in the query result using three nested for-loops. 
The first loop iterates over the $D$-values in $V_{RSTU}$; 
the second loop iterates over the $A$-values in $V_{RST}$ paired with a $D$-value from the first loop;
and the third loop iterates over the $C$-values in $V'_{ST}$ paired with an $A$-value from the second loop. 
Hence, each distinct  result tuple can be constructed in constant time. 

The time to compute the view $V_{ST}$ is quadratic because in the worst case each
$A$-value in $S^s$ is paired with each $C$-value in $T^s$.
All other views in the rewriting are either projection views or
semi-joins of one child view with another child view. 
Thus, the overall computation time for the rewriting is $\bigO{N^2}$,
where $N$ is the database size.   
\end{example}
\section{New Query Classes}
\label{sec:query_classes}
In this section we introduce the query classes $\lin$, $\pol$, and 
$\expo$. We first define syntactic properties of the classes $\lin$ and $\pol$ that guarantee 
the existence of safe rewritings using views. The class $\expo$ contains queries that do not satisfy these properties.

\begin{definition}
A path $\bm P$ connecting two atoms $R(\bm X)$ and $S(\bm Y)$ in a conjunctive query $Q$ is {\em safe} if $\bm P\cap \bm X\cap \bm Y \neq\emptyset$. In particular, the path is {\em body-safe} 
if the two atoms are body atoms and it is {\em head-safe} 
if one atom is a body atom and the other atom is the head atom.
A conjunctive query $Q$ is {\em well-behaved} if: (1) 
all paths connecting dynamic body atoms are body-safe; and (2) 
all paths connecting a dynamic body atom with the head atom are head-safe.
\end{definition}

\begin{example}
\label{ex:non-safe}
\rm
The queries $Q_3$--$Q_6$ from Figure~\ref{fig:queries}
are not well-behaved.  
The path $(A,B)$
in $Q_3(A,B, C)$ $=$ $R^d(A),$ $S^s(A,B),$ $T^d(B)$
connects the two dynamic body atoms $R^d(A)$ and $T^d(B)$,
but it is not body-safe, since 
$\{A,B\} \cap \{A\} \cap \{B\} = \emptyset$.
The path $(B,C)$
in  
$Q_4(A,B, C)$ $=$ $R^d(A, B),$ $S^d(A, C),$ $T^s(B, C)$ 
connects
the two dynamic body atoms $R^d(A,B)$ and $S^d(A,C)$,
but the path is not body-safe, since  
 $\{B,C\} \cap \{A,B\} \cap \{A,C\} = \emptyset$. 
The path $(A,B)$ in  
$Q_5(B, C)$ $=$ $R^d(A, B),$ $S^d(A, C),$ $T^s(B, C)$ 
connects
the dynamic body atom $S^d(A,C)$ with the head atom $Q_5(B, C)$,
but it is not head-safe, since
$\{A,B\} \cap \{A,C\} \cap \{B,C\} = \emptyset$. 
The path $(A,B)$
in $Q_6(A, B) = R^d(A), S^d(A, B), T^d(B, C), U^s(C)$
connects the two dynamic body atoms $R^d(A)$ and $T^d(B, C)$,
but it is not body-safe, since 
$\{A,B\} \cap \{A\} \cap \{B,C\} = \emptyset$.
\nop{
The query $Q_5(B, C) = R^d(A, B), S^d(A, C), T^s(B, C)$
from Figure~\ref{fig:queries} is not well-behaved since the path $(A,B)$ connects
the dynamic body atom $S^d(A, C)$ with the head atom $Q_5(B,C)$,  
yet $\{A,B\} \cap \{A,C\} \cap \{B,C\} = \emptyset$.
}
\end{example}

We can check efficiently whether a query is well-behaved.

\begin{proposition}
\label{prop:safety_properites_poly_time_check}
For any conjunctive query $Q$ with $n$ variables and $m$ atoms, we can decide in $O(n^2 \cdot m^2)$ time whether $Q$ is well-behaved.
\end{proposition}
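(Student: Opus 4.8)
The plan is to avoid enumerating paths directly, since a query can have exponentially many paths between two atoms, and instead to reduce each safety check to a single graph-reachability test. The starting observation is that a path $\bm P$ connecting atoms $R(\bm X)$ and $S(\bm Y)$ is \emph{unsafe} (i.e.\ violates the safety condition) exactly when $\bm P \cap \bm X \cap \bm Y = \emptyset$. Writing $W = \bm X \cap \bm Y$, this says that $\bm P$ is a path whose first variable lies in $\bm X \setminus \bm Y$, whose last variable lies in $\bm Y \setminus \bm X$, and which uses no variable of $W$. Hence $Q$ fails to be well-behaved precisely when some relevant pair of atoms admits such a $W$-avoiding path, so the whole task becomes deciding the existence of $W$-avoiding paths.

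First I would build the \emph{variable graph} $G$ of $Q$: its vertices are $\vars(Q)$, and two variables are adjacent iff they co-occur in some body atom. By definition of a path of $Q$, such a path is exactly a simple path in $G$, so a $W$-avoiding path connecting $R(\bm X)$ and $S(\bm Y)$ corresponds to a walk in the induced subgraph $G - W$ (delete the vertices of $W$) from the source set $\bm X \setminus \bm Y$ to the target set $\bm Y \setminus \bm X$. Crucially, a walk between these sets in $G - W$ exists iff a simple path does, so the structural avoidance condition collapses to plain reachability and the ``each variable appears at most once'' requirement imposes no extra cost. Thus the existence of an unsafe body path between two dynamic body atoms is decided by one reachability test in $G - (\bm X \cap \bm Y)$ from $\bm X \setminus \bm Y$ to $\bm Y \setminus \bm X$. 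Head-safety is handled identically by taking the head atom's variable set to be $\free(Q)$, since paths to the head still use only body-atom adjacencies.

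The algorithm then iterates over all $O(m^2)$ pairs of two dynamic body atoms, plus the $O(m)$ pairs of a dynamic body atom with the head atom, running one reachability test per pair; $Q$ is well-behaved iff every test fails. For the complexity: $G$ has $n$ vertices and at most $\binom{n}{2} = O(n^2)$ edges, and it is built in $O(n^2 m)$ time (each of the $m$ atoms contributes $O(n^2)$ edges, recorded in an $n \times n$ adjacency structure that also supports masking out $W$). Each reachability test is a single BFS/DFS on $G - W$ from the marked source set, costing $O(n + n^2) = O(n^2)$. Over $O(m^2)$ pairs this yields $O(n^2 m^2)$, which dominates the construction cost and gives the claimed bound.

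The main obstacle, and the step deserving the most care, is the correctness of the reduction rather than the running time: one must argue both directions of the equivalence between ``some path connecting the two atoms is unsafe'' and ``the source set reaches the target set in $G - W$''. Two points need attention: (i) a length-one path $(X_1)$ forces $X_1 \in \bm X \cap \bm Y = W$ and is therefore automatically safe, which is consistent with the sources and targets being exactly $\bm X \setminus \bm Y$ and $\bm Y \setminus \bm X$; and (ii) walk-reachability and simple-path existence coincide, so the ``no repeated variable'' clause in the definition of a path can be dropped without changing the answer. Once this equivalence is established, the complexity bound follows immediately from the number of pairs times the cost of each reachability test.
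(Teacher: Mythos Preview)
Your proposal is correct and follows essentially the same approach as the paper: both reduce the check for each pair of atoms to a reachability test in the Gaifman graph with the intersection variables $W=\bm X\cap\bm Y$ removed, and both arrive at the $O(n^2 m^2)$ bound by multiplying $O(m^2)$ pairs with an $O(n^2)$ BFS. Your write-up is in fact more careful than the paper's on one point: you correctly test reachability from the \emph{set} $\bm X\setminus\bm Y$ to the \emph{set} $\bm Y\setminus\bm X$, whereas the paper's proof picks ``any'' single $x$ and $y$, which as literally stated is insufficient; your explicit justification of points (i) and (ii) for the correctness of the reduction is also absent from the paper's proof.
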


The well-behavedness property of a query implies the $q$-hierarchical property of its dynamic sub-query.

\begin{proposition}
\label{prop:safety_properties_q-hierarchical}
\begin{itemize}
\item Any conjunctive query without static relations is q-hierarchical if and only if it
is well-behaved. 
\item The dynamic sub-query of any well-behaved conjunctive query is q-hierarchical. 
\end{itemize}
\end{proposition}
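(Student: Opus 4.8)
The plan is to establish both bullets through a single technical core: the two clauses of well-behavedness correspond exactly to the hierarchical property and to the $q$-condition on the free variables, respectively, once specialised to the dynamic atoms. I would first prove the first bullet (the equivalence for queries $Q$ with no static relations, so that every atom is dynamic and every path is a dynamic path) and then derive the second bullet from it by an inheritance argument. For the first bullet I treat the two directions separately, each by contraposition, and I phrase everything in terms of the sets $\atoms(X)$, using the stated reading of the hierarchical property: $Q$ is hierarchical iff for all variables $X,Y$ the sets $\atoms(X),\atoms(Y)$ are nested or disjoint.

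For the direction \emph{well-behaved $\Rightarrow$ $q$-hierarchical}, I argue the contrapositive by exhibiting a short unsafe path from any violation. If $Q$ is not hierarchical, there are variables $X,Y$ and atoms $A_1\ni X$ with $Y\notin A_1$, $A_2\ni Y$ with $X\notin A_2$, and $A_3$ containing both; then $(X,Y)$ is a path (witnessed by $A_3$) connecting the body atoms $A_1,A_2$ that is not body-safe, since $\{X,Y\}\cap\vars(A_1)\cap\vars(A_2)=\emptyset$. If instead $Q$ is hierarchical but the $q$-condition fails, there are $X,Y$ with $\atoms(X)\supsetneq\atoms(Y)$, $Y$ free and $X$ not free; then $X$ lies in every atom containing $Y$, some atom $A_1$ contains $X$ but not $Y$, and $(X,Y)$ connects $A_1$ with the head atom but is not head-safe. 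Both cases contradict well-behavedness.

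The harder direction, \emph{$q$-hierarchical $\Rightarrow$ well-behaved}, is the main obstacle, and I would handle it by a minimal-counterexample reduction on path length. Suppose for contradiction that $Q$ is $q$-hierarchical yet has a non-body-safe path $(X_1,\dots,X_n)$ connecting body atoms $R,S$ with $X_1\in\vars(R)$, $X_n\in\vars(S)$, of minimum length. Non-body-safety forces $R\notin\atoms(X_n)$ and $S\notin\atoms(X_1)$, and minimality forces $R\notin\atoms(X_2)$ (otherwise the suffix $(X_2,\dots,X_n)$ would be a shorter non-body-safe path). Since $X_1,X_2$ share an atom, hierarchy makes $\atoms(X_1),\atoms(X_2)$ comparable, and $R\in\atoms(X_1)\setminus\atoms(X_2)$ then yields $\atoms(X_2)\subseteq\atoms(X_1)$. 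If $n=2$, this contradicts $S\in\atoms(X_2)\setminus\atoms(X_1)$; if $n\ge 3$, the atom shared by $X_2,X_3$ lies in $\atoms(X_2)\subseteq\atoms(X_1)$, hence also contains $X_1$, so $(X_1,X_3,X_4,\dots,X_n)$ is a strictly shorter non-body-safe path connecting $R,S$, contradicting minimality. The key (and only delicate) step is this \emph{shortcut}: using $\atoms(X_2)\subseteq\atoms(X_1)$ to splice $X_1$ directly onto $X_3$ and shorten the path. The non-head-safe case is entirely parallel: a minimal non-head-safe path from a body atom $R$ to the head forces $R\notin\atoms(X_2)$ and $\atoms(X_2)\subseteq\atoms(X_1)$ with $X_1\notin\free(Q)$; the same shortcut reduces the case $n\ge 3$, and the base case $n=2$ gives $\atoms(X_1)\supsetneq\atoms(X_2)$ with $X_2$ free and $X_1$ not free, directly violating the $q$-condition.

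Finally, for the second bullet I observe that $\dyn(Q)$ has no static relations, so by the first bullet it suffices to show $\dyn(Q)$ is well-behaved, which I obtain by inheritance from $Q$. Every path of $\dyn(Q)$ uses only dynamic atoms and is therefore a path of $Q$ with the same endpoint atoms, so body-safety of paths between dynamic body atoms transfers verbatim. For head-safety I use that $\free(\dyn(Q))=\free(Q)\cap\vars(\dyn(Q))$: a path of $\dyn(Q)$ from a dynamic body atom $R$ to the head of $\dyn(Q)$ ends in $\free(\dyn(Q))\subseteq\free(Q)$, so it is a path of $Q$ to the head of $Q$; the witness variable $Z\in\vars(R)\cap\free(Q)$ supplied by well-behavedness of $Q$ lies on a dynamic-only path, hence $Z\in\vars(\dyn(Q))$ and thus $Z\in\free(\dyn(Q))$, establishing head-safety in $\dyn(Q)$.
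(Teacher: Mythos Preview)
Your proof is correct and uses essentially the same technical core as the paper: length-two unsafe paths witness failures of hierarchy or of the $q$-condition, and a minimal unsafe path can be shortened via the containment $\atoms(X_i)\subseteq\atoms(X_{i+1})$ (or vice versa) forced by hierarchy. The only organisational difference is that the paper proves the second bullet first (constructing the unsafe path in $Q$ directly from a $q$-hierarchy violation in $\dyn(Q)$) and deduces one direction of the first bullet from it, whereas you prove the equivalence first and then obtain the second bullet by showing that $\dyn(Q)$ inherits well-behavedness from $Q$; both orderings work and the underlying arguments coincide.
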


In Section \ref{sec:evaluation_lin_pol}, we show that every well-behaved
query admits a safe rewriting, that is, it can be rewritten into a forest of view trees that support constant update time and constant enumeration delay. To obtain linear preprocessing time, we further require that the query is free-connex 
$\alpha$-acyclic.

\begin{definition}[$\lin$]
\label{def:class_lin}
A conjunctive query is in $\lin$ if it is free-connex $\alpha$-acyclic and well-behaved.  
\end{definition}

For the class $\pol$, we skip the condition that the query is 
free-connex $\alpha$-acyclic. 

\begin{definition}[$\pol$]
\label{def:class_pol}
A conjunctive query is in $\pol$ if it is well-behaved.  
\end{definition}

\begin{example}
\label{ex:well-behaved}
\rm     
    Consider the queries from Figure~\ref{fig:queries}.    
    The query $Q_1$ is in $\lin$, since it is free-connex $\alpha$-acyclic and well-behaved.
    The query $Q_2$ is well-behaved but not free-connex $\alpha$-acyclic, since adding its head atom $Q_2(A,C,D)$ to its body yields a cyclic query. Hence, $Q_2$ is in $\pol$.
    The queries $Q_3$-$Q_6$ are not in $\pol$, since they are not well-behaved, as explained in 
    Example~\ref{ex:non-safe}.
\end{example}
We now define our most permissive class of tractable queries:

\begin{definition}[$\expo$]
\label{def:class_exponential}
A conjunctive query is in $\expo$ if it is well-behaved or every variable occurring in a dynamic atom also 
occurs in a static atom.
\end{definition}

\begin{example}
\rm
The query $Q_3(A,B) = R^d(A), S^s(A,B), T^d(B)$ in Figure~\ref{fig:queries} 
is not well-behaved   
but is in $\expo$, since the variables of the two dynamic atoms
$R^d(A)$ and $T^d(B)$  
also occur in the static atom $S^s(A,B)$.

The queries $Q_4$--$Q_6$ are not contained in $\expo$, since they are not well-behaved 
(as explained in Example~\ref{ex:non-safe}) and have variables in dynamic atoms that do not occur in static atoms:
The variable $A$ in $Q_4$ and $Q_5$ does not appear in the only static atom $S^d(B,C)$, and the 
variables $A$ and $B$ in $Q_6$ do not appear in the only static atom $U^s(C)$.
\end{example}
\section{Evaluation of Queries in $\lin$ and $\pol$}
\label{sec:evaluation_lin_pol}
This section introduces our evaluation strategy for the $\pol$ and $\lin$ queries.
Section~\ref{sec:variable_orders} introduces variable orders, 
which guide the construction of safe rewritings for such 
queries, and the preprocessing width $\fw$ of a query 
based on its variable orders.
Section~\ref{sec:safe_rewriting_lin_pol} shows that the construction of a
safe rewriting for queries in $\pol$ can be done in $\bigO{N^\fw}$ time, where $N$ is the size of the input database.

\subsection{Variable Orders}
\label{sec:variable_orders}
We say that two variables in a query are {\em dependent} if 
they appear in the same body atom. 
A {\em variable order} (VO) for a query $Q$ in $\pol$ is a 
forest $\omega = \{\omega_i\}_{i \in [n]}$ of trees 
$\omega_i$ such that: 
(1) There is a one-to-one mapping between the nodes of $\omega$ and  the variables in $Q$;
(2) the variables of each body atom in $Q$ lie on a root-to-leaf path in 
$\omega$~\cite{OlteanuZ15,OlteanuS:SIGREC16}. 
We denote by $\vars(\omega)$ the set of variables in $\omega$ and by
$\omega_X$ the subtree of $\omega$ rooted at $X$.
We associate each VO $\omega$ with a dependency 
function $\dep_{\omega}$ that maps each variable $X$ in $\omega$ 
to the subset of its ancestors on which the variables in $\omega_X$ 
depend: 
$\dep_{\omega}(X) = \{Y | Y \text{ is an ancestor of } X \text{ and } 
\exists Z \in \omega_X \text{ s.t. $Y$ and $Z$ are dependent}\}$. 
We further extend a VO by adding each body atom of the query as the child of the lowest variable in the VO that belongs to the atom.

We adapt the notion of canonical VOs from the all-dynamic setting~\cite{DBLP:journals/lmcs/KaraNOZ23}
to the setting with both static and dynamic relations. 
A VO is {\em canonical} if the set of variables of each dynamic body atom $A$ is 
 the set of inner nodes of the root-to-leaf path that ends with $A$.
It is {\em free-top} if no bound variable is an ancestor of a free variable. 
It is  {\em well-structured}  if it is canonical and free-top.
We denote by $\WVO(Q)$ the set of well-structured VOs of $Q$.

\begin{figure}[t]
\begin{minipage}{0.2\textwidth}
\begin{tikzpicture}
    \node[draw=none,fill=none](a) at (1, -1) {$\underline{A}$};
    \node[draw=none,fill=none](b) at (0, -2) {$\underline{B}$};
    \node[draw=none,fill=none](c) at (2, -2) {$C$};
    \node[draw=none,fill=none](d) at (1, -2.5) {$D$};
    \draw[rotate=45, red] (-0.65,-1.4) ellipse (1.3cm and 0.4cm);
    \draw[rotate=-45, red] (2.05, 0) ellipse (1.3cm and 0.4cm);
    \draw[blue, dotted , line width = 1.2] (1,-1.75) ellipse (.4cm and 1.2cm);
    \draw[rotate=30,blue, dotted , line width = 1.2] (.2,-2.7) ellipse (1cm and .4cm);
\end{tikzpicture}
\end{minipage}
\quad
\begin{minipage}{0.4\textwidth}
\begin{tikzpicture}
    \node[draw=none,fill=none](a) at (0, -1) {$\underline{A}$};
    \node[draw=none,fill=none](b) at (-1, -2) {$\underline{B}$};
    \node[draw=none,fill=none](c) at (1, -2) {$C$};
    \node[draw=none,fill=none](d) at (1, -3) {$D$};

    \node[draw=none,fill=none](R) at (-1, -3) 
    {\color{red} $R^d(\underline{A}, \underline{B})$};

    \node[draw=none,fill=none,anchor=west](S) at (1.7, -3) 
    {\color{red} $S^d(\underline{A}, C)$};

    \node[draw=none,fill=none,anchor=east](Y) at (0.7, -4) 
    {\color{blue} $Y^s(\underline{A}, D)$};

    \node[draw=none,fill=none,anchor=west](Z) at (1.3, -4) 
    {\color{blue} $Z^s(C, D)$};

    \draw (b) -- (a);
    \draw(c) -- (a);
    \draw (c) -- (d);
    \draw (c) -- (S);
    \draw (b) -- (R);
    \draw (d) -- (Y);
    \draw (d) -- (Z);

    \node[anchor=west] at (1, 0) {\footnotesize $\dep_\omega(A) = \emptyset$};
    \node[anchor=west] at (1, -0.5) {\footnotesize $\dep_\omega(B) = \dep_\omega(C) = \{A\}$};
    \node[anchor=west] at (1, -1) {\footnotesize $\dep_\omega(D) = \{A,C\}$};        
\end{tikzpicture}
\end{minipage}
\qquad
\begin{minipage}{0.35\textwidth}
\begin{tikzpicture}
\node(VA) at (.7, -1.4) 
{\color{red} $V_{A}(\underline{A})$};

\node(VB) at (-.3, -2.2) 
{\color{red} $V'_{B}(\underline{A})$};

\node(VF1) at (1.8, -2.2) 
{\color{red} $V'_{C}(\underline{A})$};

\node(VC) at (1.8, -3) 
{\color{red} $V_{C}(\underline{A}, C)$};

\node(R) at (-.3, -3) 
{\color{red} $R^d(\underline{A}, \underline{B})$};

\node(S) at (2.7, -3.8) 
{\color{red} $S^d(\underline{A}, C)$};

\node(VD1) at (0.9, -3.8) 
{\color{blue} $V'_{D}(\underline{A}, C)$};

\node(VD) at (0.9, -4.6) 
{\color{blue} $V_{D}(\underline{A}, C, D)$};

\node(Y) at (0, -5.4) 
{\color{blue} $Y^s(\underline{A}, D)$};

\node(Z) at (1.8, -5.4) {
\color{blue} $Z^s(C, D)$};

\draw (VB) -- (VA);
\draw (VB) -- (R);
\draw (VF1) -- (VA);
\draw (VF1) -- (VC);
\draw (S) -- (VC);
\draw (VD1) -- (VC);
\draw (VD1) -- (VD);
\draw (VD) -- (Y);
\draw (VD) -- (Z);
\end{tikzpicture}
\end{minipage}
\caption{
(left to right) The hypergraph of $Q(A,B) = R^d(A,B), S^d(A,C), Y^s(A,D), Z^s(C,D)$, a 
well-structured VO $\omega$ for $Q$, and the view tree for $Q$ constructed by the procedure $\rewrite(\omega)$ from Figure~\ref{fig:view_tree_construction}. Dynamic views are in red, static views are in blue.
The query $Q$ is well-behaved and the preprocessing width is 2, thus $Q \in \pol$.
}
    \label{fig:pol_class_construction}
\end{figure}

    \begin{example}
    \rm
      The query $Q$ from Figure~\ref{fig:pol_class_construction} is in $\pol$ as it 
      is well-behaved. It is not in $\lin$ as it is not free-connex $\alpha$-acyclic. 
      The dynamic sub-query $Q^d(A,B) = R^d(A,B),S^d(A,C)$ is q-hierarchical, per Proposition~\ref{prop:safety_properties_q-hierarchical}.
      The figure shows a well-structured VO for the query. 
    \end{example}

The following proposition generalises the above example to all well-behaved queries:

\begin{proposition}
\label{prop:pol_vo_dynamic_q-hierarchical}
    Any well-behaved conjunctive query has a well-structured VO.
\end{proposition}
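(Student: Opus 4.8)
The plan is to build a well-structured VO for $Q$ in two phases: first for the dynamic variables in isolation, and then extending it over the static variables. For a dynamic variable $X$ write $\atoms_\dyn(X)$ for the set of dynamic atoms containing $X$. By Proposition~\ref{prop:safety_properties_q-hierarchical}, $\dyn(Q)$ is $q$-hierarchical, hence hierarchical, so the family $\{\atoms_\dyn(X)\}_X$ is laminar (any two members are nested or disjoint). I would construct a forest $\omega_0$ on the dynamic variables in which $X$ is an ancestor of $Y$ whenever $\atoms_\dyn(X) \supsetneq \atoms_\dyn(Y)$, breaking ties between variables with equal atom-sets into chains and, within each such tie class, placing free variables above bound ones. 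I would then verify three things. First, $\omega_0$ is a VO of $\dyn(Q)$. Second, it is canonical: the variables of a dynamic atom $A$ all contain $A$ in their atom-sets, hence are pairwise nested and form the topmost segment of the path ending at $A$, since any variable lying on that path above or between them would also have $A$ in its atom-set and therefore be a variable of $A$; in particular the top variable of $A$ is a root of $\omega_0$. Third, it is free-top: strict containment steps respect free-above-bound by the $q$-condition, and equal-atom-set ties do so by construction.

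The central structural fact, and the place where the body-safety half of well-behavedness is used, is the following lemma: the dynamic variables of every static atom lie on a single root-to-leaf path of $\omega_0$. Since pairwise comparability in a forest forces a common path, it suffices to show that any two dynamic variables $X_1, X_2$ of a static atom $S$ are comparable in $\omega_0$. If they were not, then in the laminar family $\atoms_\dyn(X_1)\cap\atoms_\dyn(X_2)=\emptyset$; but $(X_1,X_2)$ is a path (both variables occur in $S$) connecting two \emph{distinct} dynamic atoms $A_1\ni X_1$ and $A_2\ni X_2$, so body-safety yields $\{X_1,X_2\}\cap \bm{X}_{A_1}\cap \bm{X}_{A_2}\neq\emptyset$, i.e.\ one of $X_1,X_2$ lies in both $A_1$ and $A_2$. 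That common atom then lies in $\atoms_\dyn(X_1)\cap\atoms_\dyn(X_2)$, a contradiction.

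For the second phase I would extend $\omega_0$ to all variables: for each static atom $S$ its dynamic variables occupy a path with a deepest node $X^\ast_S$, and I would hang the pure static variables below the dynamic skeleton so that each static atom's variable set becomes a single root-to-leaf path, finally attaching all atoms as leaves under their lowest variable. Canonicity is untouched because no static variable is ever inserted into the prefix path of a dynamic atom. Free-top is where the head-safety half of well-behavedness enters: a free pure-static variable $Y$ can never share a static atom with a bound dynamic variable $X$, for otherwise the path $(X,Y)$ would connect the dynamic atom of $X$ to the head atom (as $Y$ is free), yet $\{X,Y\}\cap \bm{X}_{A}\cap\free(Q)=\emptyset$ because $X$ is bound and $Y$ occurs in no dynamic atom, contradicting head-safety. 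Consequently every dynamic variable constraining the placement of a free $Y$ is itself free and, by free-top of $\omega_0$, sits on an all-free path, so $Y$ can be routed below free variables only.

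The main obstacle is this last phase when a single pure static variable $Y$ is shared by several static atoms: I must place it consistently while preserving validity (each static atom on one path), canonicity, and free-top simultaneously. I would resolve the consistency by applying body-safety once more, now to paths of the form $(X^\ast_{S_1},Y,X^\ast_{S_2})$, to show that the deepest dynamic variables of two static atoms sharing $Y$ are comparable in $\omega_0$; thus the candidate attachment branches for $Y$ are nested rather than incomparable. This lets me process the static atoms from the deepest attachment point upward along $\omega_0$, appending their pure static variables as chains with free variables first. Proving that this greedy layering never forces a free variable below a bound one — the precise point at which head-safety rather than body-safety is indispensable — is the delicate step I expect to require the most care.
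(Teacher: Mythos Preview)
Your two-phase plan---build a well-structured VO for the dynamic variables, then extend to the static ones---is the same skeleton as the paper's proof, and your first phase and your ``central lemma'' (dynamic variables of a single static atom lie on one path of $\omega_0$) are correct. The gap is in the second phase, and it is precisely the place you flag as delicate.

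Your per-atom treatment does not suffice once several static atoms are chained through pure-static variables, e.g.\ $S_1(A,Y_1),\,S_2(Y_1,Y_2),\,S_3(Y_2,B)$ with $A,B$ dynamic and $Y_1,Y_2$ purely static. Here $S_2$ has no dynamic variable at all, so $X^\ast_{S_2}$ is undefined; and the constraint that ties $Y_1$ (hence $Y_2$) to the branch of $B$ is invisible if you only look at atoms sharing $Y_1$ directly. Likewise your free-top argument (``a free pure-static $Y$ cannot share a static atom with a bound dynamic $X$'') only covers length-$2$ paths; it does not rule out that the placement of a free $Y$ is forced below a bound dynamic $X$ reachable through several static atoms. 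Your conclusion ``consequently every dynamic variable constraining the placement of a free $Y$ is itself free'' therefore does not follow from what you proved.

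The paper closes this gap by working not with individual static atoms but with \emph{static parts}: maximal sets of static atoms connected via purely-static variables. Two lemmas do the work. First (body-safety along arbitrary-length paths through the part), the set $\bm I$ of all dynamic variables touching a static part is contained in a \emph{single} dynamic atom, hence lies on one root-to-leaf path of $\omega_d$; this is the right global statement replacing your per-atom lemma and your length-$3$ comparability argument. Second (head-safety along arbitrary-length paths), if the static part has any free variable outside $\bm I$, then \emph{every} variable in $\bm I$ is free. With these two facts the extension becomes trivial: hang the whole static part as a chain below the lowest variable of $\bm I$, free variables first. No greedy procedure or layering argument is needed; the global lemmas do exactly the bookkeeping you were hoping the greedy pass would accomplish.
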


Next, we define the {\em preprocessing width} $\fw$ of a VO $\omega$ and a query $Q$ in $\pol$. 
For a variable $X$ in a VO $\omega$,  let $Q_X$ denote the query, whose body is the conjunction of all atoms in 
$\omega_X$ and whose free variables are all its variables.
The preprocessing width of $\omega$ is defined next using the fractional edge cover number\footnote{The fractional edge cover number is used to define an upper bound on both the size of the query result and the time to compute it (Section~\ref{sec:prelims}).
} for such queries $Q_X$, whereas the preprocessing width of a query is the minimum over the preprocessing widths of its well-structured VOs:
\begin{align*}
\fw(\omega)  = \max_{X \in \vars(\omega)} 
\rho_{Q_X}^*(\{X\} \cup \dep_{\omega}(X))  \hspace{2em} \text{and} \hspace{2em} \fw(Q) = \min_{\omega \in \WVO(Q)} \ \fw(\omega)
\end{align*}

\begin{example}
\rm
Figure \ref{fig:pol_class_construction} depicts a well-structured 
VO $\omega$ for the query $Q$. 
We have $dep_{\omega}(D) = \{A,C\}$ and
$\rho_{Q_D}^*(\{A,C,D\}) = 2$. Overall, 
the preprocessing width of $Q$  is $2$.
\end{example}

The preprocessing width of any query in $\lin$ is $1$:

\begin{proposition}
\label{prop:acyclic_preprocessing_width}
For any conjunctive query $Q$ in $\lin$, it holds $\fw(Q)= 1$.
\end{proposition}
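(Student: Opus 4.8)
The plan is to prove $\fw(Q)\ge 1$ and $\fw(Q)\le 1$ separately, after translating the width condition into a combinatorial statement about atoms. The lower bound is immediate: for every well-structured VO $\omega$ and every variable $X$, the set $\{X\}\cup\dep_\omega(X)$ is non-empty and $X$ occurs in at least one atom of $Q_X$, so any fractional edge cover of $Q_X$ for $\{X\}\cup\dep_\omega(X)$ must place total weight at least $1$ on the atoms containing $X$. Hence $\rho^*_{Q_X}(\{X\}\cup\dep_\omega(X))\ge 1$, so $\fw(\omega)\ge 1$ for every $\omega\in\WVO(Q)$ and therefore $\fw(Q)\ge 1$.

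For the upper bound I first characterise when a bag contributes width exactly $1$. Writing $B_X=\{X\}\cup\dep_\omega(X)$, I claim that $\rho^*_{Q_X}(B_X)\le 1$ if and only if some atom of $Q_X$ contains $B_X$. The reverse direction is clear (assign weight $1$ to that atom). For the forward direction, take a cover $\boldsymbol\lambda$ with $\sum_e \lambda_e\le 1$ and sum the covering constraints over all $v\in B_X$:
\[
|B_X| \;\le\; \sum_{v\in B_X}\sum_{e\ni v}\lambda_e \;=\; \sum_e \lambda_e\,|e\cap B_X| \;\le\; \Big(\max_e |e\cap B_X|\Big)\sum_e \lambda_e \;\le\; \max_e |e\cap B_X|,
\]
so some atom $e$ satisfies $B_X\subseteq e$. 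Consequently, proving $\fw(Q)\le 1$ reduces to exhibiting one well-structured VO $\omega^\star$ in which every bag $B_X$ is contained in a body atom. Note that such an atom automatically lies in $\omega_X$: since $X\in B_X\subseteq e$ and all variables of $e$ lie on a single root-to-leaf path, the lowest variable of $e$ (under which $e$ is attached) is a descendant of $X$, so $e\in\omega_X$.

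It remains to build $\omega^\star$. Here I use that $Q\in\lin$ is free-connex $\alpha$-acyclic, so there is a join tree $J$ of $\atoms(Q)\cup\{H\}$ with $H=Q(\free(Q))$ the head atom. Root $J$ at $H$ and turn it into a VO by replacing each atom node with a chain of the variables it introduces (those absent from its parent), attaching chains top-down. The running-intersection property of $J$ gives $B_X\subseteq a_X$, where $a_X$ is the topmost atom of $J$ containing $X$, so every bag lies in a single atom and the VO has width $1$. Moreover, every free variable occurs in $H$, and the nodes of $J$ containing it form a connected subtree containing the root $H$; hence each free variable is introduced at the root and sits above every bound variable, making $\omega^\star$ free-top.

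Finally, $\omega^\star$ must also be canonical: the variables of each dynamic atom must be exactly the inner nodes of the root-to-leaf path ending at that atom. This is where well-behavedness enters, via Proposition~\ref{prop:safety_properties_q-hierarchical}, which guarantees that the dynamic sub-query of $Q$ is $q$-hierarchical; its dynamic atoms therefore have pairwise nested or disjoint variable sets and can be laid out as root-paths. The main obstacle is to perform the join-tree-to-VO construction so that width $1$, free-top, and canonical hold \emph{simultaneously}: concretely, one must choose the rooting and the order in which variables are chained so that each dynamic atom occupies a dedicated top-down path consisting of exactly its own variables, while the running-intersection argument underlying the width-$1$ bound is preserved (the $q$-hierarchical nesting of dynamic atoms together with the free-top placement is what makes this possible without conflict). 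Given such an $\omega^\star$, every bag is contained in an atom, so $\fw(\omega^\star)=1$ by the characterisation above; combined with the lower bound this yields $\fw(Q)=1$.
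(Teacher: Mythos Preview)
Your lower bound and the characterisation of when a bag has cover number $1$ are both fine, and the idea of reading off a free-top width-$1$ VO from a join tree rooted at the head atom is standard and correct. The genuine gap is the final paragraph: you correctly identify that reconciling \emph{canonical} with \emph{width $1$} and \emph{free-top} is ``the main obstacle'', but then you do not actually overcome it---you only assert that the $q$-hierarchical structure of $\dyn(Q)$ ``makes this possible without conflict''. That assertion is the whole content of the proposition and needs an argument.

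Concretely, your join-tree construction need not be canonical. Take $Q_1(A,B,C)=R^d(A,D),S^d(A,B),T^s(B,C)$. A free-connex join tree rooted at the head has $S^d$ below $Q_1$, $R^d$ below $S^d$, and $T^s$ below $Q_1$; chaining introduced variables puts $A,B,C$ at the top and $D$ below the $S^d$-chain, i.e.\ below $B$. The root-to-leaf path above $R^d$ is then $A,B,D$, but $B\notin\vars(R^d)$, so the VO is not canonical. Fixing this by reordering variables risks breaking the running-intersection reasoning that gave you width $1$, and you give no mechanism to control that interaction. The paper proceeds in the opposite direction: it first fixes a canonical (and free-top) VO $\omega_d$ for $\dyn(Q)$---which exists because $\dyn(Q)$ is $q$-hierarchical---and then attaches each ``static part'' of $Q$ via an $\omega_d$-compatible free-top VO of width $1$; the technical work (Lemma~\ref{lem:neck_dynamic-coverage}, Claim~\ref{claim:VO_static_part}, Lemma~\ref{free-connex-to-neck}) is precisely to show that these static pieces can be oriented with the right neck and still have width $1$. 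Your proposal skips an argument of comparable strength.
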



\subsection{Safe Rewriting of Queries in $\pol$}
\label{sec:safe_rewriting_lin_pol}

We show that every query in $\pol$ has a safe rewriting 
using views. The time to compute the views is $\bigO{N^{\fw}}$,
where $N$ is the database size and $\fw$ is the
preprocessing width of the query. 

Prior work uses view trees to maintain queries in the all-dynamic setting~\cite{ICDT23_access_pattern, FIVM:VLDBJ:2023}. We adapt the view tree construction to the 
setting over both static and dynamic relations and show that for
queries in $\pol$, 
the resulting view trees are safe rewritings.

\begin{figure}[t]
	\centering
	\setlength{\tabcolsep}{3pt}
	\renewcommand{\arraystretch}{1.05}
	\renewcommand{\linenumber}{\makebox[2ex][r]{\rownumber\TAB}}
	\setcounter{magicrownumbers}{0}
	\begin{tabular}[t]{@{}c@{}c@{}l@{}}
		\toprule
		\multicolumn{3}{l}{$\rewrite$(\text{VO} $\nu$) : rewriting using views}   \\
		\midrule
		\multicolumn{3}{l}{\MATCH $\nu$:} \\
		\midrule
		\phantom{ab} & $\{\nu_i\}_{i \in [n]}$\hspace*{2.5em} & \linenumber 
        \RETURN $\{\,\rewrite(\nu_i)\,\}_{i \in [n]}$ \\[2pt]
 		\cmidrule{2-3} \\[-12pt]
  \phantom{ab} & $R(\bm Y)$\hspace*{2.5em} & \linenumber \RETURN $R(\bm Y)$ \\
		\cmidrule{2-3} \\
		             &
		\begin{minipage}[t]{0.15\linewidth}
			\vspace{-1em}
			\hspace*{-0.35cm}
			\begin{tikzpicture}[xscale=0.5, yscale=1]
				\node at (0,-2)  (n4) {$X$};
				\node at (-1,-3)  (n1) {$\nu_1$} edge[-] (n4);
				\node at (0,-3)  (n2) {$\ldots$};
				\node at (1,-3)  (n3) {$\nu_k$} edge[-] (n4);
			\end{tikzpicture}
		\end{minipage}
		             &
		\begin{minipage}[t]{0.8\linewidth}
			\vspace{-0.6cm}
	  \linenumber \IF $k = 1$ and $\omega_1$ is atom $R(\bm Y)$ \SPACE\RETURN $R(\bm Y)$ \\[1ex]    \linenumber \LET $T_i = \rewrite(\nu_i)$ with root view $V_i(\bm S_i)$,  \ $\forall i\in[k]$ \\[1ex]
        \linenumber \LET $\bm S = \{X\} \cup \dep_\omega(X)$ \\[1ex]
        \linenumber \LET $V_X(\bm S) =$ join of $V_1(\bm S_1), \ldots, V_k(\bm S_k)$ \\[1ex]
        \linenumber \LET  $V'_X(\bm S\setminus \{X\}) = V_X(\bm S)$ \\[1ex]
        \linenumber \RETURN $
                    \left\{
                    \begin{array}{c}    
                        \tikz {
                            \node at (1.2,1)  (n4) {$V_X(\bm S)$};
						\node at (0.6,0.3)  (n1) {$T_1$} edge[-] (n4);
						\node at (1.2,0.3)  (n2) {$\ldots$};
						\node at (1.8,0.3)  (n3) {$T_k$} edge[-] (n4);
                            \node[anchor=west] at (3, 0.65) {\text{ if $X$ has no parent in $\omega$}};
                            
						\node at (1.2, -0.5)  (m1) {$V_X'(\bm S \setminus \{X\})$};
						\node at (1.2,-1.25)  (m4) {$V_X(\bm S)$} edge[-] (m1);
						\node at (0.6,-1.95)  (m1) {$T_1$} edge[-] (m4);
						\node at (1.2,-1.95)  (m2) {$\ldots$};
						\node at (1.8,-1.95)  (m3) {$T_k$} edge[-] (m4);
                            \node[anchor=west] at (3, -1.25) {otherwise};
					}
                    \end{array}  \right. $\\[0.5ex]
 
		\end{minipage}                                              \\[2.75ex]
		\bottomrule
	\end{tabular}
	\caption{Rewriting a query using views following its VO $\omega$ by calling 
	$\rewrite$($\omega$).}
	\label{fig:view_tree_construction}
\end{figure}

Given a VO $\omega$ for a query $Q$ in $\pol$, 
the procedure $\rewrite$ in Figure~\ref{fig:view_tree_construction} rewrites $Q$ using views 
following a top-down traversal of $\omega$. Initially, the parameter VO $\nu$ is $\omega$.
If $\nu$ is a set of trees, the procedure creates a view tree for each tree in $\nu$ (Line 1). 
If $\nu$ is a single atom, the procedure returns the atom (Line 2).
If $\nu$ is a single tree with root $X$, it proceeds as follows.
If $X$ has only one child and this child is an atom, then
the procedure returns the atom (Line 3).
Otherwise, $X$ has several child views. In this case, the procedure creates
a join view $V_X$ with free variables $\{X\} \cup \dep_{\omega}(X)$. 
If $X$ has a parent node, the procedure also 
 adds on top of the join view $V_X$ a projection view $V_X'$ that projects away $X$ (Line 8).

\begin{example}
\rm
    Figure~\ref{fig:pol_class_construction} shows a well-structured VO $\omega$ for the query $Q$ 
    and the view tree 
    constructed from $\omega$ using the procedure \rewrite.
    Observe that we obtain 
    the view tree from $\omega$ by replacing each variable $X$ either 
    by a single projection view $V_X'(dep_{\omega}(X))$ or by
    a join view  $V_X(\{X\} \cup dep_{\omega}(X))$ and a projection 
    view $V_X'(dep_{\omega}(X))$ on top. 
    
    We can verify that the view tree satisfies all four properties of safe rewritings from 
    Definition~\ref{def:safe_rewriting}.  
    The set of free variables of each dynamic view includes the set of free variables of each of its siblings; for instance,  
    the sibling views $V'_{C}(A)$ and $V'_B(A)$ have the same set of free variables. 
    The views $V_A(A)$ and $R^d(A,B)$ encode the query result. 

    The time to compute the view $V_D(A,C,D)$ is quadratic in the database size.
    All other views only need linear time to 
    compute semi-joins or project away a variable.
    Thus, the materialisation time for this rewriting is quadratic.
By Proposition~\ref{prop:safe_rewriting_complexity}, this query thus admits constant update time and constant enumeration delay after quadratic preprocessing time.  
\end{example}

Given a well-structured VO $\omega$ for a query in $\pol$, the procedure $\rewrite$ outputs a safe rewriting for $Q$ whose materialisation time is a function of the preprocessing width of $\omega$.
\begin{proposition}
    \label{prop:safe_view_trees}
    For any conjunctive query $Q$ in $\pol$, VO $\omega$ in $\WVO(Q)$, and database of size $N$, 
    $\rewrite(\omega)$ is a safe rewriting for $Q$
    with $\bigO{N^{\fw}}$ materialisation time, where $N$ is the database size and $\fw$ is the preprocessing width of $\omega$. 
\end{proposition}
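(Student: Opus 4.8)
The plan is to prove Proposition~\ref{prop:safe_view_trees} by showing two things separately: that $\rewrite(\omega)$ produces a rewriting satisfying all four conditions of Definition~\ref{def:safe_rewriting}, and that the materialisation time of this rewriting is $\bigO{N^{\fw}}$. I would carry out the structural-correctness part first, then bound the time.

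\textbf{Verifying the four safety properties.} First I would make explicit the shape of the output: by inspecting $\rewrite$, each variable $X$ in $\omega$ is replaced by either a single projection view $V'_X(\dep_\omega(X))$ (when $X$ has a parent and several children) together with a join view $V_X(\{X\}\cup\dep_\omega(X))$ beneath it, or by just a join view $V_X$ (when $X$ is a root), or by the atom itself (the Line 3 base case). For the two \textbf{correctness} properties, I would argue that the tree structure follows $\omega$, so each connected component of $Q$ lies in one tree (since variables of a connected component share a root-to-leaf path, using the VO property), giving property (1); for property (2), I would check that whenever we create a projection view $V'_X$ that projects away $X$, every atom containing $X$ lies entirely in the subtree rooted at $V_X$ — this is exactly guaranteed because $\omega$ places all variables of an atom on a root-to-leaf path and attaches each atom below its lowest variable. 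For the \textbf{update} property, the key observation is that the free variables of the join view $V_X$ are $\{X\}\cup\dep_\omega(X)$, and the free variables of each child view $V_i$ are $\{X_i\}\cup\dep_\omega(X_i)$ for the child variable $X_i$; since $\dep_\omega(X_i)\subseteq\{X\}\cup\dep_\omega(X)$ by definition of the dependency function, the sibling free-variable containment needed for dynamic views should follow — here I would need to invoke that $\omega$ is canonical (so dynamic atoms sit on dedicated paths and a dynamic child's free variables are controlled). For the \textbf{enumeration} property, I would use that $\omega$ is free-top: the free variables occupy the top of each tree, so the connected set $\mathcal{V}_i$ of views covering the free variables is exactly the top portion of each view tree down to the lowest free variable, and these views collectively carry all of $\free(Q)$.

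\textbf{Bounding the materialisation time.} For the time bound, the central claim is that each join view $V_X(\{X\}\cup\dep_\omega(X))$ can be computed in time $\bigO{N^{\rho^*_{Q_X}(\{X\}\cup\dep_\omega(X))}}$, and projection/semi-join views take only linear time. The view $V_X$ joins the children's root views, which together cover exactly the atoms in $\omega_X$; by the AGM/worst-case-optimal-join bound cited in the preliminaries, the result of a join projected onto $\{X\}\cup\dep_\omega(X)$ has size and can be computed in time $\bigO{N^{\rho^*_{Q_X}(\{X\}\cup\dep_\omega(X))}}$, which is $\bigO{N^{\fw(\omega)}}$ by the definition of preprocessing width as a maximum over variables. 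I would need to confirm that the inputs to this join (the child root views) are themselves already materialised within this time budget, which follows by induction on the VO from the leaves upward, and that the worst-case-optimal join can indeed be run directly over these materialised child views rather than base relations. Summing over the constantly many views (data complexity), the total is $\bigO{N^{\fw}}$.

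\textbf{Main obstacle.} I expect the hardest part to be the \textbf{update} property for dynamic views, i.e.\ verifying that each dynamic view's free variables contain those of all its siblings. The subtlety is that this containment does not hold for arbitrary VOs — it is precisely what canonicity and the well-behavedness of $Q$ (via Proposition~\ref{prop:safety_properties_q-hierarchical}, giving a $q$-hierarchical dynamic sub-query) are meant to guarantee. I would have to trace carefully how canonicity forces the variables of a dynamic atom to form a contiguous root-to-leaf segment, so that a dynamic child brings along all free variables its dynamic sibling could depend on; handling the case where a sibling is static versus dynamic, and ensuring the free-variable sets line up through the intervening projection views, is the delicate bookkeeping I anticipate as the crux of the argument.
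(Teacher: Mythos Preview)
Your overall plan matches the paper's proof: verify the four safety conditions of Definition~\ref{def:safe_rewriting}, then bound materialisation time by induction over the view tree. The correctness and enumeration arguments you sketch are essentially the paper's.

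There is, however, a genuine gap in your treatment of the \textbf{update} property. What you wrote establishes $\dep_\omega(X_i)\subseteq\{X\}\cup\dep_\omega(X)$, i.e.\ each child's schema is contained in the parent's. But the update property is about \emph{siblings}: for a dynamic child $V_i$ you must show that $V_i$'s free variables contain the free variables of every sibling $V_j$. Your containment goes the wrong way and involves the wrong pair of views. The paper's argument is actually simpler than you anticipate and does not need well-behavedness or Proposition~\ref{prop:safety_properties_q-hierarchical}: canonicity alone says that a dynamic atom's variables are exactly the inner nodes of its root-to-leaf path, so every ancestor variable is in that atom. Hence any dynamic view (having a dynamic atom in its subtree) has \emph{all} its ancestor variables in its schema. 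Now each sibling $V_j$ is either an atom (whose variables lie among the common ancestors) or a projection view $V'_{X_j}$ with schema $\dep_\omega(X_j)$ (also a subset of the common ancestors). In both cases $V_j$'s schema is contained in the dynamic $V_i$'s schema.

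Your time-bound sketch also skips a step the paper makes explicit. You cannot in general compute a join projected onto $\{X\}\cup\dep_\omega(X)$ in time $\bigO{N^{\rho^*_{Q_X}(\{X\}\cup\dep_\omega(X))}}$; worst-case optimal joins run in AGM-bound time for the \emph{full} variable set, not the projected one. The paper's fix is to observe that any variable occurring in two distinct child views must already lie in $\{X\}\cup\dep_\omega(X)$, so variables outside this set are private to a single child and can be projected away \emph{before} joining. After this projection the child views' schemas are all subsets of $\{X\}\cup\dep_\omega(X)$, and the worst-case optimal join then runs within the claimed bound.
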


Using a VO $\omega$ from $\WVO(Q)$ with $\fw(\omega) = \fw(Q)$, Proposition~\ref{prop:safe_view_trees}
 immediately implies   Theorem~\ref{thm:polynomial}.
Together with Proposition~\ref{prop:acyclic_preprocessing_width}, which states that 
 the queries in $\lin$ have preprocessing width 1, 
 Proposition~\ref{prop:safe_view_trees} also implies the complexity upper bound in 
Theorem~\ref{thm:dichotomy_linear}.
\section{Evaluation of Queries in $\expo$}
\label{sec:evaluation_expo}

In this section, we introduce our evaluation strategy for queries in $\expo$, specifically targeting those in $\expo \setminus \pol$ for which the approach from previous sections is not applicable. 
We begin by demonstrating our strategy with a simple query from this class.

\begin{example}
\rm
Consider the query $Q_3(A,B) = R^d(A), S^s(A,B), T^d(B)$ from Figure \ref{fig:queries}. It does not admit a safe rewriting, thus the evaluation strategy described in Section~\ref{sec:evaluation_lin_pol} cannot achieve constant update and constant enumeration delay.
At a first glance, $Q_3$ does not seem tractable as a single-tuple update can lead to linearly many changes to the query result. 
For example, an insert $+a$ to $R^d$ may be paired with linearly many $B$-values in $S^s$.
However, the result of $Q_3$ is always a subset of the static relation $S^s$. 
With additional preprocessing time, we can precompute the effect of each update:
We can construct a transition system whose states are the possible subsets of $S^s$ and the updates to the dynamic relations may cause state transitions.
In this system, each state enables the query result to be enumerated with constant delay, and transitioning between states occurs in constant time.
\end{example}

Each state corresponds to a database instance and the materialised query result for that instance. Since the static relations are the same across all states, the states only record the content of the query result and of the dynamic relations without any dangling tuples.

\nop{-- an instance over the dynamic schema of a given database that includes only tuples relevant for the given query and that is maximal in that sense.}
For clarity in this section, we interpret a database as a finite set of facts and apply standard set operations to databases. The static and dynamic parts of a database $D$, denoted $D^s$ and $D^d$, consist of the facts from the static and dynamic relations in $D$, respectively, with $D = D^s \cup D^d$.
Before explaining how to construct a transition system, we first define maximum dynamic databases.

\begin{definition}\label{def:maximum_dyn_db}
    Let a conjunctive query $Q$ in $\expo \setminus \pol$ and a database $D = D^s \cup D^d$, where $D^d=\{R_1^d,\ldots, R_k^d\}$. For each dynamic body atom $R_i^d(\bm X)$ in $Q$, let $S_i^d$ be the result of the sub-query $Q_{R_i}$ of $Q$ whose body is the conjunction of the static body atoms of $Q$ and whose free variables are $\bm X$.
    The {\em maximum dynamic database} $D^d_{\max}$ for query $Q$ and database $D$ is $D^d_{\max}=\{S_1^d,\ldots,S_k^d\}$.
    \nop{
    Consider a conjunctive query $Q$ in $\expo \setminus \pol$ and a database $D = D^s \cup D^d$.
    For each dynamic atom $R^d(\bm X)$ in $Q$, let $Q_{R}$ denote the sub-query of $Q$ consisting only of its static body atoms, with $\bm X$ as free variables. 
    
    The maximum dynamic database $D^d_{\max}$ for query $Q$ and database $D$ is 
    the version of $D^d$ where each dynamic relation $R^d$ contains the result of the sub-query $Q_R$ evaluated on $D$.
    }
\end{definition}

By the definition of $\expo\setminus\pol$, each variable from a dynamic atom $R^d(\bm X)$ appears in at least one static atom of $Q_R$. The sub-query $Q_R$ identifies the complete set of $R^d$-tuples that might contribute to the result of $Q$; any other $R^d$-tuples have no impact on the result of $Q$. 

\begin{example}
\rm
    Consider the query $Q_3(A,B) = R^d(A), S^s(A,B), T^d(B)$ and the database $D$ from Figure~\ref{fig:transition_system_Q3} (left). The maximum dynamic database is obtained by evaluating the queries $Q_R(A) = S^s(A,B)$ and $Q_T(B) = S^s(A,B)$ on $D$. As a result, the maximum dynamic database for query $Q_3$ and database $D$ consists of $R^d = \{ a_1 \}$ and $T^d = \{ b_1, b_2 \}$.
\end{example}

We next establish the upper bound on the size of a maximum dynamic database. 

\begin{proposition}\label{prop:max_dynamic_database}
    For any conjunctive query $Q$ in $\expo \setminus \pol$ and database $D$ of size $N$,
    the maximum dynamic database $D^d_{\max}$ for query $Q$ and database $D$ has   $\bigO{N^{\rho^*(\stat(Q))}}$ size.
\end{proposition}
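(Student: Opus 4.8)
The plan is to bound the total size $|D^d_{\max}| = \sum_{i\in[k]} |S_i^d|$ by bounding each $|S_i^d|$ separately and then summing over the constant number $k$ of dynamic atoms. By Definition~\ref{def:maximum_dyn_db}, each $S_i^d$ is the result of the sub-query $Q_{R_i}$ whose atoms are exactly the static atoms of $Q$ and whose free variables are $\bm X_i$; equivalently, $S_i^d$ is the projection onto $\bm X_i$ of the natural join of all static atoms of $Q$. The atoms of $Q_{R_i}$ coincide with those of $\stat(Q)$, so the two queries share the same fractional edge cover number for any fixed target set of variables, since $\rho^*$ depends only on the set of atoms and the target set and not on which variables of a sub-query are declared free.

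First I would invoke the AGM projection bound from Section~\ref{sec:prelims}: for a full join and a target set $\bm F$ of its variables, the projection of the result onto $\bm F$ has worst-case size at most $N^{e}$ with $e = \rho^*_{\stat(Q)}(\bm F)$. Applying this with $\bm F = \bm X_i$ gives $|S_i^d| \le N^{\rho^*_{\stat(Q)}(\bm X_i)}$. To apply this bound I need $\bm X_i \subseteq \vars(\stat(Q))$, and this is exactly where the membership $Q \in \expo\setminus\pol$ enters: by Definition~\ref{def:class_exponential}, since $Q$ is not well-behaved it must satisfy the condition that every variable occurring in a dynamic atom also occurs in a static atom, so indeed $\bm X_i \subseteq \vars(\stat(Q))$.

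Next I would use monotonicity of the fractional edge cover number in its target set: since $\bm X_i \subseteq \vars(\stat(Q))$, the covering constraints for $\bm X_i$ form a subset of those for $\vars(\stat(Q))$, so the optimal cover for the larger set is feasible for the smaller one, giving $\rho^*_{\stat(Q)}(\bm X_i) \le \rho^*_{\stat(Q)}(\vars(\stat(Q))) = \rho^*(\stat(Q))$. Combining the two inequalities yields $|S_i^d| \le N^{\rho^*(\stat(Q))}$ for every $i\in[k]$, and summing while using that $k$ is a constant under data complexity gives $|D^d_{\max}| = \sum_{i\in[k]} |S_i^d| \le k \cdot N^{\rho^*(\stat(Q))} = \bigO{N^{\rho^*(\stat(Q))}}$.

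The only genuinely load-bearing step is the appeal to the $\expo\setminus\pol$ membership to guarantee $\bm X_i \subseteq \vars(\stat(Q))$: without it, a dynamic atom could carry a variable not covered by any static atom, the projection would not be expressible over the static atoms at all, and the maximum dynamic database would be either ill-defined or unbounded. Everything else is a routine application of the AGM projection bound together with the monotonicity of $\rho^*$, plus the standard observation that $\rho^*$ is insensitive to which variables of the sub-query are marked free.
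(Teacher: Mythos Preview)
Your proposal is correct and follows essentially the same approach as the paper: bound each $|S_i^d|$ via the AGM bound on the static sub-query and sum over the constantly many dynamic atoms. The only cosmetic difference is that the paper first bounds the full join of the static atoms by $N^{\rho^*(\stat(Q))}$ and then observes that projection onto $\bm X_i$ cannot increase the size, whereas you invoke the projection form of the AGM bound for the target set $\bm X_i$ and then use monotonicity of $\rho^*$ in its target set; these are two phrasings of the same inequality. You are in fact more explicit than the paper about why the hypothesis $Q\in\expo\setminus\pol$ is needed, namely to ensure $\bm X_i\subseteq\vars(\stat(Q))$ so that the projection is well-defined.
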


We next define the transition system for a query in $\expo \setminus \pol$ and a database $D$. 

\begin{definition}\label{def:transition_system}
    Consider a conjunctive query $Q$ in $\expo \setminus \pol$ and a database $D = D^s \cup D^d$.
    Let $D^d_{\max}$ denote the maximum dynamic database for query $Q$ and database $D$.

    A transition state for query $Q$ and database $D$ is a pair $(I, R)$, where $I \subseteq D^d_{\max}$ and $R$ is the result of $Q$ on $D^s \cup I$.
    The set $\calS$ of transition states for query $Q$ and database $D$ is:
    $\calS = \{\, (I, R) \mid I \subseteq D^d_{\max}, R = Q(D^s \cup I) \,\}$.

    A transition system for query $Q$ and database $D$ is a tuple $(\calS, s_{init}, \calU, \delta)$, where $\calS$ is the set of transition states for $Q$ and  $D$, $s_{init} \in \calS$ is the initial state, $\calU$ is the set of single-tuple updates to $D^d$, and $\delta: \calS \times \calU \rightarrow \calS$ is a transition function that determines the next state based on the current state and an update.
    The initial state $s_{init}$ pairs the database $D^d \cap D^d_{\max}$ and the result of $Q$ on $D^s \cup (D^d \cap D^d_{\max})$.
\end{definition}

We construct this transition system during the preprocessing step. For each single-tuple update to a dynamic relation, the transition system is used to move from the current state to another. If the update does not belong to the maximum dynamic database, it has no effect, and the system remains in the same state. The query result is then enumerated from the current state. Next, we demonstrate our evaluation strategy for the query $Q_3$.

\begin{example}\label{ex:transition_system_Q3}
\rm    
    Figure~\ref{fig:transition_system_Q3} shows the states and the transition system constructed for the query $Q_3$ and the database $D=\{R^d,T^d,S^s\}$. The maximum dynamic database $D^d_{\max}$ for $Q_3$ and $D$, consisting of $R^d = \{ a_1 \}$ and $T^d = \{ b_1, b_2 \}$, has 8 possible subsets. Each state corresponds to one of these subsets and the result of $Q_3$ evaluated on that subset and the static relation $S^s$.

    The initial state $s_{init}$, denoted by the blue arrow, corresponds to the database $D^d \cap D^d_{max}$, which includes $R^d = \{ a_1 \}$ and $T^d = \emptyset$, along with the associated empty query result.

    The state transitions correspond to single-tuple updates (inserts or deletes) to the dynamic relations $R^d$ and $T^d$.
    Only updates involving tuples from $D^d_{\max}$ are relevant to the result of $Q_3$, specifically updates involving     
    the value $a_1$ in $R^d$ and the values $b_1$ and $b_2$ in $S^d$. All other updates have no effect on the query result because the updated values do not appear in $S^s$. For simplicity, we show only the state transitions for relevant updates in Figure~\ref{fig:transition_system_Q3}.

    For the same query and a database of size $N$, the size of $D^d_{\max}$ is the total number of distinct $A$- and $B$-values in $S^s$, thus $\bigO{N}$. The query result at any state is a subset of $S^s$ and can be computed in $\bigO{N}$ time.
    The transition system captures all possible subsets of $D^d_{\max}$, thus it has $2^{|D^d_{\max}|}$ states in total. The number of transitions from any state is bounded by $2\cdot |D^d_{\max}|$ when considering only updates relevant to the query result. 
     
    When constructing the transition system in this case, we create a global index for $\bigO{2^N}$ states, with a lookup time of $\bigO{N}$. For each state, we also build a local index to support transitions to $\bigO{N}$ neighbouring states. The local index is constructed by performing a lookup in $\bigO{N}$ time in the global index for each state transition. With $\bigO{N}$ possible transitions, constructing the local index takes $\bigO{N^2}$ time. The local index is defined over $\bigO{N}$ states and provides constant-time lookups, see the computational model in Section~\ref{sec:prelims}.
    
    The preprocessing time is given by the number of states times the creation time per state, so $\bigO{2^{N} \cdot N^2}$. The update time is constant since for each single-tuple update to a dynamic relation, we find in constant time the corresponding transition and move to the target state or ignore the update if it does not match any transition. We enumerate the tuples in the current state trivially with constant delay, as the query result is already materialised.
\end{example}

\nop{
}

\begin{figure}[t]
    \begin{minipage}{0.22\textwidth}
    \centering
    \small
    \begin{tikzpicture}[xscale=1, yscale=0.75]
    
    \node at (0.5,1) {Input database};
    
    \node (tableNameR) at (0,0) {$R^d$};
    
    \node (tableR) [below=-0.2cm of tableNameR] {
    \begin{tabular}{|c|}
    \hline
    $A$ \\
    \hline
    $a_1$ \\
    \hline
    \end{tabular}
    };
    
    \node (tableNameS) at (1,0) {$T^d$};
    
    \node (tableS) [below=-0.2cm of tableNameS] {
    \begin{tabular}{|c|}
    \hline
    $B$ \\
    \hline
    $b_3$\\
    \hline
    \end{tabular}
    };
    
    \node (tableNameT) at (0.5, -2) {$S^s$};
    
    \node (tableT) [below=-0.2cm of tableNameT] {
    \begin{tabular}{|c|c|}
    \hline
    $A$ & $B$ \\
    \hline
    $a_1$ & $b_1$ \\
    \hline
    $a_1$ & $b_2$ \\
    \hline
    \end{tabular}
    };
    
    \end{tikzpicture}
    \end{minipage}
    \begin{minipage}{0.77\textwidth}
    \centering
    \begin{tikzpicture}[>=Stealth, node distance=1cm and 1cm, xscale=3.2, yscale=1.55]
    
    \node[inner sep=0cm] at (-0.75,0) (state00) {
    \scriptsize
    \setlength{\tabcolsep}{3pt} 
    \begin{minipage}{1.8em}
        \centering
        \begin{tabular}{|c}
        \hline\\[-0.2cm]
        $A$ \\[0.05cm]
        \hline\\[-0.15cm]
         {\color{white} $x_i$} \\[0.1cm] 
        \hline
        \end{tabular}
    \end{minipage}%
    \begin{minipage}{1.8em}
        \centering
        \begin{tabular}{|c|}
        \hline\\[-0.2cm]
        $B$ \\[0.05cm]
        \hline\\[-0.15cm]
         {\color{white} $x_i$} \\[0.1cm] 
        \hline
        \end{tabular}
    \end{minipage}%
    \begin{minipage}{2.1em}
        \centering
        \begin{tabular}{c|}
        \hline\\[-0.2cm]
        $Q_3$ \\[0.05cm]
        \hline\\[-0.15cm]
        {\color{white} $x_2$} \\[0.1cm]
        \hline
        \end{tabular}
    \end{minipage}
    };
    
    \node[inner sep=0cm] at (0,-1) (state0_2) {
    \scriptsize
    \setlength{\tabcolsep}{3pt}
    \begin{minipage}{1.8em}
        \centering
        \begin{tabular}{|c}
        \hline\\[-0.2cm]
        $A$ \\[0.05cm]
        \hline\\[-0.15cm]
         {\color{white} $x_i$} \\[0.1cm] 
        \hline
        \end{tabular}
    \end{minipage}%
    \begin{minipage}{1.8em}
        \centering
        \begin{tabular}{|c|}
        \hline\\[-0.2cm]
        $B$ \\[0.05cm]
        \hline\\[-0.15cm]
        $b_2$ \\[0.1cm]
        \hline
        \end{tabular}
    \end{minipage}%
    \begin{minipage}{2.1em}
        \centering
        \begin{tabular}{c|}
        \hline\\[-0.2cm]
        $Q_3$ \\[0.05cm]
        \hline\\[-0.15cm]
        {\color{white} $x_2$} \\[0.1cm]
        \hline
        \end{tabular}
    \end{minipage}
    };
    
    \node[inner sep=0cm] at (0,1) (state10) {
    \scriptsize
    \setlength{\tabcolsep}{3pt}
    \begin{minipage}{1.8em}
        \centering
        \begin{tabular}{|c}
        \hline\\[-0.2cm]
        $A$ \\[0.05cm]
        \hline\\[-0.15cm]
        $a_1$ \\[0.1cm]
        \hline
        \end{tabular}
    \end{minipage}%
    \begin{minipage}{1.8em}
        \centering
        \begin{tabular}{|c|}
        \hline\\[-0.2cm]
        $B$ \\[0.05cm]
        \hline\\[-0.15cm]
        {\color{white} $x_i$} \\[0.1cm] 
        \hline
        \end{tabular}
    \end{minipage}%
    \begin{minipage}{2.1em}
        \centering
        \begin{tabular}{c|}
        \hline\\[-0.2cm]
        $Q_3$ \\[0.05cm]
        \hline\\[-0.15cm]
        {\color{white} $x_2$} \\[0.1cm]
        \hline
        \end{tabular}
    \end{minipage}
    };
    
    \node[inner sep=0cm] at (0,0) (state01) {
    \scriptsize
    \setlength{\tabcolsep}{3pt}
    \begin{minipage}{1.8em}
        \centering
        \begin{tabular}{|c}
        \hline\\[-0.2cm]
        $A$ \\[0.05cm]
        \hline\\[-0.15cm]
        {\color{white} $x_i$} \\[0.1cm] 
        \hline
        \end{tabular}
    \end{minipage}%
    \begin{minipage}{1.8em}
        \centering
        \begin{tabular}{|c|}
        \hline\\[-0.2cm]
        $B$ \\[0.05cm]
        \hline\\[-0.15cm]
        $b_1$ \\[0.1cm]
        \hline
        \end{tabular}
    \end{minipage}%
    \begin{minipage}{2.1em}
        \centering
        \begin{tabular}{c|}
        \hline\\[-0.2cm]
        $Q_3$ \\[0.05cm]
        \hline\\[-0.15cm]
        {\color{white} $x_2$} \\[0.1cm]
        \hline
        \end{tabular}
    \end{minipage}
    };
    
    \node[inner sep=0cm] at (1,1) (state11) {
    \scriptsize
    \setlength{\tabcolsep}{3pt}
    \begin{minipage}{1.8em}
        \centering
        \begin{tabular}{|c}
        \hline\\[-0.2cm]
        $A$ \\[0.05cm]
        \hline\\[-0.15cm]
        $a_1$ \\[0.1cm]
        \hline
        \end{tabular}
    \end{minipage}%
    \begin{minipage}{1.8em}
        \centering
        \begin{tabular}{|c|}
        \hline\\[-0.2cm]
        $B$ \\[0.05cm]
        \hline\\[-0.15cm]
        $b_1$ \\[0.1cm]
        \hline
        \end{tabular}
    \end{minipage}%
    \begin{minipage}{3.2em}
        \centering
        \begin{tabular}{c|}
        \hline\\[-0.2cm]
        $Q_3$ \\[0.05cm]
        \hline\\[-0.15cm]
        $a_1, b_1$ \\[0.1cm]
        \hline
        \end{tabular}
    \end{minipage}
    };
    
    \node[inner sep=0cm] at (1,-1) (state0_12) {
    \scriptsize
    \setlength{\tabcolsep}{3pt}
    \begin{minipage}{1.8em}
        \centering
        \begin{tabular}{|c}
        \hline\\[-0.2cm]
        $A$ \\[0.05cm]
        \hline\\[-0.15cm]
        {\color{white} $x_i$} \\
        {\color{white} $x_i$} \\[0.1cm] 
        \hline
        \end{tabular}
    \end{minipage}%
    \begin{minipage}{1.8em}
        \centering
        \begin{tabular}{|c|}
        \hline\\[-0.2cm]
        $B$ \\[0.05cm]
        \hline\\[-0.15cm]
        $b_1$ \\
        $b_2$ \\[0.1cm]
        \hline
        \end{tabular}
    \end{minipage}%
    \begin{minipage}{3.2em}
        \centering
        \begin{tabular}{c|}
        \hline\\[-0.2cm]
        $Q_3$ \\[0.05cm]
        \hline\\[-0.15cm]
        {\color{white}$a_1, b_1$} \\
        {\color{white}$a_1, b_2$} \\[0.1cm]
        \hline
        \end{tabular}
    \end{minipage}
    };
    
    \node[inner sep=0cm] at (1,0) (state1_2) {
    \scriptsize
    \setlength{\tabcolsep}{3pt}
    \begin{minipage}{1.8em}
        \centering
        \begin{tabular}{|c}
        \hline\\[-0.2cm]
        $A$ \\[0.05cm]
        \hline\\[-0.15cm]
        $a_1$ \\[0.1cm]
        \hline
        \end{tabular}
    \end{minipage}%
    \begin{minipage}{1.8em}
        \centering
        \begin{tabular}{|c|}
        \hline\\[-0.2cm]
        $B$ \\[0.05cm]
        \hline\\[-0.15cm]
        $b_2$ \\[0.1cm]
        \hline
        \end{tabular}
    \end{minipage}%
    \begin{minipage}{3.2em}
        \centering
        \begin{tabular}{c|}
        \hline\\[-0.2cm]
        $Q_3$ \\[0.05cm]
        \hline\\[-0.15cm]
        $a_1, b_2$ \\[0.1cm]
        \hline
        \end{tabular}
    \end{minipage}
    };

    \node[inner sep=0cm] at (1.85,0) (state1_12) {
    \scriptsize
    \setlength{\tabcolsep}{3pt} 
    \begin{minipage}{1.8em}
        \centering
        \begin{tabular}{|c}
        \hline\\[-0.2cm]
        $A$ \\[0.05cm]
        \hline\\[-0.15cm]
        $a_1$ \\
        \\[0.1cm]
        \hline
        \end{tabular}
    \end{minipage}%
    \begin{minipage}{1.8em}
        \centering
        \begin{tabular}{|c|}
        \hline\\[-0.2cm]
        $B$ \\[0.05cm]
        \hline\\[-0.15cm]
        $b_1$ \\
        $b_2$ \\[0.1cm]
        \hline
        \end{tabular}
    \end{minipage}%
    \begin{minipage}{3.2em}
        \centering
        \begin{tabular}{c|}
        \hline\\[-0.2cm]
        $Q_3$ \\[0.05cm]
        \hline\\[-0.15cm]
        $a_1, b_1$ \\
        $a_1, b_2$ \\[0.1cm]
        \hline
        \end{tabular}
    \end{minipage}
    };

    
    \draw[->, thick, blue] (-0.5,1.1) -- ([yshift=3pt]state10.west);
    
    \draw[->] ([xshift=2pt]state00.north) -- ([yshift=-2pt, xshift=1pt]state10.west) node[midway, below] {\scriptsize $+a_1$};
    \draw[<-] ([xshift=-1pt]state00.north) -- ([yshift=2pt, xshift=1pt]state10.west) node[midway, above] {\scriptsize $-a_1$};
    
    \draw[->] ([xshift=2pt]state00.south) -- ([yshift=2pt, xshift=1pt]state0_2.west) node[midway, above] {\scriptsize $+b_2$};
    \draw[<-] ([xshift=-1pt]state00.south) -- ([yshift=-2pt, xshift=1pt]state0_2.west) node[midway, below] {\scriptsize $-b_2$};
    
    \draw[->] ([yshift=2pt]state00.east) -- ([yshift=2pt, xshift=1pt]state01.west) node[midway, above] {\scriptsize $+b_1$};
    \draw[<-] ([yshift=-2pt]state00.east) -- ([yshift=-2pt, xshift=1pt]state01.west) node[midway, below] {\scriptsize $-b_1$};
    
    \draw[->, bend left] ([yshift=2pt]state10.east) to node[pos=0.4, below] {\scriptsize $+b_1$} ([yshift=2pt, xshift=1pt]state11.west);
    \draw[<-, bend left] ([yshift=5pt]state10.east) to node[pos=0.4, above] {\scriptsize $-b_1$} ([yshift=5pt, xshift=1pt]state11.west);
    
    \draw[->] ([xshift=8pt]state10.south) -- ([yshift=8pt, xshift=1pt]state1_2.west) node[pos=0.8, above] {\scriptsize $+b_2$};
    \draw[<-] ([xshift=4pt]state10.south) -- ([yshift=4pt, xshift=1pt]state1_2.west) node[pos=0.7, below] {\scriptsize $-b_2$};
    
    \draw[->] ([xshift=2pt]state01.north) -- ([yshift=-2pt, xshift=1pt]state11.west) node[pos=0.3, below] {\scriptsize $+a_1$};
    \draw[<-] ([xshift=-2pt]state01.north) -- ([yshift=2pt, xshift=1pt]state11.west) node[pos=0.1, above] {\scriptsize $-a_1$};
    
    \draw[->] ([xshift=2pt]state01.south) -- ([yshift=2pt, xshift=1pt]state0_12.west)   node[pos=0.3, above] {\scriptsize $+b_2$};
    \draw[<-] ([xshift=-2pt]state01.south) -- ([yshift=-2pt, xshift=1pt]state0_12.west) node[pos=0.15, below] {\scriptsize $-b_2$};
    
    \draw[->] ([xshift=8pt]state0_2.north) -- ([yshift=-8pt, xshift=1pt]state1_2.west) node[pos=0.85, below] {\scriptsize $+a_1$};
    \draw[<-] ([xshift=4pt]state0_2.north) -- ([yshift=-4pt, xshift=1pt]state1_2.west) node[pos=0.65, above] {\scriptsize $-a_1$};
    
    \draw[->, bend right] ([yshift=-2pt]state0_2.east) to node[pos=0.4, above] {\scriptsize $+b_1$} ([yshift=-2pt, xshift=1pt]state0_12.west);
    \draw[<-, bend right] ([yshift=-5pt]state0_2.east) to node[pos=0.4, below] {\scriptsize $-b_1$} ([yshift=-5pt, xshift=1pt]state0_12.west);
    
    \draw[->] ([yshift=2pt]state11.east) -- ([xshift=1pt]state1_12.north) node[midway, above] {\scriptsize $+b_2$};
    \draw[<-] ([yshift=-2pt]state11.east) -- ([xshift=-2pt]state1_12.north) node[midway, below] {\scriptsize $-b_2$};
    
    \draw[->] ([yshift=2pt]state1_2.east) -- ([yshift=2pt, xshift=1pt]state1_12.west) node[midway, above] {\scriptsize $+b_1$};
    \draw[<-] ([yshift=-2pt]state1_2.east) -- ([yshift=-2pt, xshift=1pt]state1_12.west) node[midway, below] {\scriptsize $-b_1$};
    
    \draw[->] ([yshift=2pt]state0_12.east) -- ([xshift=-2pt]state1_12.south) node[midway, above] {\scriptsize $+a_1$};
    \draw[<-] ([yshift=-2pt]state0_12.east) -- ([xshift=1pt]state1_12.south) node[midway, below] {\scriptsize $-a_1$};
    
    \end{tikzpicture}
    \end{minipage}
    \caption{
    Left: Input database with dynamic relations $R^d(A)$ and $T^d(B)$ and static relation $S^s(A,B)$.
    Right: The transition system built from the input database and the query $Q_3(A,B) = R^d(A), S^s(A,B), T^d(B)$.
    Each state (box) consists of a dynamic database (first two columns), showing the possible contents of $R^d$ and $T^d$, restricted to the $A$- and $B$-values from $S^s$, respectively; and the materialised result of $Q_3$ on that dynamic database and $S^s$ (third column). The initial state  is derived from the input database (blue arrow).
    Each transition (arrow) denotes an insertion $(+a)$ or deletion $(-a)$ of an $A$-value $a$ in $R^d$, or an insertion $(+b)$ or deletion $(-b)$ of a $B$-value $b$ in $T^d$. 
    }
    \label{fig:transition_system_Q3}
\end{figure}

Consider now an arbitrary query from $\expo \setminus \pol$ and a database $D$ of size $N$.
The constructed transition system consists of $2^{p}$ states, where $p = |D^d_{\max}| = \bigO{N^{\rho^*(\stat(Q))}}$, per Definition~\ref{def:transition_system} and Proposition~\ref{prop:max_dynamic_database}.
For each state, the corresponding query result can be computed in $\bigO{N^{\rho^*(\stat(Q))}}$ time using a worst-case optimal join algorithm~\cite{Ngo:JACM:18}. 

We next construct the transitions for each state.
Each state has at most $2p$ transitions to other states when considering only updates relevant to the query result. 
The global index over the exponentially many states takes $2^p$ time to construct and requires time proportional to  $p$ for a lookup. 
For each state, it takes time quadratic in $p$ to construct a local index that comprises at most $2p$ neighbouring states using the global index with lookup time proportional to $p$. Overall, constructing the transition system takes time $2^p \cdot p^2$, where $p = \bigO{N^{\rho^*(\stat(Q))}}$. This matches the complexity in Theorem~\ref{thm:dichotomy_possible}.    
\section{Lower Bound for Queries Outside $\lin$}
\label{sec:lower_bounds}
In this section, we outline the proof of the lower bound in 
Theorem~\ref{thm:dichotomy_linear}.
The proof consists of two parts.
In the first part, we give a lower bound on the  complexity of evaluating the simple
queries $Q_{RST}() = R^d(A),S^s(A,B), T^d(B)$ 
and
$Q_{ST}(A) = S^s(A,B),T^d(B)$.
None of these queries is well-behaved and, therefore, not contained in $\lin$:
the first one has the path $(A,B)$ that connects 
the dynamic body atoms $R^d(A)$ and $T^d(B)$ but is not body-safe; 
the second one has the path $(B,A)$ that connects 
the dynamic body atom $T^d(B)$ with the head atom $Q_{ST}(A)$ but is not head-safe. 
In the second part of the proof, we show a lower bound 
on the  complexity of evaluating arbitrary conjunctive queries that are not contained in $\lin$ and 
do not have repeating
relation symbols.
The argument is as follows.
Consider a conjunctive query $Q \notin \lin$ that does not have repeating relation symbols. 
By the definition of $\lin$, this means that 
(1) $Q$ is not free-connex $\alpha$-acyclic, or 
(2) it has a path connecting two dynamic body atoms that is not body-safe, or 
(3) it has a path connecting a dynamic body atom with the head atom that is not head-safe.
In Case (1), we cannot achieve
constant-delay enumeration of the result after linear preprocessing time
(even without processing any update), unless the Boolean Matrix 
Multiplication conjecture fails~\cite{BaganDG07}. 
In Case (2), we reduce the evaluation of $Q_{RST}$ 
to the evaluation of $Q$.
In Case (3), 
we reduce the evaluation of $Q_{ST}$ to the evaluation of $Q$.
The latter two reductions transfer the lower bound 
for $Q_{RST}$ and $Q_{ST}$  to $Q$.
\nop{and are standard (see, e.g., \cite{BerkholzKS17}).}
In the following, we outline the lower bound proofs for $Q_{RST}$ and 
$Q_{ST}$ and defer further details to Appendix~\ref{app:lower_bounds}.

The lower bound for $Q_{RST}$ is conditioned on the 
Online Vector-Matrix-Vector Multiplication (OuMv) conjecture, which is 
implied by the Online Matrix-Vector Multiplication (OMv) conjecture~\cite{Henzinger:OMv:2015} 
(Section~\ref{sec:prelims}).
The proof of the lower bound is inspired by prior work, which
reduces the OuMv problem to the evaluation of the variant 
of $Q_{RST}$ where
all relations are dynamic~\cite{BerkholzKS17}.
We explain the differences of our reduction to prior work in terms of 
construction and implication.  
The reduction in prior work starts with an empty database 
and encodes the matrix 
of the OuMv problem into the relation $S$ using updates. 
In our case, it is not possible to do this encoding using updates, since the relation is static. Instead, we do the encoding
before the preprocessing stage of
the evaluation algorithm for $Q_{RST}$.
Since the preprocessing procedure of the evaluation algorithm for $Q_{RST}$
is executed on a non-empty database, we need to put a bound on the preprocessing time. 
Hence, while prior work establishes  a lower bound on the update time and enumeration delay 
 regardless of the preprocessing time, our reduction 
 implies a lower bound  on the combination of
 preprocessing time, update time, and 
enumeration delay:

\begin{proposition}
\label{prop:Q_RST_hard}
There is no algorithm that evaluates the conjunctive query 
$Q_{RST}()$ $=$ $R^d(A),$ $S^s(A,B),$ $T^d(B)$ 
with 
$\bigO{N^{3/2-\gamma}}$ preprocessing time, 
$\bigO{N^{1/2-\gamma}}$ update time, and 
$\bigO{N^{1/2-\gamma}}$ enumeration delay
for any $\gamma >0$, where $N$ is the database size, unless the OuMv conjecture fails.
\end{proposition}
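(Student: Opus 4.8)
The plan is to reduce the \OuMv problem to the dynamic evaluation of $Q_{RST}()$. Given an \OuMv instance consisting of an $n \times n$ Boolean matrix $M$ and vector pairs $(u_1,v_1),\dots,(u_n,v_n)$, I first encode $M$ into the static relation by setting $S^s = \{(a,b) \mid M_{ab}=1\}$. Since $S$ is static, this encoding has to happen \emph{before} the preprocessing stage; the dynamic relations start empty, i.e.\ $R^d = T^d = \emptyset$, and I run the preprocessing of the hypothetical fast algorithm on this initial database. Throughout the reduction we have $|S^s| \le n^2$ and $|R^d|, |T^d| \le n$, so the current database size always satisfies $N = \bigO{n^2}$.

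Then, for each round $i = 1,\dots,n$, I update $R^d$ to hold the characteristic set $\{a \mid u_i[a]=1\}$ and $T^d$ to hold $\{b \mid v_i[b]=1\}$, by deleting the tuples carried over from round $i-1$ and inserting the new ones; this costs $\bigO{n}$ single-tuple updates per round, hence $\bigO{n^2}$ updates over all rounds. After these updates, the Boolean answer of $Q_{RST}()$ on $S^s$ together with the current $R^d, T^d$ equals $\bigvee_{a,b} u_i[a] \wedge M_{ab} \wedge v_i[b] = u_i^{\top} M v_i$. Because the query is Boolean, I recover this bit by invoking the enumeration procedure once: if it emits a tuple within the delay bound the answer is $1$, and if it signals the end of enumeration within that bound the answer is $0$. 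Thus a single enumeration step per round suffices to output $u_i^{\top} M v_i$ before the next vector pair arrives.

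It remains to sum the costs. Assume for contradiction that some algorithm evaluates $Q_{RST}$ with $\bigO{N^{3/2-\gamma}}$ preprocessing time, $\bigO{N^{1/2-\gamma}}$ update time, and $\bigO{N^{1/2-\gamma}}$ enumeration delay for some constant $\gamma > 0$. Using $N = \bigO{n^2}$, the one-off preprocessing costs $\bigO{n^{3-2\gamma}}$; the $\bigO{n^2}$ updates cost $\bigO{n^2} \cdot \bigO{n^{1-2\gamma}} = \bigO{n^{3-2\gamma}}$; and the $n$ enumeration steps cost $n \cdot \bigO{n^{1-2\gamma}} = \bigO{n^{2-2\gamma}}$. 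The overall running time is therefore $\bigO{n^{3-2\gamma}}$, which solves \OuMv in time $\bigO{n^{3-\gamma'}}$ with $\gamma' = 2\gamma > 0$, contradicting Conjecture~\ref{conj:OuMV}.

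The main obstacle—and the precise point where this argument departs from the all-dynamic reduction of prior work—is that the encoding of $M$ into the static relation $S^s$ must be completed before preprocessing rather than streamed in as updates during the online phase. This is exactly why the statement has to \emph{constrain the preprocessing time} (here by $\bigO{N^{3/2-\gamma}}$): on an initially empty database the preprocessing would be trivial and one could only charge update time and enumeration delay. The careful bookkeeping is to confirm that, with $N = \bigO{n^2}$, the allotted preprocessing budget $N^{3/2-\gamma}$ is $\smallO{n^3}$, so that it is absorbed into the $\bigO{n^{3-2\gamma}}$ total without weakening the contradiction.
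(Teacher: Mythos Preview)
Your proof is correct and follows essentially the same approach as the paper: encode $M$ into the static relation $S^s$ before preprocessing, then in each of the $n$ rounds use $\bigO{n}$ single-tuple updates to load $(u_i,v_i)$ into $R^d,T^d$ and read off $u_i^\top M v_i$ via one enumeration step, yielding total time $\bigO{n^{3-2\gamma}}$ in contradiction to the \OuMv conjecture. Your closing paragraph on why the static encoding forces a bound on preprocessing time matches the paper's own remark about how this reduction differs from the all-dynamic case.
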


\begin{proof}[Proof Sketch]
\nop{Prior work reduces the OuMv problem to the evaluation of the variant 
of $Q_{RST}$ where
all relations are dynamic~\cite{BerkholzKS17}.
The reduction starts with an empty database 
and encodes the matrix 
of the OuMv problem into the relation $S$ using updates. 
In our case, it is not possible to do this encoding using updates, since the relation is static. Instead, we do the encoding
before the preprocessing stage of
the evaluation algorithm for $Q_{RST}$.

We explain the reduction in more detail. 
}
Assume that there is an algorithm $\calA$ that 
evaluates $Q_{RST}$ with 
$\bigO{N^{1/2-\gamma}}$ update time and 
enumeration delay after $\bigO{N^{3/2-\gamma}}$ preprocessing time,  
for some $\gamma >0$. 
Consider an $n\times n$ matrix $M$ and $n$ pairs $(u_r,v_r)$ of 
$n$-dimensional vectors that serve as input to the OuMv problem.
We first encode $M$ into the relation $S$ in time $\bigO{n^2}$,
which leads to a database of size $\bigO{n^2}$.
Then, in each round $r \in [n]$, we  encode the vectors $u_r$
and $v_r$ into $R$ and respectively  $T$ using $\bigO{n}$ updates and trigger the enumeration procedure of 
$\calA$ to obtain from $Q_{RST}$ the result of $u_rMv_r$. 
This takes $\bigO{n(n^2)^{1/2-\gamma}}
= \bigO{n^{2-2\gamma}}$ time. After 
$n$ rounds, we use overall 
$\bigO{n^{3-2\gamma}}$ time. This means
that we solve the OuMv problem in sub-cubic time,
which contradicts the OuMv conjecture. 
\end{proof}

The reduction of the OMv problem to the evaluation of the query 
$Q_{ST}(A)$ $=$ $S^s(A, B),$ $T^d(B)$ 
is similar to the above reduction. 
We encode the matrix $M$ into the relation $S$ before the preprocessing stage  and encode each incoming vector $v_r$ into $T$ using updates. 

\section{Outlook: Tractability Beyond $\expo$}
\label{sec:outlook}

This work explores the tractability of conjunctive queries  over static and dynamic relations. 
The largest class of tractable queries put forward is $\expo$. Yet a characterisation of {\em all} tractable 
queries remains open.

In the following, we discuss the evaluation of queries outside $\expo$. 
Let a conjunctive query $Q$. The {\em reduced dynamic sub-query} of $Q$ is obtained from $Q$ by omitting all static atoms and all of their variables.\footnote{In contrast, the dynamic sub-query of $Q$ as defined in Section~\ref{sec:prelims} retains all variables that appear in at least one dynamic atom.}   
An immediate observation is that queries whose reduced dynamic sub-query is not $q$-hierarchical are not tractable. This is implied by the intractability of non-$q$-hierarchical queries in the all-dynamic setting~\cite{BerkholzKS17}. 
One example query whose reduced sub-query is not $q$-hierarchical is
$Q_6(A,B)$ $=$ $R^d(A),$ $S^d(A,B),$ $T^d(B,C),$ $U^s(C)$ from Figure~\ref{fig:queries}.
Its reduced dynamic sub-query is $Q_6'(A,B)$ $=$ $R^d(A),$ $S^d(A,B),$ $T^d(B)$. 
Any dynamic evaluation strategy for $Q_6$, where the variable $C$ is fixed to an arbitrary constant, translates into a dynamic evaluation strategy for $Q_6'$.   
This means that any complexity lower bound for the dynamic evaluation of $Q_6'$
is also a lower bound for $Q_6$.

A question is whether all queries, whose reduced dynamic sub-queries are $q$-hierarchical, are tractable. 
We discuss two such queries from Figure~\ref{fig:queries} that are not in $\expo$: 
$Q_4(A,B,C)$ $=$ $R^d(A,B),$ $S^d(A,C),$ $T^s(B,C)$ and 
$Q_5(B,C)$ $=$ $R^d(A,B),$ $S^d(A,C),$ $T^s(B,C)$.
The evaluation strategy from Example~\ref{ex:transition_system_Q3} can be easily extended to $Q_4$. 
Similar to Example~\ref{ex:transition_system_Q3}, we create a transition system where each state stores 
$(B,C)$-values and assign each $A$-value $a$ that is common to $R^d$ and $S^d$ to the state that stores the $(B,C)$-values paired with $a$ in the result.
Any update to $R^d$ or $S^d$ changes the assignment of at most one $A$-value. 
At any time, we can enumerate the query result by iterating over the $A$-values and enumerating for each of them the tuples in their corresponding state. So $Q_4$ is tractable, albeit not in $\expo$.

The query $Q_5(B,C)$ differs from $Q_4$ in that the variable $A$ is bound. The above approach for $Q_4$ does not allow for constant-delay enumeration of the result of $Q_5$ since distinct $A$-values might be assigned to distinct states that share $(B,C)$-values. Filtering out duplicates can however incur a non-constant enumeration delay. 
Therefore, our approach for $\expo$ cannot evaluate $Q_5$ tractably.

\bibliographystyle{abbrv}
\bibliography{bibliography}

\appendix
\section{Missing Details in Section~\ref{sec:intro}}
\label{app:intro}
\subsection{Proof of Theorem \ref{thm:dichotomy_linear}}
{\bf Theorem~\ref{thm:dichotomy_linear}.}
{\it 
Let a conjunctive query $Q$ and a database of size $N$. 
\begin{itemize}
\item If $Q$ is in $\lin$, then it can be evaluated with $\bigO{N}$ preprocessing time,
$\bigO{1}$ update time, and $\bigO{1}$ enumeration delay. 
\item If $Q$ is not in $\lin$ and does not have repeating relation symbols, then 
it cannot be maintained with  $\bigO{N}$ preprocessing time, $\bigO{1}$
update time, and $\bigO{1}$ enumeration delay, 
unless the Online Matrix-Vector Multiplication or the Boolean Matrix-Matrix Multiplication conjecture fail.  
\end{itemize}
}

\medskip
Consider a conjunctive query $Q$ and a database of size $N$.
 
We prove the first statement of Theorem~\ref{thm:dichotomy_linear}.  
Assume that $Q$ is in $\lin$. 
Let $\omega$ be a well-structured VO for $Q$
with minimal preprocessing width. From Proposition~\ref{prop:acyclic_preprocessing_width}, 
it follows that $\fw(\omega) =1$. Proposition~\ref{prop:safe_view_trees} implies that
 $Q$ has a safe rewriting  with $\bigO{N}$ materialisation time.
Using Proposition~\ref{prop:safe_rewriting_complexity}, we conclude that
$Q$ can be evaluated with $\bigO{N}$ preprocessing time, $\bigO{1}$ update time, and $\bigO{1}$ enumeration delay.  
     
The second statement of Theorem~\ref{thm:dichotomy_linear} follows from Proposition~\ref{prop:lower_bound_dichotomy} in Appendix~\ref{app:lower_bounds}.

\subsection{Proof of Theorem \ref{thm:polynomial}}
{\bf Theorem~\ref{thm:polynomial}.}
{\it 
    Let a conjunctive query $Q \in \pol$,  the preprocessing width $\fw$ of $Q$, and a database of size $N$. 
    The query $Q$  can be maintained with $\bigO{N^{\fw}}$ preprocessing time, $\bigO{1}$ update time, and $\bigO{1}$ enumeration delay. 
}

\medskip

Consider a conjunctive query $Q \in \pol$ with preprocessing width $\fw$  and a database of size $N$.
Let $\omega$ be a well-structured VO for $Q$
with minimal preprocessing width. It holds $\fw = \fw(\omega)$. 
It follows from Proposition~\ref{prop:safe_view_trees} that
 $Q$ has a safe rewriting  with $\bigO{N^{\fw}}$ materialisation time.
Proposition~\ref{prop:safe_rewriting_complexity}, implies that 
$Q$ can be evaluated with $\bigO{N^{\fw}}$ preprocessing time, $\bigO{1}$ update time, and $\bigO{1}$ enumeration delay.

\subsection{Proof of Theorem \ref{thm:dichotomy_possible}}
{\bf Theorem~\ref{thm:dichotomy_possible}.}
{\it 
Let a conjunctive query $Q$, the static sub-query $\stat(Q)$ of $Q$,  the fractional edge cover number $\rho^*(Q)$ of $Q$, $p = \bigO{N^{\rho^*(\stat(Q))}}$, and a database whose static relations have size $N$. 
If $Q$ is in $\expo$, then it can be maintained with $2^p\cdot p^2$ preprocessing time, $\bigO{1}$ update time, and $\bigO{1}$ enumeration delay. 
}

\medskip

The theorem follows Proposition~\ref{prop:max_dynamic_database} and the explanations in Section~\ref{sec:evaluation_expo}.
\section{Proof of Proposition \ref{prop:safe_rewriting_complexity}}
{\bf Proposition~\ref{prop:safe_rewriting_complexity}}
{\it 
Let a conjunctive query $Q$ and a database of size $N$.
If $Q$ has a safe rewriting with $f(N)$ materialisation time for some function $f$, then $Q$ can be evaluated with $f(N)$ preprocessing time, $\bigO{1}$ update time, and $\bigO{1}$ enumeration delay. 
}

\medskip

Let $\calT$ be a safe rewriting for $Q$. 
The preprocessing time is the materialisation  time for $\calT$.

To enumerate the result of $Q$, we nest the enumeration procedures for the connected components of $Q$, concatenating their result tuples.
For each connected component $C$, the enumeration procedure traverses the tree $T \in \calT$ containing all atoms from $C$ in a top-down manner.
The enumeration property of $\calT$ guarantees that there exists a subtree $T'$ of $T$ such that 
the root of $T'$  is the same as the root of $T$ and the set of free variables of the views in 
$T'$ consists of the free variables of $C$.

We enumerate the result of $C$ by traversing the subtree $T'$ in preorder. 
At each view $V(\bm X)$, we fix the values of the variables in $\bm Y$, where $\bm Y$ is the set of free variables of the ancestor views of $V$. 
We retrieve in constant time a tuple of values over $\bm X \setminus \bm Y$ from $V$ for the given $\bm Y$-value. 
After visiting all views once, we construct the first complete result tuple for $C$ and report it. 
We continue iterating over the remaining distinct values over $\bm X \setminus \bm Y$ in the last visited view $V$, reporting new tuples with constant delay.
After exhausting all values from $V$, we backtrack and repeat the enumeration procedure for the next $\bm Y$-value. The enumeration stops once all views from the subtree $T'$ are exhausted.
Given that all views are calibrated bottom-up but the enumeration proceeds top-down, the procedure only visits those tuples that appear in the  result, thus ensuring constant enumeration delay.

We propagate a constant-sized update through a tree in a bottom-up manner, maintaining each view on the path from the affected relation to the root.
From the update property of the safe rewriting $\calT$, computing the delta of any join view involves only constant-time lookups in the sibling views of the child carrying the update. The size of the delta also remains constant.
Computing the delta of a projection view also requires a constant-time projection of its incoming update.
Since an update to one relation affects one tree of $\calT$, propagating an update through $\calT$ takes constant time. 

\section{Missing Details in Section \ref{sec:query_classes}}
\label{app:query_classes}

\subsection{Proof of Proposition \ref{prop:safety_properites_poly_time_check}}
{\bf Proposition~\ref{prop:safety_properites_poly_time_check}.}
{\it For any conjunctive query $Q$ with $n$ variables and $m$ atoms, we can decide in $O(n^2 \cdot m^2)$ time whether $Q$ is well-behaved.
}

\medskip
We first explain how we can check that every path connecting two dynamic body atoms in $Q$ is body-safe.
For each pair $(R(\bm X), S(\bm Y))$ of dynamic body atoms of $Q$,
we first construct the Gaifman graph of the hypergraph of $Q$ without the variables $\bm C = \bm X \cap \bm Y$. The graph contains $\bigO{n}$ nodes and $\bigO{n^2}$ edges. 
We next choose any $x \in \bm X \setminus \bm C$ and $y \in \bm Y \setminus \bm C$ and check
in $\bigO{n^2}$ time
 if $x$ is reachable from $y$ using the Breadth-First Search algorithm; 
 if so, $Q$ has a path 
 connecting $R(\bm X)$ and $S(\bm Y)$ that is not body-safe.
We repeat this procedure for every pair of dynamic body atoms, which gives the total cost of $\bigO{n^2 \cdot m^2}$.

We now show how to check that every path connecting a dynamic body atom with the head atom is head-safe.
For each dynamic body atom $R(\bm X)$, 
we construct the Gaifman graph of $Q$ without the free variables in $\bm X$.
For each atom containing a free variable $y$, 
we choose any $x \in \bm X \setminus \free(Q)$ and check if $x$ is reachable from $y$ in the graph; 
if so, $Q$ has path connecting $R(\bm X)$ with the head atom that is not head-safe.
\nop{if so, $Q$ has unsafe atom-to-variable paths.}  
The total time is $\bigO{n^2\cdot m^2}$.

\subsection{Proof of Proposition \ref{prop:safety_properties_q-hierarchical}}
{\bf Proposition~\ref{prop:safety_properties_q-hierarchical}.}
{\it 
\begin{itemize}
\item Any conjunctive query without static relations is q-hierarchical if and only if it
is well-behaved. 
\item The dynamic sub-query of any well-behaved conjunctive query is q-hierarchical. 
\end{itemize}
}

\medskip
We start with the proof of the second statement in Proposition \ref{prop:safety_properties_q-hierarchical}. 
Consider a well-behaved query $Q$. 
For the sake of contradiction, assume that $\dyn(Q)$ is not q-hierarchical.
This means that one of the following two cases holds:
(1) $\dyn(Q)$ is not hierarchical or 
(2) $\dyn(Q)$ is hierarchical but not q-hierarchical.
We show that both cases lead to a contradition.

Assume that $\dyn(Q)$ is not hierarchical. 
In the following, we denote by $\dynatoms(X)$
the set of dynamic body atoms in $Q$ that contain the variable $X$.
The query $Q$ must have
two variables $X$ and $Y$ such that 
$\dynatoms(X) \not\subseteq \dynatoms(Y)$, 
$\dynatoms(Y) \not\subseteq \dynatoms(X)$, and 
$\dynatoms(X) \cap \dynatoms(Y) \neq \emptyset$.
This implies that $Q$ has three dynamic body atoms 
$R^d(\bm X)$, $S^d(\bm Y)$, and $T^d(\bm Z)$
such that 
$X \in \bm X$, $X \in \bm Y$, $X \notin \bm Z$,
$Y \in \bm Z$, $Y \in \bm Y$, and $Y \notin \bm X$.
The path $\bm P = (X,Y)$ connects the two dynamic
atoms $R^d(\bm X)$ and $T^d(\bm Z)$ such that 
$\bm P \cap \bm X \cap \bm Z = \emptyset$.
This means that $\bm P$ is not body-safe, which is 
a contradiction to our assumption that $Q$ is well-behaved.

Assume now that $\dyn(Q)$ is hierarchical, but not $q$-hierarchical. 
This implies that $Q$ contains 
two dynamic body atoms $R^d(\bm X)$ and $S^d(\bm Y)$, a bound variable $X$
with $X \in \bm X$ and $X \in \bm Y$, and a free variable 
$Y$ with $Y \in \bm Y$ and $Y \notin \bm X$.
The path $\bm P = (X,Y)$ connects the dynamic
body atom $R^d(\bm X)$ with the head atom of $Q$ such that 
$\bm P \cap \free(Q) \cap \bm X = \emptyset$.
This means that the path $\bm P$ is not head-safe, which is 
again a contradiction.


Next, we prove the first statement in Proposition \ref{prop:safety_properties_q-hierarchical}. The "if"-direction 
follows directly from the second statement of the proposition. It remains to show the 
"only-if"-direction. We prove the contraposition of this direction. 
Consider a  query $Q$ without static relations 
that is not well-behaved. We show that $Q$ is not q-hierarchical. 

Assume that $Q$ has a path $\bm P = (X_1, \ldots , X_n)$
that connects two body atoms $R^d(\bm X)$ and $T^d(\bm Z)$ such that 
$\bm P \cap \bm X \cap \bm Y = \emptyset$, i.e., 
$\bm P$ is not body-safe.
Assume that $\bm P$ is the shortest such path. 
Since $\bm P$ is not body-safe, it must hold $n>1$.
Furthermore, every pair $(X_i, X_{i+1})$ with $i \in [n-1]$ is contained in a body atom
i.e., $\atoms(X_i) \cap \atoms(X_{i+1}) \neq \emptyset$.
The following claim implies that $Q$ is not hierarchical, hence, not q-hierarchical:

\smallskip
\noindent
Claim 1: {\it $\exists i \in [n-1]$ such that  $\atoms(X_{i}) \not\subseteq \atoms(X_{i+1})$ and 
$\atoms(X_{i+1}) \not\subseteq \atoms(X_{i})$.}
\smallskip

We prove Claim 1.  
Assume that for each $i \in [n-1]$, it holds $\atoms(X_{i}) \subseteq \atoms(X_{i+1})$ or
$\atoms(X_{i+1}) \subseteq \atoms(X_{i})$. Consider an $i \in [n-1]$ such that 
$\atoms(X_{i}) \subseteq \atoms(X_{i+1})$ (the case for $\atoms(X_{i+1}) \subseteq \atoms(X_{i})$
is treated analogously). If $i = 1$, let $\bm P' = (X_{i+1}, \ldots , X_{n})$.
Otherwise, let $\bm P'$ $=$ $(X_1,$ $\ldots ,$ $X_{i-1},$ $X_{i+1},$
$\ldots ,$ $X_{n})$.
The path $\bm P'$ is shorter than $\bm P$ and still not body-safe. 
This contradicts our assumption
on the length of $\bm P$.

\smallskip
Assume now that $Q$ has a path $\bm P = (X_1, \ldots , X_n)$
connecting a dynamic body atom $R^d(\bm X)$ with the head atom of $Q$ 
such that $\bm P \cap \bm X \cap \free(Q) = \emptyset$, i.e., $\bm P$
is not head-safe. Let $\bm P$ be the shortest such path. 
Without loss of generality, we assume that $Q$ is hierarchical.
Since $X_n$ is free,
the following claim implies that $Q$ is not q-hierarchical: 

\smallskip
\noindent
Claim 2: {\it $X_{n-1}$ is bound and $\atoms(X_{n-1}) \supset \atoms(X_{n})$.}
\smallskip

We prove Claim 2. 
 Assume that $X_{n-1}$ is free. Since $\bm P$ is not head-safe, $X_{n-1}$ cannot be included in $\bm X$.
 Hence, $\bm P' = (X_1, \ldots , X_{n-1})$ is a path that is shorter than $\bm P$ and not 
 head-safe, which contradicts our assumption on the length of $\bm P$.
Assume now that $\atoms(X_{n-1}) \not\supset \atoms(X_{n})$. Since $Q$ is hierarchical and there must be at least 
one body atom including both $X_{n-1}$ and $X_n$, it must hold $\atoms(X_{n-1}) \subseteq \atoms(X_{n})$.
This implies that $n >2$, since otherwise the free variable $X_n$ is included in $\bm X$, which contradicts our assumption that $\bm P$ is 
not head-safe.  
This means however that the path $\bm P'' = (X_1, \ldots , X_{n-2}, X_{n})$ is sorter than $\bm P$ and 
not head-safe, which contradicts our assumption on the length of $\bm P$.
\section{Missing Details in Section~\ref{sec:evaluation_lin_pol}}
\label{app:evaluation_lin_pol}
We introduce the notion of {\em static parts}  of conjunctive queries and VOs tailored to them. These concepts will be useful in the proofs of Propositions~\ref{prop:pol_vo_dynamic_q-hierarchical}  and
\ref{prop:acyclic_preprocessing_width}.

\nop{Without loss of generality, 
we consider in this section conjunctive queries without repeating relation symbols. In case a relation symbol $R$ appears $k >1$ times in a query, we can  create $k$ copies of $R$ in the database and refer to each copy by a distinct relation name.  

assume that $Q$ does not have repeating relation symbols. In case a relation symbol $R$
}

\paragraph{Static Parts of Conjunctive Queries}
Consider a conjunctive query $Q$ and a static body atom $A = S(\bm Y)$ in $Q$.
Let $\bm D$ be the set of variables in $Q$ that appear in dynamic body atoms.  
The {\em reduced variant} of $A$ is of the form 
$S_A(\bm Y')$, where $\bm Y' = \bm Y \setminus \bm D$.
Let $Q_r$ be the query that results from $Q$
by replacing each body atom by its reduced variant. 
Consider a connected component $\calC$ of body atoms in $Q_r$. 
Let $\calA = \{A| S_A(\bm Y') \in \calC\}$ be the set of all static body atoms in $Q$ whose reduced variants are contained in $\calC$. 
Let $\bm I$ be the set of variables that appear 
in $\calA$ and in $\dyn(Q)$,
i.e., 
$\bm I = (\bigcup_{S(\bm Y) \in \calA} \bm Y) \cap \bm D$.
Let $\bm F$ be the free variables of $Q$ that appear in $\calA$, 
i.e., 
$\bm F = (\bigcup_{S(\bm Y) \in \calA} \bm Y) \cap \free(Q)$.
We call $Q_p(\bm F) = (S(\bm Y))_{S(\bm Y) \in \calA}$
a {\em static part of $Q$ with intersection set $\bm I$}.

\begin{example}
\rm
Figure~\ref{fig:pol_class_construction_appendix} visualizes a well-behaved conjunctive query $Q$ (top left), its dynamic sub-query $\dyn(Q)$ (top middle), and the static parts of $Q$ together with their intersection sets given below (top right).
For instance, the third static part $Q_p(C) = R(G,C), S(G,F)$ in the figure has the intersection set $\{C,F\}$.
\end{example}


\begin{figure}[t]
    \centering
    \begin{tikzpicture}
    \node[draw=none,fill=none](a) at (0, 0) {$\underline{A}$};
    \node[draw=none,fill=none](b) at (-1, -1) {$\underline{B}$};
    \node[draw=none,fill=none](d) at (-2, -2) {$\underline{D}$};
    \node[draw=none,fill=none](c) at (1, -1) 
    {$\underline{C}$};
    \node[draw=none,fill=none](f) at (2, -2) {$F$};
    \node[draw=none,fill=none](f) at (2, 0) {$G$};    
    \node[draw=none,fill=none](e) at (0, -2) {$E$};

    \node[draw=none,fill=none](n) at (-2, -3) {$N$};
    \node[draw=none,fill=none](p) at (-2, -4) 
    {$\underline{P}$};

    \node[draw=none,fill=none](j) at (2, -3) {$J$};
    \node[draw=none,fill=none](l) at (1.5, -4) {$L$};
    \node[draw=none,fill=none](k) at (2.5, -4) {$K$};
    
    \draw [red] plot [smooth cycle] coordinates {(.4,0) (0,.4) (-1.2,-0.7) (-2.4,-2) (-2,-2.4) (-0.7,-1.3)};
    \draw [red] plot [smooth cycle] coordinates {(.3,0) (0,.3) (-1.3,-1) (-1,-1.3)};
    \draw [red] plot [smooth cycle] coordinates {(0,.3) (-.3,0) (.6,-1) (-.3,-2) (0,-2.3) (1.3, -1.3) (1.3,-0.7)};
    \draw [red] plot [smooth cycle] coordinates {(0,.4) (-.4,0) (.85,-1.4) (2,-2.4) (2.4,-2) (1.4,-0.65)};

    \draw [blue, dotted, line width = 1.2] plot [smooth cycle] coordinates {(-1, -.5) (-2.6, -1.8) (-2.3, -3.3) (-1.7, -3.3) (-1.3, -2.2) (-.5, -1)};
    \draw [blue, dotted, line width = 1.2] plot [smooth cycle] coordinates {(-2.2, -2.8) (-2.3, -4.1) (-2.2, -4.3) (-1.8, -4.3) (-1.7, -4.1) (-1.8, -2.8)};

    \draw[blue, dotted ,line width = 1.2] (2,-2.5) ellipse (0.3cm and 0.8cm);
        \draw[blue, dotted , line width = 1.2, rotate=25] (0.5,-4.2) ellipse (0.3cm and 0.9cm);
    \draw[blue, dotted , line width = 1.2, rotate=-28] (3.2,-2.3) ellipse (0.3cm and 0.9cm);

    \draw[blue, dotted , line width = 1.2, rotate=43] (0.7,-1.4) ellipse (1.1cm and 0.3cm);
    \draw[blue, dotted , line width = 1.2] (2,-1) ellipse (0.3cm and 1.3cm);
    
    \draw[blue, dotted , line width = 1.2, rotate=43] (-0.7,-1.4) ellipse (1.1cm and 0.3cm);

 \begin{scope}[xshift=5cm]
    
    \node[draw=none,fill=none](a) at (0, 0) 
    {$\underline{A}$};
    \node[draw=none,fill=none](b) at (-1, -1) 
    {$\underline{B}$};
    \node[draw=none,fill=none](d) at (-2, -2) 
    {$\underline{D}$};
    \node[draw=none,fill=none](c) at (1, -1) 
    {$\underline{C}$};
    \node[draw=none,fill=none](f) at (2, -2) {$F$};
    \node[draw=none,fill=none](e) at (0, -2) {$E$};
    
    \draw [red] plot [smooth cycle] coordinates {(.4,0) (0,.4) (-1.2,-0.7) (-2.4,-2) (-2,-2.4) (-0.7,-1.3)};
    \draw [red] plot [smooth cycle] coordinates {(.3,0) (0,.3) (-1.3,-1) (-1,-1.3)};
    \draw [red] plot [smooth cycle] coordinates {(0,.3) (-.3,0) (.6,-1) (-.3,-2) (0,-2.3) (1.3, -1.3) (1.3,-0.7)};
    \draw [red] plot [smooth cycle] coordinates {(0,.4) (-.4,0) (.85,-1.3) (2,-2.4) (2.4,-2) (1.4,-0.65)};
 \end{scope}

    \begin{scope}[xshift=10cm, yshift=0.5cm ]
    \node[draw=none,fill=none](b) at (-2, -1) {$\underline{B}$};
    \node[draw=none,fill=none](d) at (-2, -2) {$\underline{D}$};

    \node[draw=none,fill=none](n) at (-2, -3) {$N$};
    \node[draw=none,fill=none](p) at (-2, -4) 
    {$\underline{P}$};

   \draw[blue, dotted , line width = 1.2] (-2,-2) ellipse (0.3cm and 1.4cm);
   \draw[blue, dotted , line width = 1.2] (-2,-3.5) ellipse (0.3cm and 0.9cm);
    \end{scope}

\begin{scope}[xshift = 7.5cm, yshift=1.2cm]
    \node[draw=none,fill=none](f) at (2, -2) {F};
    \node[draw=none,fill=none](j) at (2, -3) {J};
    \node[draw=none,fill=none](l) at (1.5, -4) {L};
    \node[draw=none,fill=none](k) at (2.5, -4) {K};

    \draw[blue, dotted , line width = 1.2] (2,-2.5) ellipse (0.3cm and 0.8cm);
        \draw[blue, dotted , line width = 1.2, rotate=25] (0.5,-4.2) ellipse (0.3cm and 0.9cm);
    \draw[blue, dotted , line width = 1.2, rotate=-28] (3.2,-2.3) ellipse (0.3cm and 0.9cm);
\end{scope}
    \begin{scope}[xshift=6.5cm, yshift= 1cm]
    \node[draw=none,fill=none](c1) at (5, -2) {$G$};
    \node[draw=none,fill=none](f1) at (4.3, -3) {$\underline{C}$};
    \node[draw=none,fill=none](f1) at (5.7, -3) {$F$};    

    \draw[blue, dotted , line width = 1.2, rotate=-35] (5.2,0.6) ellipse (0.3cm and 1cm);
   \draw[blue, dotted , line width = 1.2, rotate=33] (3.1,-5) ellipse (0.3cm and 1cm);    
    \end{scope}
        \begin{scope}[xshift=8cm, yshift=1cm]
    \node[draw=none,fill=none](c1) at (5, -2) {$\underline{C}$};
    \node[draw=none,fill=none](f1) at (5, -3) {$E$};

\draw[blue, dotted , line width = 1.2] (5,-2.5) ellipse (0.3cm and 1cm);
    \end{scope}

  \begin{scope}[yshift=-6.5cm]
\node[draw=none,fill=none](a) at (0, 0) {$\underline{A}$};
    \node[draw=none,fill=none](b) at (-1, -1) {$\underline{B}$};
    \node[draw=none,fill=none](d) at (-2, -2) {$\underline{D}$};
    \node[draw=none,fill=none](c) at (1, -1) {$\underline{C}$};
    \node[draw=none,fill=none](f) at (2, -2) {$F$};
    \node[draw=none,fill=none](e) at (0, -2) {$E$};

    \node[draw=none,fill=none](p) at (-2, -3) {$\underline{P}$};
    \node[draw=none,fill=none](n) at (-2, -4) {$N$};
    
    \node[draw=none,fill=none](j) at (1, -3) {$J$};
    \node[draw=none,fill=none](l) at (0, -4) {$L$};
    \node[draw=none,fill=none](k) at (2, -4) {$K$};

    \node[draw=none,fill=none](g) at (2, -3) {$G$};

    \draw (a) -- (b);
    \draw (d) -- (b);
    \draw (a) -- (c);
    \draw (e) -- (c);
    \draw (f) -- (c);

    \draw (d) -- (p);
    \draw (p) -- (n);

    \draw (f) -- (j);
    \draw (l) -- (j);
    \draw (k) -- (j);

    \draw (f) -- (g);
  \end{scope}

    \begin{scope}[yshift=-7cm,xshift=4.5cm]
\node[draw=none,fill=none](a) at (0, 0) {$\underline{A}$};
    \node[draw=none,fill=none](b) at (-1, -1) {$\underline{B}$};
    \node[draw=none,fill=none](d) at (-2, -2) {$\underline{D}$};
    \node[draw=none,fill=none](c) at (1, -1) {$\underline{C}$};
    \node[draw=none,fill=none](f) at (2, -2) {$F$};
    \node[draw=none,fill=none](e) at (0, -2) {$E$};

    \draw (a) -- (b);
    \draw (d) -- (b);
    \draw (a) -- (c);
    \draw (e) -- (c);
    \draw (f) -- (c);
  \end{scope}

    \begin{scope}[yshift=-7.3cm, xshift=8cm]

    \node[draw=none,fill=none](b) at (0, 0) {$\underline{B}$};
    \node[draw=none,fill=none](d) at (0, -1) {$\underline{D}$};

    \node[draw=none,fill=none](p) at (0, -2) {$\underline{P}$};
    \node[draw=none,fill=none](n) at (0, -3) {$N$};

    \draw (b) -- (d);
    \draw (d) -- (p);
    \draw (p) -- (n);
  \end{scope}

      \begin{scope}[yshift=-7.3cm, xshift=9.5cm]

    \node[draw=none,fill=none](f) at (0, 0) {$F$};
    \node[draw=none,fill=none](j) at (0, -1) {$J$};

    \node[draw=none,fill=none](l) at (-0.5, -2) {$L$};
    \node[draw=none,fill=none](k) at (0.5, -2) {$K$};

    \draw (f) -- (j);
    \draw (j) -- (l);
    \draw (j) -- (k);
  \end{scope}

      \begin{scope}[yshift=-7.3cm, xshift=11.5cm]
    \node[draw=none,fill=none](c) at (0, 0) {$\underline{C}$};
    \node[draw=none,fill=none](f) at (0, -1) {$F$};
\node[draw=none,fill=none](g) at (0, -2) {$G$};

    \draw (c) -- (f);
    \draw (f) -- (g);    
  \end{scope}
        \begin{scope}[yshift=-7.3cm, xshift=13cm]

    \node[draw=none,fill=none](c) at (0, 0) {$\underline{C}$};
    \node[draw=none,fill=none](e) at (0, -1) {$E$};

    \draw (c) -- (e);
  \end{scope}

%
\node at (0,1) {$Q$};
\node at (5,1) {$\dyn(Q)$};
\node at (10.5,1) {static parts};
\draw [decorate,decoration={brace,amplitude=10pt},xshift=0pt,yshift=0pt]
    (7.9,0.2) -- (13.1,0.2) node [black,midway,yshift=-15pt] {};

    \node at (11,-5.1) {intersection sets};
\draw [decorate,decoration={mirror, brace,amplitude=10pt},xshift=0pt,yshift=0pt]
    (7.5,-4.5) -- (13.5,-4.5) node [black,midway,yshift=-15pt] {};

    \draw [decorate,decoration={brace,amplitude=10pt},xshift=0pt,yshift=0pt]
    (8,-6.9) -- (13.2,-6.9) node [black,midway,yshift=-15pt] {};

\node at (8, -4.2) {$\{B,D\}$};
\node at (9.35, -4.2) {$\{F\}$};
\node at (11.5, -4.2) {$\{C,F\}$};
\node at (13, -4.2) {$\{C,E\}$};

    \node at (0.1,-6) {$\omega$};
    \node at (4.6,-6) {$\omega_d$};
   \node at (10.4,-6) {$\omega_d$-compatible free-top VOs for static parts};
   \end{tikzpicture}
    \caption{Top row from left to right: Query $Q$, its dynamic sub-query $\dyn(Q)$, and its static parts with their intersection
    sets given below. Bottom row from left to right: Well-structured VO $\omega$ for $Q$, well-structured VO $\omega_d$ for $\dyn(Q)$, and
    $\omega_d$-compatible free-top VOs for the static parts. 
    For simplicity, the atoms in the VOs are omitted.}
    \label{fig:pol_class_construction_appendix}
\end{figure}


The intersection set of each  static part of a well-behaved conjunctive query is covered by a single dynamic body atom of the query:

\begin{lemma}
\label{lem:neck_dynamic-coverage}
For any well-behaved conjunctive query $Q$ 
and static part of 
$Q$ with intersection set $\bm I$, there is a dynamic
body atom $R^d(\bm X)$ in $Q$ such that $\bm I \subseteq \bm X$.   
\end{lemma}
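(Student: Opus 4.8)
The plan is to reduce the statement to a pairwise claim and then lift it using the hierarchical structure of the dynamic sub-query. Write $\bm D$ for the set of variables of $Q$ that occur in dynamic atoms, so that $\bm I \subseteq \bm D$ and every variable of $\bm I$ occurs in at least one dynamic atom. If $\bm I$ contains at most one variable the claim is immediate: any dynamic atom containing that variable works, and if $\bm I = \emptyset$ we may take any dynamic atom (which exists whenever $Q$ has a dynamic atom, the only case in which intersection sets are of interest). For $|\bm I| \geq 2$ I would prove two steps: (Step~1) any two variables of $\bm I$ occur together in a single dynamic atom; and (Step~2) this pairwise co-occurrence forces one dynamic atom to contain all of $\bm I$.

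For Step~1, fix distinct $X, Y \in \bm I$ and pick dynamic atoms $R^d(\bm Z_X) \ni X$ and $S^d(\bm Z_Y) \ni Y$; if these coincide we are done, so assume they are distinct. By the definition of the static part, $X$ and $Y$ occur in static atoms $A_X, A_Y \in \calA$ whose reduced variants lie in the same connected component $\calC$ of $Q_r$. I would walk along $\calC$ from $A_X$ to $A_Y$: consecutive reduced atoms share a variable, and since the reduction deletes exactly the variables of $\bm D$, every such connecting variable lies in $\vars(Q) \setminus \bm D$. Reading this walk back in $Q$ and prepending $X$ and appending $Y$ yields a path $\bm P$ connecting $R^d(\bm Z_X)$ and $S^d(\bm Z_Y)$ whose interior variables all lie outside $\bm D$ (the degenerate case $A_X = A_Y$ just gives $\bm P = (X,Y)$). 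The first well-behavedness condition makes $\bm P$ body-safe, i.e. $\bm P \cap \bm Z_X \cap \bm Z_Y \neq \emptyset$. Since $\bm Z_X, \bm Z_Y \subseteq \bm D$ while the interior of $\bm P$ avoids $\bm D$, this intersection is contained in $\{X, Y\}$; hence $X \in \bm Z_Y$ or $Y \in \bm Z_X$, and in either case one of the two dynamic atoms contains both $X$ and $Y$.

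For Step~2, I would invoke Proposition~\ref{prop:safety_properties_q-hierarchical}: $\dyn(Q)$ is q-hierarchical, hence hierarchical, so for any $X, Y \in \bm D$ the sets $\dynatoms(X)$ and $\dynatoms(Y)$ of dynamic atoms containing $X$, respectively $Y$, are either $\subseteq$-comparable or disjoint. Step~1 excludes disjointness for $X, Y \in \bm I$, so these sets are pairwise comparable, and thus $\{\dynatoms(X)\}_{X \in \bm I}$ forms a finite chain under inclusion. Taking a variable $X^* \in \bm I$ with $\dynatoms(X^*)$ minimal in the chain, any dynamic atom $R^d(\bm X) \in \dynatoms(X^*)$ lies in $\dynatoms(X)$ for every $X \in \bm I$, and therefore contains every variable of $\bm I$; that is, $\bm I \subseteq \bm X$, as required.

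The main obstacle is Step~1, and specifically the observation that makes it work: the connectivity of a static part is routed entirely through purely static variables (those outside $\bm D$), which are invisible to the body-safety intersection $\bm Z_X \cap \bm Z_Y \subseteq \bm D$. Consequently body-safety of the constructed path cannot be witnessed by any interior variable and must be witnessed by an endpoint, which is exactly what pins $X$ and $Y$ into a common dynamic atom. The remaining care is routine: constructing $\bm P$ as a genuine simple path with endpoints $X$ and $Y$ (shortcutting any repeated vertices in the walk) and handling the degenerate case $A_X = A_Y$ where $X$ and $Y$ already share a static atom.
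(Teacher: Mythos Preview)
Your proof is correct, but it takes a longer route than the paper's. Both arguments share the same core observation: the connected component $\calC$ defining the static part is connected through variables outside $\bm D$, so any path through it between two variables of $\bm I$ has an interior disjoint from every dynamic schema, forcing body-safety to be witnessed at an endpoint. The paper exploits this in a single contradiction step: assuming no dynamic atom covers $\bm I$, it picks two dynamic atoms $R_1^d(\bm X_1), R_2^d(\bm X_2)$ needed in a cover together with witnesses $X_1 \in (\bm X_1 \cap \bm I) \setminus \bm X_2$ and $X_2 \in (\bm X_2 \cap \bm I) \setminus \bm X_1$, builds the static-part path between them, and observes that it cannot be body-safe since neither endpoint lies in $\bm X_1 \cap \bm X_2$. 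Your Step~1 instead proves the stronger pairwise statement that any two variables of $\bm I$ share a dynamic atom, and then your Step~2 invokes Proposition~\ref{prop:safety_properties_q-hierarchical} to lift this via the hierarchical structure of $\dyn(Q)$. This works, but the detour through hierarchicality is unnecessary: the paper's direct contradiction avoids Proposition~\ref{prop:safety_properties_q-hierarchical} entirely and uses only the body-safety clause of well-behavedness. On the other hand, your decomposition is arguably more explicit, since the paper leaves unstated the small argument that a witness pair $(X_1, X_2)$ with the required non-containment properties must exist whenever no single dynamic atom covers $\bm I$.
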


\begin{proof}
Consider a well-behaved conjunctive query $Q$ and a static part $Q_p$
of $Q$ with intersection $\bm I$.
For the sake of contradiction, assume that we need at least two distinct dynamic body atoms $R_1^d(\bm X_1)$ and $R_2^d(\bm X_2)$ 
from $Q$ to cover the variables in 
$\bm I$. This means that there are variables $X_1 \in
(\bm X_1 \cap \bm I)$ and $X_2 \in (\bm X_2 \cap \bm I)$ such that $X_1 \notin \bm X_1 \cap \bm X_2$ and $X_2 \notin \bm X_1 \cap \bm X_2$. Since $Q_p$ arises from a connected component of $Q$ after removing variables that appear in dynamic body atoms,  there must be a path 
$\bm P = (X_1=Y_1, \ldots , Y_n = X_2)$ in $Q$, 
such that all variables $Y_i$ with $i \in \{2, \ldots, n-1\}$ appear only in static body atoms. Hence, none of the variables $Y_i$ with $i \in \{2, \ldots, n-1\}$ is included in the intersection 
$\bm X_1 \cap \bm X_2$. Since $X_1$ and $X_2$ are not included in $\bm X_1 \cap \bm X_2$ either, it holds 
$\bm P \cap \bm X_1 \cap \bm X_2 = \emptyset$. This means that $\bm P$ connects the two dynamic body atoms
$R_1^d(\bm X_1)$ and $R_2^d(\bm X_2)$ but is not body-safe.
Thus, $Q$ is not well-behaved,
which is a contradiction. 
\end{proof}

\paragraph{VOs for Static Parts }
\nop{Given a VO $\omega$ and a subset  $\bm N$ of its variables,  
we say that $\calN$ is the {\em neck} of $\omega$ 
if the upper part of $\omega$ is a path consisting 
of the variables in $\bm N$. More precisely, 
there is an ordering $X_1, \ldots, X_n$ of the variables in $\bm N$ such that $X_1$ is 
the root of $\omega$ and for each $i \in [n-1]$, $X_{i + 1}$ is the only child of $X_i$. 
}
For a sequence $\bm N= (X_1, \ldots, X_n)$, we say that a VO $\omega$ has {\em neck} $\bm N$ 
if $X_1$ is the root of $\omega$ and $X_{i + 1}$ is the only child of $X_i$ for each $i \in [n-1]$. 
By Proposition~\ref{prop:safety_properties_q-hierarchical}, the 
dynamic sub-query $\dyn(Q)$ of each well-behaved query is q-hierarchical. 
It is known that each q-hierarchical query admits a well-structured VO~\cite{DBLP:journals/lmcs/KaraNOZ23}. 
Consider a well-behaved conjunctive query $Q$, a well-structured VO $\omega_d$ for $\dyn(Q)$, and a static part $Q_p$ of $Q$ with intersection set $\bm I$. 
We say that a VO $\omega_p$ for $Q_p$ is {\em $\omega_d$-compatible} if there is an ordering 
$\bm N = X_1, \ldots, X_n$ of the variables in $\bm I$ such that (1) 
$i < j$ if and only if $X_i$ appears above $X_j$ in $\omega_d$ for all $i,j \in [n]$, and 
(2) $\omega_p$ has neck $\bm N$.

\begin{example}
\rm 
Figure~\ref{fig:pol_class_construction_appendix}  shows below each static part $Q_p$ 
of the query $Q$ an $\omega_d$-compatible VO for $Q_p$.
For simplicity, the atoms are omitted in the VOs.
In each $\omega_d$-compatible VO, the intersection variables of the corresponding static part form a path and are on top of all other variables.
For instance, the $\omega_d$-compatible VO for the static part 
$Q_p(C) = R(G,C), S(G,F)$ has the intersection variables $\{C, F\}$ on a path on top of the variable  $G$.
\end{example}

\subsection{Proof of Proposition \ref{prop:pol_vo_dynamic_q-hierarchical}}
{\bf Proposition~\ref{prop:pol_vo_dynamic_q-hierarchical}.}
{\it Any well-behaved conjunctive query has a well-structured VO.}

\medskip

Given a well-behaved conjunctive query $Q$, we show how to construct a well-structured VO for $Q$.
The high-level idea of the construction is as follows. 
We start with constructing a well-structured VO $\omega^d$ for $\dyn(Q)$. 
Then, we construct for each static part of $Q$, an $\omega_d$-compatible free-top VO.
Finally, we combine the VOs for the static parts with 
$\omega^d$ and show that the resulting structure  is a well-structured VO for $Q$.

\nop{
Lemma~\ref{lem:vo_neck_construction} states
that  any static part of $Q$ with intersection 
set $\bm I$ admits a free-top VO with neck $\bm I$. Before proving this lemma,
 we illustrate it by an example and formulate an auxiliary lemma.
}

We first illustrate the construction by the following example:
\begin{example}
\rm
Consider again the well-behaved query $Q$ query, its dynamic sub-query $\dyn(Q)$, and its static parts visualized in Figure~\ref{fig:pol_class_construction_appendix}.
The figure shows the well-structured VO $\omega$ for $Q$, the well-structured VO $\omega_d$ for $\dyn(Q)$, and the $\omega_d$-compatible free-top VOs for the static parts. The VO $\omega$ is obtained by combining each VO $\omega_p$ of a static part
$Q_p$ with $\omega_d$ as follows. Let $X$ be the lowest variable in $\omega_p$ that is contained in the intersection set of $Q_p$. The child trees of $X$ in $\omega_p$ become the child trees of $X$ in $\omega_d$. For instance, consider  
the third static part $Q_p(C) = R(G,C), S(G,F)$ with intersection set $\{C,F\}$ in Figure~\ref{fig:pol_class_construction_appendix}. The lowest variable in the VO of $Q_p$
that is contained in the intersection set of $Q_p$ is $F$. This variable has a single child tree consisting of $G$ in $\omega_p$. We add $G$ as a child of $G$ to $\omega_d$. 
\nop{
Observe that in the VO for the third static part $Q_p(C) = R(G,C), S(G,F)$, the free variable $C$ is on top of $F$ and $G$, and the intersection variables $C$ and $F$ are on top of $G$.
}
\end{example}

The next lemma states that a variable in a static part that is not contained in the intersection set $\bm I$ of the static part can be connected to a variable in $\bm I$ via variables that are not contained in $\bm I$:

\begin{lemma}
\label{lem:static_part_reachability}
Let $Q$ be a well-behaved conjunctive query, $Q_p$ a static part of $Q$ with intersection set $\bm I$, $X \in (\vars(Q_p)\setminus \bm I)$, and $Y \in \bm I$. The query $Q$ contains a path $(X=X_1, \ldots, X_n = Y)$ such that the variables $X_1, \ldots, X_{n-1}$ are contained in $\vars(Q)\setminus \bm I$.
\end{lemma}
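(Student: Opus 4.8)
The plan is to read the claim directly off the construction of a static part. Recall that $Q_p$ arises from a connected component $\calC$ of the reduced query $Q_r$, where $Q_r$ is obtained by deleting from every atom all variables of $\bm D$, the set of variables occurring in dynamic body atoms, and that $\bm I = \vars(Q_p)\cap\bm D$. The first thing I would record is that $\bm I \subseteq \bm D$, so that $\vars(Q)\setminus\bm D \subseteq \vars(Q)\setminus\bm I$; it therefore suffices to build a path from $X$ to $Y$ whose non-final variables all lie in $\vars(Q)\setminus\bm D$. Next I locate $X$ and $Y$. Since $X \in \vars(Q_p)\setminus\bm I$ and $\bm I = \vars(Q_p)\cap\bm D$, we have $X \notin \bm D$, so $X$ survives the reduction and occurs in some reduced atom of $\calC$. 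On the other hand $Y \in \bm I \subseteq \bm D$, and $Y$ occurs in at least one original static atom $S(\bm Z) \in \calA$, whose reduced variant $S_{S(\bm Z)}(\bm Z')$ with $\bm Z' = \bm Z \setminus \bm D$ belongs to $\calC$.

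The core of the argument is a single use of connectedness followed by a one-edge extension. Because $\calC$ is a connected component of $Q_r$ containing both the reduced atom providing $X$ and the reduced atom $S_{S(\bm Z)}(\bm Z')$, there is a path $(X = X_1, \ldots, X_{n-1} = w)$ in $Q_r$ ending at some variable $w \in \bm Z'$. Every variable on this path lies in $\vars(Q)\setminus\bm D$, and since each reduced atom is a sub-atom of its original, the sequence is also a path in $Q$ that avoids $\bm D$ entirely; in particular it avoids $\bm I$ and does not contain $Y$. I then extend by the edge $w \to Y$: as $w, Y \in \bm Z$, the pair is covered by the original atom $S(\bm Z)$, so $(X_1, \ldots, w, Y)$ is a path in $Q$. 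Moreover $w \in \bm Z'$ gives $w \notin \bm D$ while $Y \in \bm D$, so $w \neq Y$ and the appended variable is fresh, whence the concatenation is a genuine (simple) path. This yields exactly the claimed path, with $X_1, \ldots, X_{n-1} \in \vars(Q)\setminus\bm D \subseteq \vars(Q)\setminus\bm I$ and $X_n = Y \in \bm I$.

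The step requiring the most care is guaranteeing that the target atom $S(\bm Z)$ can be chosen with $\bm Z' \neq \emptyset$ and that $w$ is genuinely reachable from $X$ inside $\calC$. The point is that $\calC$ already contains the variable $X$, so it is not the trivial component consisting of a lone nullary atom; connectedness of $\calC$ then forces every atom of $\calC$ --- in particular $S_{S(\bm Z)}(\bm Z')$ --- to share a variable with the remaining atoms of the component, whence $\bm Z' \neq \emptyset$ and a connecting path to some $w \in \bm Z'$ exists. I would also dispatch the degenerate case where $X$ itself lies in $\bm Z'$ (i.e.\ $\calC$ is a single non-nullary atom): there the path collapses to the single edge $(X, Y)$, which already satisfies the statement. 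I do not expect to invoke well-behavedness here, as the claim looks like a purely structural consequence of how static parts and their intersection sets are defined; the hypothesis is simply inherited from the surrounding context.
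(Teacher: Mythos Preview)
Your proposal is correct and follows essentially the same approach as the paper: locate a static atom $S(\bm Z)\in\calA$ containing $Y$, argue that its reduced variable set $\bm Z'$ is non-empty and reachable from $X$ within the connected component $\calC$ of $Q_r$, then extend by the single edge to $Y$. The paper's proof asserts the existence of $X'\in\bm Z\cap(\vars(Q_p)\setminus\bm I)$ and of the $\bm I$-avoiding path from $X$ to $X'$ without justification, whereas you actually supply those justifications via the component structure of $Q_r$; your observation that well-behavedness is not used is also correct (the paper's proof does not invoke it either).
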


\begin{proof}
Let $Q$ be a well-behaved conjunctive query, $Q_p$ a static part of $Q$ with intersection set $\bm I$, $X \in (\vars(Q_p)\setminus \bm I)$, and $Y \in \bm I$. By construction, $Q_p$ must contain a static body atom $S(\bm Y)$ such that $\bm Y$ contains $Y$ and a variable $X'$ in $\vars(Q_p)\setminus \bm I$.
The query $Q$ must contain a path $\bm P = (X=X_1, \ldots, X_{n-1} = X')$ such that $X_i \notin \bm I$ for $i \in [n-1]$. Since $Q$ contains the atom $S(\bm Y)$, we can extend
$\bm P$ to a path $\bm P' = (X=X_1, \ldots, X_{n-1} = X', X_n =Y)$ such that 
$X_1, \ldots, X_{n-1}$ are contained in $\vars(Q)\setminus \bm I$.
\end{proof}

The next lemma states that if a static part contains a free variable that is not contained in its intersection set, then all variables in the intersection set must be free:
\begin{lemma}
\label{lem:static_part_free_top}
Let $Q$ be a well-behaved conjunctive query and $Q_p$ a static part of $Q$ with intersection set $\bm I$. If $\free(Q_p) \setminus \bm I \neq \emptyset$,
then $\bm I \subseteq \free(Q)$.
\end{lemma}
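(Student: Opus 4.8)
The plan is to prove the statement directly: assuming $\free(Q_p)\setminus\bm I\neq\emptyset$, I would show that \emph{every} variable $W\in\bm I$ lies in $\free(Q)$. If $\bm I=\emptyset$ the conclusion $\bm I\subseteq\free(Q)$ is vacuous, so I assume $\bm I\neq\emptyset$. First I would fix a free variable $Z\in\free(Q_p)\setminus\bm I$; by the definition of a static part this means $Z\in\free(Q)$, $Z$ occurs in $Q_p$, and $Z\notin\bm I$. By Lemma~\ref{lem:neck_dynamic-coverage} there is a single dynamic body atom $R^d(\bm X)$ with $\bm I\subseteq\bm X$, so $W\in\bm X$ for every $W\in\bm I$. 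I would also record the definitional identity $\bm I=\vars(Q_p)\cap\bm D$, where $\bm D$ is the set of variables occurring in dynamic atoms.

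Now fix an arbitrary $W\in\bm I$. The key preparatory step is to produce a path $\bm P=(Z=X_1,\ldots,X_n=W)$ in $Q$ all of whose variables lie in $\vars(Q_p)$ and in which $W$ is the \emph{only} variable belonging to $\bm I$. Such a path is furnished by Lemma~\ref{lem:static_part_reachability}, applied with $X=Z\in\vars(Q_p)\setminus\bm I$ and $Y=W\in\bm I$: it yields $X_1,\ldots,X_{n-1}\in\vars(Q)\setminus\bm I$ and $X_n=W$, with $n\ge 2$ since $Z\neq W$. Moreover, since the connected component underlying $Q_p$ is linked only through shared variables of the reduced atoms --- all of which lie in $\vars(Q_p)\setminus\bm D=\vars(Q_p)\setminus\bm I$ --- the path can be chosen to stay entirely inside $\vars(Q_p)$.

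Next I would reverse the path to $(W=X_n,\ldots,X_1=Z)$ and observe that it connects the dynamic body atom $R^d(\bm X)$ (its first endpoint $W$ lies in $\bm X$) with the head atom of $Q$ (its last endpoint $Z$ lies in $\free(Q)$). Well-behavedness of $Q$ then forces this path to be head-safe, i.e.\ $\bm P\cap\bm X\cap\free(Q)\neq\emptyset$; pick a witness $X_i\in\bm X\cap\free(Q)$. The crucial step is to argue $X_i=W$: being on the path, $X_i\in\vars(Q_p)$, and being in $\bm X$, $X_i\in\bm D$; hence $X_i\in\vars(Q_p)\cap\bm D=\bm I$. Since $W$ is the unique variable of $\bm P$ lying in $\bm I$, we conclude $X_i=W$, whence $W\in\free(Q)$. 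As $W\in\bm I$ was arbitrary, this gives $\bm I\subseteq\free(Q)$.

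The main obstacle is exactly the ``within $\vars(Q_p)$'' guarantee on the connecting path. Without it, the head-safety witness $X_i$ could be some free variable of $\bm X$ lying \emph{outside} the static part, in which case $X_i\neq W$ is possible and nothing about $W$ would follow. Keeping the path inside $\vars(Q_p)$ is precisely what lets me use $\vars(Q_p)\cap\bm D=\bm I$ to collapse every $\bm X$-variable on the path onto $\bm I$, and hence onto $W$. I would therefore take care to invoke the stronger form of Lemma~\ref{lem:static_part_reachability} in which the path stays inside $\vars(Q_p)$ --- re-deriving it, if needed, from the reduced-component connectivity used in that lemma's proof --- rather than merely the stated form that only confines the intermediate variables to $\vars(Q)\setminus\bm I$.
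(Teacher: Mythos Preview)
Your proposal is correct and follows essentially the same approach as the paper: invoke Lemma~\ref{lem:neck_dynamic-coverage} to get a dynamic atom $R^d(\bm X)$ covering $\bm I$, use Lemma~\ref{lem:static_part_reachability} to build a path from the free variable outside $\bm I$ to a variable in $\bm I$ whose intermediate nodes avoid $\bm X$, and then appeal to head-safety. The only difference is cosmetic---the paper argues by contradiction (assume some $B\in\bm I$ is bound and exhibit a non-head-safe path), whereas you argue directly (head-safety forces the unique $\bm X$-variable on the path, namely $W$, to be free). Your explicit attention to keeping the path inside $\vars(Q_p)$ is well placed: the paper's proof tacitly relies on the same strengthening when it asserts that $X_1,\ldots,X_{n-1}\notin\bm X$, which does not follow from the \emph{stated} conclusion of Lemma~\ref{lem:static_part_reachability} (only $X_i\notin\bm I$) but does follow from its proof, where the path is built inside the reduced connected component and hence inside $\vars(Q_p)\setminus\bm D$.
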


\begin{proof}
Consider a well-behaved conjunctive query $Q$ and a static part $Q_p$ of $Q$ 
with intersection set $\bm I$. 
For the sake of contradiction, assume that $\bm I$ contains a bound variable $B$ and $Q_p$ has a free variable $F$ that is not contained in $\bm I$.
By Lemma~\ref{lem:neck_dynamic-coverage}, $Q$ has a dynamic body atom 
$R(\bm X)$ such that $\bm I \subseteq \bm X$.
By Lemma~\ref{lem:static_part_reachability}, $Q$ contains a path 
$\bm P = (F=X_1, \ldots , X_n = B)$ such that none of the variables 
$X_1, \ldots , X_{n-1}$ is contained in $\bm X$. 
This means that $\free(Q) \cap \bm X \cap \bm P = \emptyset$.
This implies that $\bm P$ connects the dynamic body atom $R(\bm X)$ with the head atom but is not head-safe. Hence, 
$Q$ is not well-behaved, which is a contradiction. 
\end{proof}

Using Lemma~\ref{lem:static_part_free_top}, we show that for every static part of a well-behaved query $Q$, we can construct a free-top VO that is compatible with a well-structured VO for the dynamic sub-query of $Q$:

\begin{lemma}
\label{lem:vo_neck_construction}
Let $Q$ be a well-behaved conjunctive query, $\omega_d$ a well-structured VO for $Q$, and $Q_p$ a static part of $Q$. 
The query $Q_p$ has an $\omega_d$-compatible free-top VO.
\end{lemma}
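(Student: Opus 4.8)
The plan is to read off the neck order $\bm N$ directly from $\omega_d$ and then grow this neck downward into a full variable order for $Q_p$, arranging the remaining variables so that every free variable stays above every bound one.

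First I would fix the neck. By Lemma~\ref{lem:neck_dynamic-coverage}, the intersection set $\bm I$ is covered by a single dynamic body atom $R^d(\bm X)$ with $\bm I \subseteq \bm X$. Since $\omega_d$ is a well-structured VO for $\dyn(Q)$, the variables of $R^d(\bm X)$ lie on one root-to-leaf path of $\omega_d$, so $\bm I$ is totally ordered by the ancestor relation in $\omega_d$. Let $\bm N = (X_1, \ldots, X_n)$ enumerate $\bm I$ from top to bottom along this path; by construction $X_i$ appears above $X_j$ in $\omega_d$ iff $i<j$, which is exactly compatibility condition~(1). Because $\omega_d$ is well-structured and hence free-top, no bound variable of $\dyn(Q)$ sits above a free one along this path. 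The free variables of $\dyn(Q)$ that lie in $\bm I$ are precisely $\bm I \cap \free(Q)$ (as $\bm I \subseteq \bm D$), so every free variable of $\bm I$ precedes every bound variable of $\bm I$ in $\bm N$; I also record that $\bm I \cap \free(Q) \subseteq \free(Q_p)$, so these are free in $Q_p$ as well.

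Next I would complete the VO below the neck. It suffices to exhibit \emph{any} valid VO for $Q_p$ whose topmost $n$ nodes form the chain $X_1, \ldots, X_n$; the simplest is a path VO, a single root-to-leaf chain through all of $\vars(Q_p)$, which is automatically a valid VO because the variables of every atom lie on the unique path. The only property left to secure is free-top, and here I split into two cases using Lemma~\ref{lem:static_part_free_top}. If $\free(Q_p)\setminus\bm I=\emptyset$, then all free variables of $Q_p$ lie in the (free-first) neck while all of $\vars(Q_p)\setminus\bm I$ is bound, so appending those bound variables in any order below $X_n$ keeps every free variable above every bound one. If instead $\free(Q_p)\setminus\bm I\neq\emptyset$, then Lemma~\ref{lem:static_part_free_top} forces $\bm I\subseteq\free(Q)$, so the entire neck is free; I then append the remaining free variables $\free(Q_p)\setminus\bm I$ directly below $X_n$ and the bound variables last. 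In both cases the resulting chain is a free-top VO for $Q_p$ with neck $\bm N$, i.e.\ an $\omega_d$-compatible free-top VO.

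The main obstacle is the free-top requirement rather than the neck order or the VO axioms. A chain that simply continues past $X_n$ could otherwise place a free variable beneath a bound one and violate free-top, and this is precisely the configuration that Lemma~\ref{lem:static_part_free_top} rules out: whenever a free variable of $Q_p$ escapes $\bm I$, the whole neck (hence every possible ancestor of that free variable coming from $\bm I$) is free. The remaining points---that free-top of $\omega_d$ transfers to the neck order, that $\bm I\cap\free(Q)\subseteq\free(Q_p)$, and that a path VO satisfies the VO axioms trivially---are routine bookkeeping.
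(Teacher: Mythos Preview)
Your proposal is correct and follows essentially the same approach as the paper: both construct a single path VO with the intersection set $\bm I$ on top (ordered as in $\omega_d$), split into the two cases $\free(Q_p)\subseteq\bm I$ versus $\free(Q_p)\not\subseteq\bm I$, and invoke Lemma~\ref{lem:static_part_free_top} in the second case to conclude the whole neck is free. If anything, you are slightly more explicit than the paper in justifying that $\bm I$ is totally ordered by $\omega_d$ (via Lemma~\ref{lem:neck_dynamic-coverage}) and that free-top of $\omega_d$ transfers to the neck, whereas the paper leaves these implicit.
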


\begin{proof}
Let $Q$ be a well-behaved conjunctive query, $\omega_d$ a well-structured VO for $Q$, and $Q_p$ a static part of $Q$ with intersection set $\bm I$. We show that there is at least 
one $\omega_d$-compatible free-top VO for 
$Q_p$.

First, assume that $\free(Q_p) \subseteq \bm I$. 
Consider the VO $\omega_p$ consisting of a single path whose top-down structure is as follows.
It starts with the variables in $\bm I$ ordered as in $\omega_d$, followed
by the remaining variables of $Q_p$ in arbitrary order. 
The VO $\omega_p$ is an $\omega_d$-compatible VO for $Q_p$. It is free-top, since $\omega_d$ is free-top.
\nop{
First, assume that $\free(Q_p) \subseteq \bm I$. 
Let $\bm O_1$ be an ordering of the variables in $\bm I$  
 that is compatible with $\omega^d$, 
$\bm B_1$ an ordering of the variables in $\bm I \setminus \bm F$ that is compatible with $\omega^d$, and $\bm B_2$ an arbitrary ordering of the variables in $\vars(Q_p)  \setminus \bm I$. Consider
the variable sequence $\bm S = \bm F, \bm B_1, \bm B_2$. 
Arranging the variables in $\bm S$ top-down such that the first variable becomes the root results in a VO for $Q_p$ that is free-top and compatible with $\omega_d$.
}

Now, assume that $\free(Q_p) \not\subseteq \bm I$, which means that
$Q_p$
has a free variable not contained in $\bm I$. 
By Lemma~\ref{lem:static_part_free_top}, it holds $\bm I \subseteq \free(Q_p)$.
We construct a VO $\omega_p$ consisting of a single path whose top-down structure is as follows.
It starts with the variables in $\bm I$ ordered as in $\omega_d$, followed
by the variables in $\free(Q_p) \setminus \bm I$ in arbitrary order, followed by the remaining variables of $Q_p$ in arbitrary order. 
Obviously, $\omega_p$ is a valid VO for $Q_p$ and compatible with $\omega_d$. By construction, it is free-top.
\nop{
Let $\bm F_1$ be an ordering of the variables in  
$\bm I$ that is compatible with $\omega^d$, 
$\bm F_1$ an arbitrary ordering of the variables in $\free(Q_p) \setminus \bm I$, 
and $\bm B$ an arbitrary ordering of the variables in $\vars(Q_p) \setminus \free(Q_p)$. 
Arranging the sequence $\bm S = \bm F_1, \bm F_2, \bm B$. 
top-down results in a free-top VO for $Q_p$ that is compatible with $\omega_d$.
}
\end{proof}

\begin{figure}[t]
	\centering
 \renewcommand{\arraystretch}{1.15}
  \renewcommand{\linenumber}{\makebox[2ex][r]{\rownumber\TAB}}
	\setcounter{magicrownumbers}{0}
  \begin{tabular}{@{\hskip 0.1in}l}
  \toprule
  $\textsf{CreateVO}$(well-behaved CQ $Q$) : well-structured VO for $Q$ \\[0.2ex]
  \midrule
   \linenumber \LET $\omega_d$ be a well-structured VO for $\dyn(Q)$ \\
   \linenumber \LET $Q_1, \ldots , Q_n$ be the static parts of $Q$\\
      \linenumber \LET $\omega_1, \ldots , \omega_n$ be $\omega_d$-compatible free-top VOs for $Q_1, \ldots , Q_n$, respectively\\
   \linenumber \LET $\omega = \omega_d$ \\
    \linenumber \FOREACH $i \in [n]$ \\
    \linenumber \TAB \LET $\omega = \combine(\omega, \omega_i)$ \\
  \linenumber \RETURN $\omega$ \\
  \bottomrule
  \end{tabular}
  \caption{Constructing a well-structured VO for a well-behaved conjunctive query. The existence of a well-structured VO for $\omega_d$ is guaranteed by prior work~\cite{DBLP:journals/lmcs/KaraNOZ23}. The existence of an $\omega_d$-compatible 
  free-top VO for each static part $Q_i$ is guaranteed by Lemma~\ref{lem:vo_neck_construction}. 
  If $\omega$ and $\omega_i$ do not share any variable, then 
  $\combine(\omega, \omega_i)$ is the disjoint union of the two VOs.
  If $\vars(\omega_i) \subseteq \vars(\omega)$, then  $\combine(\omega, \omega_i) = \omega$.
  Otherwise, $\combine(\omega, \omega_i)$ is obtained by making the child trees of variable $X$ in $\omega_i$ to child trees of $X$ in $\omega$, where $X$ is the lowest variable in $\omega_i$
  that is contained in the intersection set of $Q_i$.}
  \label{fig:createVO}
\end{figure}
Given a well-behaved conjunctive query $Q$, the procedure \textsf{CreateVO}
in Figure~\ref{fig:createVO} describes how to construct a well-structured VO for 
$Q$. The procedure first constructs a well-structured VO $\omega_d$  
for $\dyn(Q)$ (Line~1), whose existence is guaranteed by prior work~\cite{DBLP:journals/lmcs/KaraNOZ23}.
Then, it constructs the static parts $Q_1, \ldots , Q_n$ of $Q$ (Line~2).
For each static part $Q_i$, it then constructs an $\omega_d$-compatible free-top VO $\omega_i$ (Line~3), whose existence is guaranteed by Lemma~\ref{lem:vo_neck_construction}. 
The procedure obtains the final VO by combining the VOs $\omega_1, \ldots, \omega_n$ 
with $\omega_d$ (Lines 4--6).
The VO $\combine(\omega, \omega_i)$ in Line~6 is defined as follows. 
  If $\omega$ and $\omega_i$ do not share any variable, then 
  $\combine(\omega, \omega_i)$ is the disjoint union of the two VOs.
  If $\vars(\omega_i) \subseteq \vars(\omega)$, then  $\combine(\omega, \omega_i) = \omega$.
  Otherwise, $\combine(\omega, \omega_i)$ is obtained by making the child trees of variable $X$ in $\omega_i$ to child trees of $X$ in $\omega$, where $X$ is the lowest variable in $\omega_i$
  that is contained in the intersection set of $Q_i$.
\nop{We explain next how a VO $\omega_i$ 
for a static part is attached to $\omega^d$.

By Lemma~\ref{lem:neck_dynamic-coverage}, the neck of $\omega_i$ 
is included in a single atom of $\dyn(Q)$.
Hence, all neck variables are on a root-to-leaf path of 
$\omega^d$. If $\bm Y_i$ is empty, we add $\omega_i$ as a separate tree to $\omega$.
Otherwise, let $X$ be the lowest variable in $\omega_i$ that is contained 
in $\bm Y_i$ and let $Y$ be the lowest variable in $\omega^d$ 
that is contained in $\bm Y_i$.
We make the child trees of $X$ to child trees of $Y$. 
}

\nop{
\begin{example}
\rm
Consider the well-structured VO $\omega$ for the query $Q$, the well-structured VO 
$\omega_d$ for the query $\dyn(Q)$, and the free-top VOs for static parts of $Q$
that are compatible with $\omega_d$ given in Figure~\ref{fig:pol_class_construction_appendix} (second row).
The VO $\omega$ is obtained by joining the Vos for the static parts with $\omega_d$. 
\end{example}
}

The following lemma concludes the proof of Proposition~\ref{prop:pol_vo_dynamic_q-hierarchical}:

\begin{lemma}
\label{lem:createVO_creates_free_top}
For any well-behaved conjunctive query $Q$, $\textsf{CreateVO}(Q)$ is a well-structured VO for $Q$.
\end{lemma}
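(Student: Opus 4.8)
The plan is to verify that the forest $\omega$ returned by $\textsf{CreateVO}(Q)$ meets all three requirements of a well-structured VO: it is a valid VO for $Q$ (a bijection between nodes and $\vars(Q)$ with every body atom's variables on one root-to-leaf path), it is canonical, and it is free-top. Since $\omega$ is obtained by iterating $\omega \leftarrow \combine(\omega,\omega_i)$ from the well-structured VO $\omega_d$ of $\dyn(Q)$ (well-structured by prior work, with the $\omega_i$ being $\omega_d$-compatible free-top VOs furnished by Lemma~\ref{lem:vo_neck_construction}), I would prove these three properties as invariants maintained by each $\combine$ step, with the base case being the well-structuredness of $\omega_d$.

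The key preliminary observation concerns the attachment point of a step. By Lemma~\ref{lem:neck_dynamic-coverage}, the intersection set $\bm I_i$ of a static part $Q_i$ lies inside the variable set $\bm X$ of some dynamic atom $R^d(\bm X)$; as $\omega_d$ is a valid VO, $\bm X$ and hence $\bm I_i$ occupy a single root-to-leaf path of $\omega_d$. Because $\omega_i$ is $\omega_d$-compatible, its neck is exactly $\bm I_i$ ordered as in $\omega_d$, so its lowest neck variable $X$ is also the lowest $\bm I_i$-variable in $\omega_d$; thus $\combine$ is well defined and merely hangs the part of $\omega_i$ strictly below $X$ off the copy of $X$ in $\omega_d$. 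I would also record that any variable of $Q_i$ occurring in a dynamic atom already belongs to $\bm I_i$, so the freshly added below-$X$ variables occur in no dynamic atom and (since distinct static parts are distinct connected components of the reduced query) in no other part; this yields the bijection and shows that each step only grows the forest downward, never changing the ancestor set of an existing node.

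From here, validity and canonicity are short. For a static atom $S(\bm Y)$, its intersection variables sit on the root-to-$X$ path of $\omega_d$ while its non-intersection variables were strictly below the neck of $\omega_i$, hence below $X$, so the $\omega_i$-path of $S(\bm Y)$ is realised as a root-to-leaf path of the combined VO; dynamic atoms are untouched, so every atom still lies on one path. Canonicity holds because adding sibling branches below existing nodes leaves the root-to-leaf path ending at each dynamic atom, and therefore its inner-node set, identical to that in $\omega_d$. The step I expect to be the main obstacle is free-top, which I would settle by cases on a free variable $F$. If $F \in \omega_d$, its ancestors are unchanged and free by free-top-ness of $\omega_d$ (free-in-$\dyn(Q)$ and free-in-$Q$ agree on variables of $\omega_d$). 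If $F$ is a below-$X$ variable of some $\omega_i$, its ancestors split into those inside $\omega_i$, which are free by free-top-ness of $\omega_i$ (free-in-$Q_i$ and free-in-$Q$ agree on variables of $Q_i$), and the root-to-$X$ path of $\omega_d$; the decisive point is Lemma~\ref{lem:static_part_free_top}: since $F \in \free(Q_i)\setminus \bm I_i$ witnesses $\free(Q_i)\setminus \bm I_i \neq \emptyset$, we obtain $\bm I_i \subseteq \free(Q)$, so $X$ is free and, by free-top-ness of $\omega_d$, all its $\omega_d$-ancestors are free too. Assembling the three invariants for the final $\omega$ completes the argument.
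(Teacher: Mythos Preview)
Your proposal is correct and follows essentially the same route as the paper: both verify that $\omega$ is a valid VO, canonical, and free-top by leveraging the well-structuredness of $\omega_d$, the $\omega_d$-compatibility of the $\omega_i$, and Lemma~\ref{lem:neck_dynamic-coverage} to place $\bm I_i$ on a single root-to-leaf path. The only notable difference is in the free-top step: the paper argues by contradiction via Lemma~\ref{lem:static_part_reachability} and the definition of well-behaved, whereas you invoke the already-packaged Lemma~\ref{lem:static_part_free_top} to get $\bm I_i\subseteq\free(Q)$ and then use free-topness of $\omega_d$; your variant is in fact slightly cleaner, since it transparently handles the case where a bound ancestor of $F$ lies in $\vars(\omega_d)\setminus\bm I_i$, a case the paper's phrasing glosses over.
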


\begin{proof}
Consider a well-behaved conjunctive query $Q$.
Let  $Q_1, \ldots, Q_n$ be the static parts of $Q$ with intersection sets $\bm I_1, \ldots , \bm I_n$, respectively.
Let $\omega^d$ be the well-structured VO for $\dyn(Q)$ and
$\omega_1, \ldots, \omega_n$ the $\omega^d$-compatible free-top VOs for $Q_1, \ldots, Q_n$ constructed by $\textsf{CreateVO}(Q)$
in Lines 1--2. 
Let $\omega = \textsf{CreateVO}(Q)$. We show that
$\omega$ is a well-structured VO for $Q$.

We first show that $\omega$ is canonical. 
Since $\omega_d$ is canonical, the set of variables of each dynamic body atom $A$ in $Q$ is the set of inner nodes of a root-to-leaf path
in $\omega_d$ that ends with $A$.
Each root-to-leaf path in $\omega_d$ remains a root-to-leaf path in $\omega$. Hence, 
$\omega$ is canonical. 

We now show that the variables of any body atom $R(\bm X)$ of $Q$ is on a root-to-leaf path of $\omega$.     
The atom $R(\bm X)$ is either contained in $dyn(Q)$ or in a static part $Q_i$. 
In the former case, the variables in $\bm X$ lie on a root-to-leaf path in $\omega_d$, 
since $\omega_d$ is a VO for $\dyn(Q)$.
Hence, they are on a root-to leaf path in $\omega$, since 
$\textsf{CreateVO}(Q)$ preserves the structure of $\omega_d$.
In the latter case, the variables in $\bm X$ lie on a root-to-leaf path
of $\omega_i$, since $\omega_i$ is a VO for $Q_i$. 
By construction, all 
variables in the intersection set $\bm I_i$ of $Q_i$ are on a root-to-leaf path in $\omega_i$. 
By Lemma~\ref{lem:neck_dynamic-coverage}, all 
variables in $\bm I_i$ are contained in  a dynamic atom of $Q$. Since $\omega_d$ is a VO for $\dyn(Q)$, the variables in $\bm I_i$ are on a root-to-leaf path
in $\omega_d$. Hence, they are on a root-to-leaf path
in $\omega$. Let $X$ be the lowest variable in $\omega_i$
that is contained in $\bm I_i$. By construction, $X$ is also the lowest variable in $\omega$ that is contained in $\bm I_i$.
The subtrees of $X$ in $\omega_i$ become the subtrees of $X$
in $\omega$. Hence, each set of variables
that are on a root-to-leaf path in $\omega_i$
remain on a root-to-leaf path in $\omega$.

Now, we show that $\omega$ does not contain any bound variable on top of a free variable.  
For the sake of contradiction, assume that $\omega$ contains a bound variable $B$ on top 
of a free variable $F$. Since $\omega_d$ and $\omega_1, \ldots  ,\omega_n$ are free-top and the intersection variables of each $Q_i$ are on top of all other variables of $Q_i$, it must hold 
that $B \in \bm I_i$ 
and $F \in \vars(\omega_i) \setminus \bm I_i$, for some $i \in [n]$.
By Lemma~\ref{lem:static_part_reachability}, $Q$ contains a path $(F=X_1, \ldots, X_n = B)$ such that $X_1, \ldots, X_{n-1}$ are contained in $\vars(Q)\setminus \bm I_i$. This means that 
$Q$ contains a path that connects the head-atom of $Q$ with a 
dynamic body atom but is not head-safe. Hence, $Q$ is not 
well-behaved, which is a contradiction.
\end{proof}


\subsection{Proof of Proposition \ref{prop:acyclic_preprocessing_width}}
{\bf Proposition~\ref{prop:acyclic_preprocessing_width}.}
{\it For any conjunctive query $Q$ in $\lin$, it holds $\fw(Q)= 1$.}
\medskip

Similar to the case of $\pol$-queries, we use 
the procedure $\textsf{CreateVO}(Q)$ in Figure~\ref{fig:createVO} to 
construct a well-structured VO with preprocessing width $1$ for any given 
$\lin$-query $Q$.
Compared to the case of $\pol$-queries, we just need to be more precise 
about the structure of the VOs for the static parts, which are constructed in Line~3 of the procedure
$\textsf{CreateVO}$.
 In the following, we first recall two characterizations of free-connex 
 $\alpha$-acyclic queries that will be useful in the construction of VOs for $\lin$-queries 
 (Section~\ref{sec:characterization_acyclic}).
 Then, we will detail the construction of the VOs for the static parts of $\lin$-queries 
 (Section~\ref{sec:VO_for_static_part}). 
 Finally, we will show that the VO returned by the procedure $\textsf{CreateVO}$ has preprocessing width 1
 (Section~\ref{sec:VO_for_whole_query}).

\subsection{Characterizations of Free-Connex $\alpha$-acyclic Queries}
\label{sec:characterization_acyclic}
In Section~\ref{sec:prelims}, we defined free-connex $\alpha$-acyclic queries 
using join trees. In this section, we give two further characterizations of such queries. 
The first characterization is based on VOs: 
A conjunctive query is 
free-connex $\alpha$-acyclic if and only if it has a (not necessarily canonical)
free-top VO of preprocessing width 1~\cite{DBLP:journals/lmcs/KaraNOZ23}.
A further characterization is based on tree decompositions, which we introduce next.

\begin{definition}  
A tree decomposition $\calT$ of a conjunctive query $Q$ is a pair $(T, \chi)$, where $T$ is a tree with vertices $V(T)$ and $\chi: 
V(T) \rightarrow 2^{\vars(Q)}$ maps each node $v$ of $T$ to a subset $\chi(v)$ of variables of $Q$ such that the following properties hold:
\begin{enumerate}
    \item for every body atom $R(\bm X)$ in $Q$, the variable set $\bm X$ is a subset of $\chi(v)$ for some $v \in V(T)$,
    \item for every variable $X \in \vars(Q)$, the set $\{v | X \in \chi(v)\}$ is a non-empty connected subtree of $T$. 
\end{enumerate}
The sets $\chi(v)$ are called the {\em bags} of the tree decomposition.
A tree decomposition is {\em free-connex} if it has a connected sub-tree
such that the union of the bags of this sub-tree is the set of free variables of $Q$.
\end{definition}  
Given a tree decomposition $\calT = (T, \chi)$ for a conjunctive query $Q$, the {\em fractional hypertree width} 
$\fhtw$ of $\calT$ is defined as $\fhtw(\calT)  = \max_{v \in V(T)} \rho^*_Q(\chi(v))$.
A conjunctive query is free-connex $\alpha$-acyclic if and only if it has a free-connex tree decomposition of fractional hypertree 
width 1. 

It follows from prior work:
\begin{lemma}[\cite{OlteanuZ15}]
\label{lem:tree_dec_to_VO}
 Every tree decomposition for a conjunctive query $Q$ with  fractional hypertree width $1$ 
can be transformed into a VO for $Q$ with preprocessing width $1$.
\end{lemma}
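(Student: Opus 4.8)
The plan is to give an explicit construction turning a width-$1$ tree decomposition into a VO and then to verify the two VO axioms together with the width bound. First I would record the structural consequence of $\fhtw(\calT)=1$: a short averaging argument shows that $\rho^*_Q(\chi(v))=1$ forces a single atom to contain the whole bag, so every bag $\chi(v)$ is contained in the variable set of one atom $A_v$ of $Q$. I then root $T$ at an arbitrary node and, writing $\mathrm{new}(v)=\chi(v)\setminus\bigcup_{u\text{ ancestor of }v}\chi(u)$, note that by the connectedness property the sets $\mathrm{new}(v)$ partition $\vars(Q)$, so each variable $X$ has a unique \emph{introduction node} $v_X$ with $X\in\mathrm{new}(v_X)$. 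The VO $\omega$ is built top-down: the variables of $\mathrm{new}(r)$ at the root bag are laid out as a path; for a node $v$ with parent $u$, the variables $\mathrm{new}(v)$ are laid out as a path hung below the $\omega$-lowest variable of the separator $\chi(v)\cap\chi(u)$ (which, by the invariant below, already lies on one root-to-leaf path, so ``lowest'' is well defined; if the separator is empty the path starts a fresh tree of the forest). Distinct children of $u$ branch off independently, mirroring the branching of $T$.

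Second I would check that $\omega$ is a valid VO. The invariant to maintain, by induction on the construction order, is that the variables of every bag $\chi(v)$ lie on a single root-to-leaf path of $\omega$: the separator $\chi(v)\cap\chi(u)$ sits on the ancestor path through $u$'s region, and $\mathrm{new}(v)$ is appended below its lowest element, so all of $\chi(v)$ remains on one path. Since every atom $A$ satisfies $\vars(A)\subseteq\chi(v)$ for some $v$ by tree-decomposition property~(1), its variables lie on a root-to-leaf path, which is exactly the VO requirement.

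The core of the argument is the width bound, which I would derive from the single claim $\dep_\omega(X)\subseteq\chi(v_X)$ for every variable $X$. Granting it, $\{X\}\cup\dep_\omega(X)\subseteq\chi(v_X)\subseteq\vars(A_{v_X})$, and $A_{v_X}\in\omega_X$ because $X\in\vars(A_{v_X})$ places the attachment point of $A_{v_X}$ inside $\omega_X$; hence $\rho^*_{Q_X}(\{X\}\cup\dep_\omega(X))=1$ and $\fw(\omega)=1$. To prove the claim I would first record two order-preservation facts guaranteed by the construction: if $Y\prec_\omega X$ then $v_Y$ is a $T$-ancestor of (or equals) $v_X$, and if $Z\preceq_\omega X$ then $v_Z$ is a $T$-descendant of (or equals) $v_X$. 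Now take $Y\in\dep_\omega(X)$, witnessed by $Z\preceq_\omega X$ and an atom $B$ with $Y,Z\in\vars(B)\subseteq\chi(w)$. Writing $T_Y$ (resp.\ $T_Z$) for the connected subtree of bags containing $Y$ (resp.\ $Z$), the running-intersection property gives that $v_Y$ and $v_Z$ both lie on the root-to-$w$ path of $T$, while the two order-preservation facts give $v_Y\preceq_T v_X\preceq_T v_Z$. Thus along that path the nodes appear in the order $v_Y,v_X,v_Z,w$, so $v_X$ lies on the $T$-path from $v_Y$ to $w$; since both endpoints belong to the connected set $T_Y$, so does $v_X$, i.e.\ $Y\in\chi(v_X)$.

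The step I expect to be the main obstacle is precisely this width claim, and within it the two order-preservation facts: they are ``clear from the picture'' but depend on spelling out exactly how child segments attach below the separator variable and how the within-segment path orders variables, so that $\omega$-ancestry across segments faithfully reflects $T$-ancestry of the corresponding nodes. Verifying that an attachment below a separator variable never places a would-be descendant of $X$ above $X$ (and symmetrically) is the delicate bookkeeping; everything else---the reduction to single-atom bags, VO validity, and the tree-median argument---is routine once these facts are established.
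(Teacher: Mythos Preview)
Your construction and verification are correct and follow essentially the same approach as the paper's proof sketch: replace each bag by a path of its newly-introduced variables and wire these paths together following the structure of $T$. Your attachment rule (hang $\mathrm{new}(v)$ below the $\omega$-lowest variable of the separator $\chi(v)\cap\chi(u)$) differs slightly from the paper's (hang it below the last variable of the parent's path $\bm P_{\bm B}$), but both variants work; in fact you go further than the paper, which only states the construction, by supplying the width verification via the inclusion $\dep_\omega(X)\subseteq\chi(v_X)$ and the tree-median argument, all of which the paper's sketch omits.
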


\begin{proof}[Proof sketch]
We briefly sketch the construction. Consider a tree decomposition $\calT$ 
with fractonal hypertree width $1$. 
 For any bag $\bm B$ of $\calT$, we denote by $\bm B'$ the subset of $\bm B$ 
such that a variable $X$ of $Q$ is in $\bm B'$ if and only if $\bm B$ is the highest bag in $\calT$
containing $X$. For each bag $\bm B$ let $\bm P_{\bm B}$ be a top-down path of the variables in 
$\bm B'$ in arbitrary order.  We obtain the desired VO from 
$\calT$ by replacing each bag $\bm B$ by the path $\bm P_{\bm B}$, such that each incoming 
edge to $\bm B$ becomes an incoming edge to the first variable in $\bm P_{\bm B}$ 
and each outgoing edge from $\bm B$ becomes an outgoing edge from the last 
variable in $\bm P_{\bm B}$. 
\end{proof}

Using the translation in the above proof sketch, we can easily show:

\begin{lemma}
\label{free-connex-to-neck}
Consider  a free-connex $\alpha$-acyclic conjunctive query $Q$ containing an atom  $R(\bm X)$ such that  
$\free(Q) \subseteq \bm X$ or $\bm X\subseteq \free(Q)$.
Let $\bm N$ be an ordering of the variables in $\bm X$ such that  the free variables come before the bound ones.
The query $Q$  has a free-top VO with neck $\bm N$ 
and preprocessing width $1$.
\end{lemma}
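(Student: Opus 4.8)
The plan is to exploit the two characterisations of free-connex $\alpha$-acyclic queries recalled above: such a query has a free-connex tree decomposition $\calT = (T,\chi)$ of fractional hypertree width $1$, and any fhtw-$1$ tree decomposition can be turned into a VO of preprocessing width $1$ by Lemma~\ref{lem:tree_dec_to_VO}. Since the translation in that lemma replaces each bag $\bm B$ by a top-down path of the variables that are introduced in $\bm B$ (in arbitrary internal order), it suffices to produce a \emph{rooted} fhtw-$1$ tree decomposition of $Q$ whose root bag equals $\bm X$: the root bag then yields exactly the top path of the VO, and choosing its internal order to be $\bm N$ makes $\bm N$ the neck. The remaining work is to root the decomposition so that, additionally, the resulting VO is free-top. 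This is where the two hypotheses $\free(Q)\subseteq \bm X$ and $\bm X\subseteq \free(Q)$ are used, one per case.

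\textbf{Refining so that $\bm X$ is a bag.} First I would note that because $R(\bm X)$ is an atom, $\bm X\subseteq \chi(v_0)$ for some node $v_0$, and since $R$ covers $\bm X$ with weight $1$ we have $\rho^*_Q(\bm X)=1$; hence adding a fresh node with bag $\bm X$ adjacent to $v_0$ preserves both the tree-decomposition properties and $\fhtw=1$. In \textbf{Case 1} ($\free(Q)\subseteq \bm X$) I would simply root the decomposition at this bag $\bm X$. Then $\bm X$ is the highest bag overall, so every variable of $\bm X$ is introduced at the root and the translation makes the root path consist of all of $\bm X$; ordering it free-first gives the neck $\bm N$. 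Free-topness is immediate: every free variable lies in $\free(Q)\subseteq \bm X$, hence is introduced at the root before any bound variable, so no bound variable can be an ancestor of a free one. Lemma~\ref{lem:tree_dec_to_VO} supplies $\fw=1$.

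\textbf{Case 2 and the main obstacle.} In \textbf{Case 2} ($\bm X\subseteq \free(Q)$) the bag $\bm X$ contains only free variables, and I would insert it into the free region $T_f$ of $\calT$, i.e.\ the connected subtree with $\bigcup_{v\in T_f}\chi(v)=\free(Q)$ (so the bags of $T_f$ carry only free variables). The key point—and the step I expect to be the main obstacle—is that $\bm X$ can indeed be attached inside $T_f$. The argument I would give: let $w$ be the node of $T_f$ closest to $v_0$ (the entry point of $v_0$ into $T_f$); for each $X\in\bm X$ the set of bags containing $X$ is a connected subtree meeting both $v_0$ and $T_f$, and every such bag-path passes through $w$, so $\bm X\subseteq\chi(w)$ with $w\in T_f$. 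Thus I attach the new bag $\bm X$ to $w$ within $T_f$ and root the whole decomposition there. Now $T_f$ is the top connected region, every free variable is introduced inside it and every bound variable strictly below, so the translated VO is free-top; the root bag again produces the neck $\bm N$ (any order of the all-free set $\bm X$), and $\fw=1$ by Lemma~\ref{lem:tree_dec_to_VO}. Combining the two cases completes the proof; the hypothesis that $\bm X$ either contains all free variables or is contained in them is exactly what lets the single root bag $\bm X$ be placed consistently with free-topness.
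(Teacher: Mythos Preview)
Your proposal is correct and follows essentially the same route as the paper: take an fhtw-$1$ (free-connex) tree decomposition, root it at a bag equal to $\bm X$, and translate to a VO via Lemma~\ref{lem:tree_dec_to_VO}, handling the two hypotheses on $\bm X$ versus $\free(Q)$ separately. Your Case~2 argument that the $T_f$-entry point $w$ already satisfies $\bm X\subseteq\chi(w)$ is in fact slightly cleaner than the paper's subcase split, but the overall strategy is the same.
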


\begin{proof}
Consider a free-connex $\alpha$-acyclic conjunctive query $Q$ and an atom  $R(\bm X)$ of $Q$.
Let $\bm N$ be an arbitrary ordering of the variables in $\bm X$ such that the free variables come before the bound ones.

First, consider the case that $\free(Q) \subseteq \bm X$.
Consider a tree decomposition $\calT$ for 
$Q$ whose fractional hypertree width is 1 and which is rooted at 
a bag $\bm B$ containing $\bm X$. Let $\omega$ be the VO obtained 
from $\calT$ by the construction described in the proof of Lemma~\ref{lem:tree_dec_to_VO}, 
such that  (1)  the variables from $\bm X$
are on top of the other variables in $\bm B$ and (2) the top-down ordering of the variables from $\bm X$
corresponds to the left-to-right ordering given by $N$.
By construction, $\omega$ has neck $N$. By Lemma~\ref{lem:tree_dec_to_VO},
$\omega$ has preprocessing width 1. Since all free variables are 
contained in $\bm X$, $\omega$ is free-top.
  
Now, consider the case that $\bm X \subseteq \free(Q)$.
Consider  a free-connex tree decomposition $\calT$ 
for $Q$ whose fractional hypertree width is 1.
Let $\calT'$ be the connected subtree of $\calT$ such that the union of the bags of $\calT'$
is the set of free variables of $Q$.
We translate $\calT$ into a (possibly different) free-connex tree decomposition
$\hat{\calT}$ for $Q$ with a designated bag $\hat{\bm B}$.  
First consider the case that 
$\calT'$ has a bag $\bm B$ subsuming $\bm X$.
In this case, we set $\hat{\calT} = \calT$  and $\hat{\bm B} = \bm B$.
Now, consider the case that $\calT'$ does not have a bag containing $\bm X$.
In this case, $\calT$ must have a bag $\bm B$ that contains
the variables in $\bm X$ and is directly connected to a bag $\bm B'$ in $\calT'$.
Let $\hat{\calT}$ be the decomposition that arises from $\calT$ by adding a bag $\hat{\bm B}$ 
between $\bm B'$ and $\bm B$ that consists of the variables in $\bm X$.
The tree $\hat{\calT}$ is a valid free-connex tree-decomposition for $Q$ of fractional 
hypertree width 1. 
We root the tree decomposition $\hat{\calT}$ at the bag $\hat{\bm B}$ and convert it into a VO
$\omega$ as described in the poof of Lemma~\ref{lem:tree_dec_to_VO}, 
such that  (1)  the variables from $\bm X$
are on top of the other variables in $\hat{\bm B}$ and (2) the top-down ordering of the variables from $\bm X$
corresponds to the left-to-right ordering given by $\bm N$.
By construction, $\omega$ has neck $\bm N$ and is free-top. By Lemma~\ref{lem:tree_dec_to_VO},
it has preprocessing width 1.
\end{proof}

\subsection{Constructing VOs for Static Parts}
\label{sec:VO_for_static_part}
Consider a conjunctive query $Q$ in $\lin$
and a well-structured VO $\omega_d$ for $\dyn(Q)$.
Let $Q_p$ be a static part of $Q$ with intersection set 
$\bm I$. We explain how to construct a free-top $\omega_d$-compatible 
VO for $Q_p$.  
By definition, $Q$ is free-connex $\alpha$-acyclic, hence, it has a 
free-top VO of preprocessing width 1.
Let $Q'$ and $Q_p'$ be the queries obtained from $Q$ and respectively $Q_p$ by adding a fresh atom 
$R(\bm I)$. Since $\bm I$ is subsumed by a dynamic atom in $Q$ (Lemma~\ref{lem:neck_dynamic-coverage}),
the query  $Q'$ must be free-connex acyclic either.
Hence, it has a free-top VO of preprocessing width $1$. 
We start with such a VO $\omega'$ for $Q'$
and eliminate one-by-one all variables that do not appear in $Q_p'$. 
If a removed variable $X$ has a  parent variable $Y$, the children of $X$ become the children of $Y$.
If a removed variable $X$ does not have a parent, the child trees of $X$ become independent trees. 
We denote the obtained VO by $\omega_p'$.

\begin{claim}
\label{claim:VO_static_part}
The VO $\omega_p'$ is a free-top VO for $Q_p'$ with preprocessing width $1$.
\end{claim}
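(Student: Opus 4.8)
The plan is to check the three requirements in the claim separately---that $\omega_p'$ is a valid VO for $Q_p'$, that it is free-top, and that its preprocessing width is $1$---all of which rest on one structural fact about the elimination operation: deleting a node and reattaching its children to its (former) parent preserves the ancestor/descendant relation among the surviving variables, and preserves the property that any set of variables lying on a common root-to-leaf path continues to lie on one. I would record this fact first, by a routine induction on the number of eliminated variables, and then reuse it three times.

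Granting it, validity is immediate. Every atom of $Q_p'$ is an atom of $Q'$ (the atoms of $Q_p$, i.e.\ those in $\calA$, together with the fresh atom $R(\bm I)$), all of its variables lie in $\vars(Q_p')$ and hence survive elimination, and they lie on a root-to-leaf path of $\omega'$, so they still do in $\omega_p'$; the surviving node set is exactly $\vars(Q_p')$, giving the one-to-one correspondence. Free-topness follows by observing that a variable of $\vars(Q_p)$ is free in $Q_p'$ iff it is free in $Q$ (since $\free(Q_p)=\vars(Q_p)\cap\free(Q)$ and $\free(Q')=\free(Q)$): if some variable bound in $Q_p'$ were an ancestor of a free one in $\omega_p'$, the preserved ancestry would place a variable bound in $Q'$ above a free one in $\omega'$, contradicting that $\omega'$ is free-top.

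The crux is the width bound, and here the main obstacle is not a fractional-cover computation but making sure the atom witnessing width $1$ actually belongs to $Q_p'$, rather than being a dynamic atom or a static atom of another part. I would first record the elementary fact that $\rho^*_{Q}(\bm F)\le 1$ holds iff a single atom of $Q$ contains $\bm F$: from $\sum_e\lambda_e\le 1$ together with $\sum_{e\ni x}\lambda_e\ge 1$ for each $x\in\bm F$ one forces $\sum_e\lambda_e=1$ and every positively weighted edge to contain all of $\bm F$. Since $\omega'$ has preprocessing width $1$, for each surviving $X$ there is an atom $A^*$ in the subtree $\omega'_X$ with $\{X\}\cup\dep_{\omega'}(X)\subseteq\vars(A^*)$. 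Because ancestors (resp.\ descendants) in $\omega_p'$ remain such in $\omega'$ and atoms of $Q_p'$ are atoms of $Q'$, we have $\dep_{\omega_p'}(X)\subseteq\dep_{\omega'}(X)$, so $A^*$ already covers the target $\{X\}\cup\dep_{\omega_p'}(X)$, all of whose variables lie in $\vars(Q_p)$.

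It then remains to replace $A^*$ by a genuine $Q_p'$-atom when $A^*\notin Q_p'$, and this is the delicate point. Here I would use the definition of a static part as (the static atoms lifting) a connected component of the reduced query: a variable of $Q_p$ outside the intersection set $\bm I$ occurs only in atoms of $\calA$, and any atom outside $\calA$ meets $\vars(Q_p)$ only in variables of $\bm I$. Consequently, if $A^*\notin\calA$ then $\{X\}\cup\dep_{\omega_p'}(X)\subseteq\bm I$, which is covered by the fresh atom $R(\bm I)\in Q_p'$; since $\bm I=\vars(R(\bm I))$ lies on a single root-to-leaf path of $\omega_p'$ and $R(\bm I)$ hangs below the lowest $\bm I$-variable, this atom lies inside $(\omega_p')_X$. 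Otherwise $A^*\in\calA\subseteq Q_p'$ serves directly, and in either case the witnessing atom sits in the subtree of $X$ because its lowest variable is $X$ or a descendant of $X$. This gives $\rho^*_{(Q_p')_X}(\{X\}\cup\dep_{\omega_p'}(X))\le 1$ for every $X$, i.e.\ $\fw(\omega_p')=1$, completing the claim.
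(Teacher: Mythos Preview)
Your proposal is correct and follows essentially the same approach as the paper. The paper dispatches validity and free-topness with ``immediately from construction'' and argues the width bound by contradiction via the single sentence ``it follows from the construction of $Q_p'$ that it contains an atom $T'(\bm Y')$ such that $(\vars(Q_p)\cap\bm Y)\subseteq\bm Y'$,'' whereas you unpack exactly this sentence via the case split $A^*\in\calA$ versus $A^*\notin\calA$ (the latter forcing the target set into $\bm I$, covered by $R(\bm I)$); you also make explicit that the witnessing atom lies in the subtree of $X$, a point the paper leaves implicit.
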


\begin{proof}
It follows immediately from construction that the variables of each atom in $Q_p'$ are on a root-to-leaf path 
in $\omega_p'$. Likewise, $\omega_p'$ is free-top. It remains to show that for each variables $X$ in 
$\omega_p'$, the set $\bm S = \{X\} \cup \dep_{\omega_{p}'}(X)$ is subsumed by the variable set of 
a single atom in $Q_p'$. For the sake of contradiction,
assume that we need at least two atoms from $Q_p'$ to cover the variables in  $\bm S$.
It must hold  $\bm S \subseteq \{X\} \cup \dep_{\omega'}(X)$. The query $Q'$ must have an atom 
$T(\bm Y)$ such that $\bm Y$ subsumes the variables in $\{X\} \cup \dep_{\omega'}(X)$. It follows from the construction of 
$Q_p'$ that it contains an atom $T'(\bm Y')$ such that $(\vars(Q_p) \cap \bm Y) \subseteq \bm Y'$.
 Hence, $\bm Y'$ subsumes all variables in $\bm S$, which is a contradiction to our assumption
 that we need at least two atoms from $Q_p'$ to cover the variables in  $\bm S$.
\end{proof}

Claim~\ref{claim:VO_static_part} implies that $Q_p'$ is free-connex $\alpha$-acyclic. 
It follows from Lemma~\ref{lem:static_part_free_top} that 
$\free(Q_p') \subseteq \bm I$ or $\bm I\subseteq \free(Q_p')$.
Let $\bm N = (X_1, \ldots, X_n)$ be an ordering of the variables in $\bm I$ such that 
$i <j$ if and only if $X_i$ is above $X_j$ in $\omega_d$.  
By Lemma~\ref{free-connex-to-neck}, we conclude that 
$Q_p'$  has a free-top $\omega_d$-compatible VO $\omega_p$ with neck $\bm N$ 
and preprocessing width $1$.
The VO $\omega_i$ is a free-top VO for $Q_p$.

\subsection{Constructing a VO for the Whole Query}
\label{sec:VO_for_whole_query}
Consider a conjunctive query $Q$ in $\lin$.
We execute the procedure $\textsf{CreateVO}(Q)$ in 
Figure~\ref{fig:createVO}.
Let $\omega_d$ be a a well-structured VO for $Q$ constructed in Line~1
of the procedure. 
Let $Q_1, \ldots, Q_n$ be the static parts of $Q$ with intersection sets 
$\bm I_1, \ldots , \bm I_n$, respectively.
Let $\omega_1, \ldots, \omega_n$ be free-top $\omega_d$-compatible VOs
for $Q_1, \ldots, Q_n$, respectively, constructed in Line~3 of the procedure 
as described in Section~\ref{sec:VO_for_static_part}. 
 Let $\omega$ be the VO constructed for $Q$ by $\textsf{CreateVO}(Q)$.
 It follows from Lemma~\ref{lem:createVO_creates_free_top} that 
$\omega$ is a well-structured VO for $Q$.
In the following, we show that the  preprocessing width of $\omega$ is
1, which concludes the proof.

Consider a variable $X$ that is contained in a dynamic atom of $Q$.
Since $\omega_d$ is well-structured, all ancestor variables of $X$ are contained  in each dynamic atom below $X$. This implies 
$\rho^*_{Q_X}(\{X\} \cup \dep_{\omega}(X)) =1$.
Consider now a variable $X$ that does not appear in $\dyn(Q)$ but in some static part 
$Q_i$. Let $Q_i'$ be the extension of $Q_i$ by a fresh atom $R(\bm I_i)$.
By Claim~\ref{claim:VO_static_part}, $\omega_i$ is a VO for $Q_i'$ with preprocessing width 1. 
Hence, the set $\bm S = \{X\} \cup \dep_{\omega_i}(X)$ is subsumed by the set of variables of a single atom $A$ in $Q'$.
If $A$ is an atom in $Q_i$, this means that
$\bm S$ is subsumed by the set of variables of the atom $A$ in $Q_i$, hence in $Q$.
If $A$ is the atom $R(\bm I_i)$, this means that
$\bm S$ is subsumed by a dynamic atom of $Q$.
Overall, we derive that the preprocessing width of $\omega$ is $1$.

\subsection{Proof of Proposition \ref{prop:safe_view_trees}}
{\bf Proposition~\ref{prop:safe_view_trees}.}
{\it For any query $Q$ in $\pol$, VO $\omega$ for $Q$, and database of size $N$, $\rewrite(\omega)$ is a safe rewriting for $Q$
    with $\bigO{N^{\fw}}$ computation time, where $\fw$ is the preprocessing width of $Q$. 
}

\medskip
Given a query $Q$ in $\pol$, a VO $\omega$ for $Q$, and database of size $N$, let $\calT$ be the set of view trees 
returned by $\rewrite(\omega)$.
We show that $\calT$ satisfies all properties for safe 
rewritings, as specified in Definition~\ref{def:safe_rewriting}. 

We first show that the {\em correctness properties} for safe rewritings hold. 
By the definition of VOs, the variables of each body atom must be on 
a root-to-leaf path and each atom is placed under its lowest variable in the VO. 
This implies that for each connected component $\calC$ of $Q$, 
all body atoms in $\calC$ must be in a single tree of $\calT$ (first correctness property). Since all atoms containing a variable $X$ must be below $X$ in the VO, all atoms that contain $X$ must be in the subtree rooted at a projection view $V_X$ (second correctness property).

Now we show that the {\em update property} holds for $\calT$.
By definition, $\omega$ is canonical.
Hence, the set of variables of  each dynamic atom contains all ancestor variables. 
This implies that the set of free variables of each dynamic view contains 
all the ancestor variables of that view. Consider a dynamic view $V_X$ and its sibling view $V_Y$. 
We consider two cases. First, assume that  $V_Y$ is an atom.
In this case, the variables of $V_Y$ are included in the set of  ancestor variables of $V_Y$. 
Hence, the set of free variables of $V_X$ contains the set of variables of $V_Y$.
Now, assume that $V_Y$ is a projection view. In this case,  the set of free variables of $V_Y$ can only contain variables that are ancestors of $V_X$. Hence also in this case, the set of free variables of $V_X$ covers all variables of $V_Y$.

The {\em enumeration property} follows simply from the fact that the $\calT$ follows a free-top VO. 

It remains to show that all views in $\calT$ can be computed in $\bigO{N^{\fw}}$ time.
The proof follows closely prior work using view trees~\cite{ICDT23_access_pattern}. Let $T$ be a tree in $\calT$. 
We show by induction on the structure of $T$   that every view in $T$
can be computed in $\bigO{N^{\fw}}$ time.

\smallskip
\textit{Base Case}:
Each leaf atom can obviously be computed in 
$\bigO{N^{\fw}}$ time. 

\smallskip
\textit{Induction Step}:
Consider a projection view a $V_X'(\dep(X))$
in $T$.   
Such a view results from its single child view by projecting away 
$X$.
By induction hypothesis,  the view $V_X$ can be computed in $\bigO{N^{\fw}}$ time. Hence, it is of size $\bigO{N^{\fw}}$. 
The view $V_X'$ can be constructed from $V_X$ by a single scan, which takes  
$\bigO{N^{\fw}}$ time.

Consider now a join view  $V_X(\{X\} \cup \dep(X))$ in $T$.   
Let $V_1(\calS_1), \ldots,  V_k(\calS_k)$ be the child 
views of $V_X$.
By induction hypothesis, each child view can be computed 
in $\bigO{N^{\fw}}$ time.
By construction of $T$, any variable that appears in at least 
two of the child views must be contained in the schema of $V_X$.
This mean that variables that do not appear in $\{X\} \cup \dep(X)$
cannot be join variables among the child views of $V_X$.
In each child view we project away 
all non-join variables that do not appear in $\{X\} \cup \dep(X)$, 
using $\bigO{N^{\fw}}$ time.  
Let $V_1'(\calS_1'), \ldots ,V_k'(\calS_k')$ be the resulting 
child views. 
Using a worst-case optimal join algorithm~\cite{Ngo:SIGREC:2013}, we 
then compute the view $V_X$ from its child views 
in $\bigO{|V_X|}$ time.
The size of $V_X$ is upper-bounded by $\bigO{N^{p}}$ where 
$p = \rho^*_{Q_X}(\{X\} \cup \dep(X))$
and $Q_X$ is the query that joins 
all atoms in the subtree rooted at $V_X$.
Since $p \leq \fw$, the view $V_X$ can be computed in $\bigO{N^{\fw}}$ time.
\section{Missing Details in Section~\ref{sec:evaluation_expo}}
\label{app:evaluation_expo}
\subsection{Proof of Proposition \ref{prop:max_dynamic_database}}
{\bf Proposition~\ref{prop:max_dynamic_database}.}
{\it     For any conjunctive query $Q$ in $\expo \setminus \pol$ and database $D$ of size $N$,
    the maximum dynamic database $D^d_{\max}$ for query $Q$ and database $D$ has   $\bigO{N^{\rho^*(\stat(Q))}}$ size. 
}

\medskip 

Consider a conjunctive query $Q$ in $\expo \setminus \pol$ and database $D$ of size $N$.
Given a dynamic body atom $R^d(\bm X)$, let $Q_R'$ be the variant of $Q_R$, where all variables are free. 
The size of the result of $Q_R'$ is bounded by $\rho^*(\stat(Q')$ \cite{AtseriasGM13}.
We obtain the result of $Q_R$ by projecting the result of $Q_R'$ onto $\bm X$.
Hence, the size of the result of  $Q_R$ is bounded by $\rho^*(\stat(Q')$.
Since $D^d_{\max}$ consists of the results of the queries $Q_R$ for constantly many 
dynamic relations $R^d$, the size of $D^d_{\max}$ is $\bigO{N^{\rho^*(\stat(Q))}}$.

\section{Missing Details in Section~\ref{sec:lower_bounds}}
\label{app:lower_bounds}
We prove the lower bound in Theorem~\ref{thm:dichotomy_linear}, which states:

\begin{proposition}
\label{prop:lower_bound_dichotomy}
Let a conjunctive query $Q$ and a database of size $N$.
If $Q$ is not in $\lin$ and does not have repeating relation symbols, then it  
cannot be maintained
with $\bigO{N}$ preprocessing time, $\bigO{1}$
update time, and $\bigO{1}$ enumeration delay,
unless the OMv or the BMM conjecture fail. 
\end{proposition}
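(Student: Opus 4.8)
The plan is a case analysis on why $Q \notin \lin$; in each case I would assume toward a contradiction that $Q$ admits the forbidden combination of $\bigO{N}$ preprocessing, $\bigO{1}$ update, and $\bigO{1}$ enumeration delay, and derive a refutation of one of the conjectures. By Definition~\ref{def:class_lin}, a query lies in $\lin$ exactly when it is both free-connex $\alpha$-acyclic and well-behaved. Hence a query $Q \notin \lin$ without repeating relation symbols falls into (at least) one of three cases: (1) $Q$ is not free-connex $\alpha$-acyclic; (2) $Q$ has a path connecting two dynamic body atoms that is not body-safe; or (3) $Q$ has a path connecting a dynamic body atom with the head atom that is not head-safe. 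The two base hardness results I would rely on are Proposition~\ref{prop:Q_RST_hard} for $Q_{RST}() = R^d(A), S^s(A,B), T^d(B)$ (under OuMv) and its announced OMv-analogue for $Q_{ST}(A) = S^s(A,B), T^d(B)$, together with the classical enumeration lower bound of Bagan, Durand, and Grandjean~\cite{BaganDG07}.

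For Case (1), I would simply invoke~\cite{BaganDG07}: a conjunctive query without repeating relation symbols that is not free-connex $\alpha$-acyclic admits no algorithm enumerating its result with constant delay after linear-time preprocessing, unless the Boolean Matrix Multiplication conjecture fails. Since the target regime already demands $\bigO{N}$ preprocessing followed by $\bigO{1}$-delay enumeration even before any update is applied, this case requires nothing beyond citing the known result, the static/dynamic distinction being irrelevant because the obstruction lives entirely in the initial snapshot.

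For Cases (2) and (3), the plan is to design linear-size reductions from $Q_{RST}$ and $Q_{ST}$, respectively. Fix a shortest offending path $\bm P = (X_1, \ldots, X_n)$, with $X_1$ a variable of the dynamic atom $R^d(\bm X)$ and $X_n$ a variable of either the second dynamic atom $T^d(\bm Z)$ (Case 2) or the head atom (Case 3). Given a $Q_{RST}$ (resp.\ $Q_{ST}$) instance of size $N$ encoding an $n \times n$ matrix $M$ in the static relation $S$, I would build a $Q$-instance of size $\bigO{N}$ by threading $M$ through the atoms along $\bm P$ — using intermediate path variables as carriers of pairs so that the join over the path collapses to the relation $M$ between $X_1$- and $X_n$-values — and pinning every variable of $Q$ irrelevant to this encoding (all free variables in Case 2; all free variables other than the head-endpoint $X_n$ in Case 3) to a single fresh constant, so that the surviving atoms are trivially satisfied and the result of $Q$ reflects the $Q_{RST}$ (resp.\ $Q_{ST}$) answer. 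The vectors $u_r, v_r$ (resp.\ $v_r$) are then realised by single-tuple updates to the endpoint dynamic atom(s), padding their non-$X_1/X_n$ positions with the constant; the answer of each round is read off via the enumeration of $Q$. An assumed maintenance of $Q$ with $\bigO{N}$ preprocessing, $\bigO{1}$ update, and $\bigO{1}$ delay thus maintains $Q_{RST}$ (resp.\ $Q_{ST}$) within the same asymptotics. Since $\bigO{N} \subseteq \bigO{N^{3/2-\gamma}}$ and $\bigO{1} \subseteq \bigO{N^{1/2-\gamma}}$ for every $\gamma < 1/2$, this contradicts Proposition~\ref{prop:Q_RST_hard} (resp.\ its analogue) for any such $\gamma$, and hence the OuMv/OMv conjecture.

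The hard part will be establishing the correctness of the threading-and-pinning construction in full generality: arguing that a shortest non-(body/head)-safe path yields a structure along which $M$ can be encoded with \emph{no} single atom spanning both $X_1$ and $X_n$ (which is precisely what body/head-safety would have supplied); that intermediate path atoms, whether static or dynamic, can be initialised once and left untouched across all rounds; and that pinning the irrelevant and free variables introduces neither spurious result tuples nor a collapse of the intended $M$-structure, so that in Case (3) the projection of the enumerated tuples onto $X_n$ is exactly the support of $Mv_r$. Here the absence of repeating relation symbols is essential, as it lets me populate the atoms along and off the path with independent contents and update the two endpoint dynamic relations separately. To control the adjacency structure of $\bm P$ and keep the encoding of $M$ exact, I would mirror the shortest-path arguments already used in the proof of Proposition~\ref{prop:safety_properties_q-hierarchical} (Claims~1 and~2).
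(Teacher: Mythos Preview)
Your proposal is correct and follows essentially the same approach as the paper: the identical three-way case split on why $Q\notin\lin$, the direct appeal to~\cite{BaganDG07} for Case~(1), and for Cases~(2) and~(3) a reduction from $Q_{RST}$ respectively $Q_{ST}$ that encodes the matrix along a shortest offending path while pinning all remaining variables to a dummy constant. The only cosmetic difference is that the paper places the matrix in the first edge of the path (using $X_1$ for $A$ and $X_2,\ldots,X_n$ as copies of $B$) rather than threading pairs through intermediate variables, and orients the path in Case~(3) with the free variable at $X_1$ rather than $X_n$; neither affects the argument.
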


In Sections~\ref{sec:lower_bound_RST} and \ref{sec:lower_bound_ST}, we give lower bounds on the complexity of evaluating the 
queries $Q_{RST}() =R^d(A),S^s(A,B),T^d(B)$ and respectively $Q_{ST}(A) = S^s(A,B),T^d(B)$. 
In Section~\ref{sec:general_lower_bound}, we show a lower
bound on the complexity of evaluating arbitrary conjunctive queries that are not contained in $\lin$ and do
not have repeating relation symbols.

\subsection{Lower Bound for $Q_{RST}$}
\label{sec:lower_bound_RST}
\nop{
The lower bound for the query $Q_{RST}()$ $=$ $R^d(A),$ $S^s(A,B),$ $T^d(B)$ relies on the OuMv conjecture, which is implied by the OMv conjecture~\cite{Henzinger:OMv:2015}:
}
{\bf Proposition~\ref{prop:Q_RST_hard}.}
{\it 
There is no algorithm that evaluates the conjunctive query 
$Q_{RST}()$ $=$ $R^d(A),$ $S^s(A,B),$ $T^d(B)$ 
with 
$\bigO{N^{3/2-\gamma}}$ preprocessing time, 
$\bigO{N^{1/2-\gamma}}$ update time, and 
$\bigO{N^{1/2-\gamma}}$ enumeration delay
for any $\gamma >0$, where $N$ is the database size, unless the OuMv conjecture fails.
}

\medskip
We detail the proof sketch of Proposition~\ref{prop:Q_RST_hard} in Section~\ref{sec:lower_bounds}.
Assume that we have an Algorithm $\calA$ that maintains the  
query $Q_{RST}$ with $\bigO{N^{3/2 - \gamma}}$ preprocessing time, 
$\bigO{N^{1/2 - \gamma}}$ update time, and $\bigO{N^{1/2 - \gamma}}$ enumeration delay
for some $\gamma >0$. 
We show that we can use Algorithm $\calA$ to solve the OuMv problem given in Definition \ref{def:OuMV_problem} in subcubic time, which contradicts Conjecture \ref{conj:OuMV}. 

Consider an $n$-by-$n$ matrix $M$ and $n$ pairs of $n$-dimensional vectors $(u_1, v_1), \ldots , (u_n, v_n)$ that serve as the input to the OuMv problem. We construct an Algorithm $\calB$ that uses the static relation $S^s(A, B)$ to encode $M$ and the two dynamic relations $R^d(A)$ and $T^d(B)$ to encode the vectors 
$u_r$ and respectively $v_r$ for $r\in[n]$. Next, we explain the encoding in detail. 

Algorithm $\calB$ starts with the empty relations $R^d$, $S^s$, and $T^d$. First, it populates relation $S^s$ such that $(i,j) \in S^s$ if and only if $M(i,j) = 1$. 
Then, it executes the preprocessing mechanism of Algorithm $\calA$. 
In each round $r \in [n]$, Algorithm $\calB$ receives the vector pair $(u_r, v_r)$ and updates $R^d(A)$ and $T^d(B)$ so that $i \in R^d$ if and only if $u_r[i]= 1$ and $i \in T^d$ if and only if $v_r[i]= 1$. We observe that $u^T_rMv_r = 1$ if and only if the (Boolean) result of $Q_{RST}$ is true. Algorithm $\calB$ triggers the enumeration mechanism of $\calA$ and outputs $1$ if the result of $Q_{RST}$ is true. Otherwise, it outputs $0$. 

We analyse the overall time used by Algorithm $\calB$. 
Given that $M$ is an $n\times n$ matrix  $M$, the size and construction time of relation $S^s$ are both $\bigO{n^2}$. It results in a database of size $\bigO{n^2}$.
The preprocessing time is $\bigO{(n^2)^{3/2 - \gamma}} = \bigO{n^{3 - 2\gamma}}$.
In each round $r \in [n]$, Algorithm $\calB$ executes at most 
$4n$ updates to $R^d$ and $T^d$ and checks emptiness for $Q_{RST}$. 
Since the database size remains $\bigO{n^2}$, the time to update the relations is
$\bigO{4n \cdot (n^2)^{0.5 - \gamma}} = \bigO{n^{2 - 2\gamma}}$.
It is sufficient to enumerate the first result tuple to check if $Q_{RST}$ is empty. 
The time to do the emptiness check is
$\bigO{(n^2)^{0.5 - \gamma}} = \bigO{n^{1 - 2\gamma}}$.
Hence, for $n$ rounds, the overall time is $\bigO{n^{3-2\gamma}}$. 
We conclude that Algorithm $\calB$ takes $\bigO{n^{3-2\gamma}}$ time to 
solve the OuMv problem, which contradicts Conjecture \ref{conj:OuMV}.

\subsection{Lower Bound for the Query $Q_{ST}$}
\label{sec:lower_bound_ST}
\begin{proposition}
\label{prop:Q_ST_hard}
There is no algorithm that evaluates the conjunctive query 
$Q_{ST}(A) = S^s(A, B), T^d(B)$ 
with 
$\bigO{N^{3/2-\gamma}}$ preprocessing time, 
$\bigO{N^{1/2-\gamma}}$ update time, and 
$\bigO{N^{1/2-\gamma}}$ enumeration delay
for any $\gamma >0$, where $N$ is the database size, unless the OMv conjecture fails.
\end{proposition}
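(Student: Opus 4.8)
The plan is to mirror the reduction used for $Q_{RST}$ in Proposition~\ref{prop:Q_RST_hard}, but to reduce from the \OMv problem (Definition~\ref{def:OMV_problem}) rather than \OuMv. The reason for switching conjectures is that the head variable $A$ of $Q_{ST}(A)$ survives the projection, so the query result encodes an entire product vector rather than a single Boolean scalar, which is exactly the information \OMv asks for. Concretely, I would assume for contradiction an algorithm $\calA$ that maintains $Q_{ST}$ with $\bigO{N^{3/2-\gamma}}$ preprocessing time and $\bigO{N^{1/2-\gamma}}$ update time and enumeration delay for some $\gamma>0$, and build from it an algorithm $\calB$ for \OMv.

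First I would set up the encoding. Given the $n\times n$ Boolean matrix $M$, algorithm $\calB$ populates the static relation $S^s$ so that $(i,j)\in S^s$ iff $M(i,j)=1$; this yields a database of size $\bigO{n^2}$ in $\bigO{n^2}$ time, after which $\calB$ runs the preprocessing of $\calA$. For each incoming vector $v_r$, $\calB$ overwrites the dynamic relation $T^d$ so that $j\in T^d$ iff $v_r[j]=1$, using at most $2n$ single-tuple deletes and inserts to remove $v_{r-1}$ and install $v_r$. The crucial correctness observation is that after this update the result of $Q_{ST}$ is exactly the support of $Mv_r$: an $A$-value $a$ appears in $Q_{ST}$ iff there is some $b$ with $(a,b)\in S^s$ and $b\in T^d$, i.e. iff $\exists j\,(M(a,j)=1 \wedge v_r[j]=1)$, which is precisely $(Mv_r)[a]=1$ over the Boolean semiring. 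So $\calB$ reconstructs $Mv_r$ by triggering the enumeration mechanism of $\calA$ and collecting all returned $A$-values (entries absent from the enumeration are the $0$-coordinates); this produces $Mv_r$ before the next vector arrives, as \OMv requires.

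Then I would carry out the timing analysis. The database size stays $\bigO{n^2}$ throughout, so each update costs $\bigO{(n^2)^{1/2-\gamma}}=\bigO{n^{1-2\gamma}}$, and the $\bigO{n}$ updates per round cost $\bigO{n^{2-2\gamma}}$ in total. The result of $Q_{ST}$ has at most $n$ tuples, so enumerating it at delay $\bigO{n^{1-2\gamma}}$ costs $\bigO{n^{2-2\gamma}}$ per round. Over the $n$ rounds this sums to $\bigO{n^{3-2\gamma}}$; together with the $\bigO{n^2}$ encoding and the $\bigO{(n^2)^{3/2-\gamma}}=\bigO{n^{3-2\gamma}}$ preprocessing, the total running time of $\calB$ is $\bigO{n^{3-2\gamma}}$, which is subcubic and contradicts Conjecture~\ref{conj:OMV}.

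The one place demanding care—the main obstacle, though a mild one—is that we must recover the entire product vector rather than a scalar: unlike the $Q_{RST}$ reduction, a single emptiness check no longer suffices, so I must argue that enumerating the full result still fits the budget. This works precisely because the result has at most $n$ tuples and the per-tuple delay is $\bigO{n^{1-2\gamma}}$, so the enumeration cost per round matches the update cost and does not push the total beyond subcubic. I would also note that $Q_{ST}$ has no repeating relation symbols, so the start-from-clean-relations setup and the size bookkeeping are unproblematic.
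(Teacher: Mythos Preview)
Your proposal is correct and follows essentially the same reduction as the paper: encode $M$ into the static relation $S^s$, encode each incoming vector $v_r$ into $T^d$ via $\bigO{n}$ updates, and recover $Mv_r$ by enumerating the result of $Q_{ST}$, yielding overall time $\bigO{n^{3-2\gamma}}$ in contradiction to the \OMv conjecture. Your write-up is in fact more explicit than the paper's about why the full enumeration stays within budget (the result has at most $n$ tuples), which is the one place where this reduction differs from the $Q_{RST}$ case.
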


\medskip
The proof is similar to that of Proposition \ref{prop:Q_RST_hard}. 
%
Assume there is an Algorithm $\calA$ that maintains the  
query $Q_{ST}$ with 
$\bigO{N^{3/2-\gamma}}$ preprocessing time, 
$\bigO{N^{1/2-\gamma}}$ update time, and 
$\bigO{N^{1/2-\gamma}}$ enumeration delay
for some $\gamma >0$.
We show that the existence of Algorithm $\calA$ contradicts Conjecture \ref{conj:OMV}. 

Let $M$ be an $n$-by-$n$ matrix  and $v_1, \ldots , v_n$ $n$-dimensional vectors that serve as input to the 
OMv problem. 
We design an algorithm $\calB$ that first encodes $M$ into $R^d$.
Then, it executes the preprocessing mechanism of Algorithm $\calA$.
In each round $r \in [n]$, it executes at most $2n$ updates to encode the incoming vector $v_r$ into $T^d$. 
To obtain the result of $Mv_r$, Algorithm $\calB$ enumerates the result of $Q_{ST}$.
The preprocessing time of Algorithm $\calB$ is
$\bigO{(n^2)^{1.5 - \gamma}} = \bigO{n^{3 - 2\gamma}}$.
The total update and enumeration time over $n$ rounds is,
$\bigO{n \cdot n \cdot (n^2)^{0.5 - \gamma}} = \bigO{n^{3 - 2\gamma}}$.
The subcubic processing time contradicts Conjecture \ref{conj:OMV}.

\subsection{Proof of Proposition \ref{prop:lower_bound_dichotomy}}
\label{sec:general_lower_bound}
Consider a conjunctive query $Q \notin \lin$ without repeating relation symbols. 
By definition of the class $\lin$, one of the following three cases holds: (1) $Q$ is not 
free-connex $\alpha$-acyclic, or (2) 
it has a path connecting two dynamic body atoms that is not body-safe, or (3) it has a
path connecting a dynamic body atom with the head atom that is not head-safe.
If Case (1) holds, then $Q$ does not admit constant-delay enumeration after liner time preprocessing even without executing any update,  unless the Boolean Matrix Multiplication conjecture fails~\cite{BaganDG07}.  
In Case (2), we reduce the evaluation
of $Q_{RST}$ to the evaluation of $Q$. In Case (3), we reduce the evaluation of $Q_{ST}$ to the evaluation of $Q$. 
These reductions transfer the lower bounds for $Q_{RST}$ and respectively $Q_{ST}$ to $Q$. 
In the following two paragraphs, we explain these reductions in more detail. 

\paragraph*{Lower Bound for Queries That Are Not Body-Safe}
Consider a conjunctive query $Q$ that
has a path connecting two dynamic body atoms that is not body-safe.
This means that the query has a path 
$\bm P = (X_1, \ldots , X_n)$ that connects two dynamic 
body atoms $\hat{R}^d(\bm X)$ and $\hat{T}^d(\bm Y)$
such that $\bm P \cap \bm X \cap \bm Y = \emptyset$.
Let $\bm P$ be the shortest path with this property. 
Hence, it must hold that (1) every variable appears at most once in $\bm P$ (by the definition of paths), (2)
 $n \geq 2$, and (3) $X_i \notin \bm X$ and $X_i \notin \bm Y$ for all $X_i$ with $i \in \{2, \ldots, n-1\}$.
\nop{%

\begin{figure}[htbp]
    \centering
    \begin{tikzpicture}
    \node[draw=none,fill=none](X1) at (0, 0) {$X_1$};
    \node[draw=none,fill=none](dotsX) at (-1.5, 0) {$\dots$};
    \node[draw=none,fill=none,rotate=90](dotsXv) at (0, -.6) {$\dots$};
    \node[draw=none,fill=none](Xn) at (8, 0) {$X_n$};
    \node[draw=none,fill=none](dotsY) at (9.5, 0) {$\dots$};
    \node[draw=none,fill=none,rotate=90](dotsYv) at (8, -.6)  {$\dots$};
    \node[draw=none,fill=none](X2) at (2, 0) {$X_2$};
    \node[draw=none,fill=none](dots1) at (4, 0) {$\dots$};
    \node[draw=none,fill=none](Xn-1) at (6, 0) {$X_{n-1}$};
    \node[draw=none,fill=none](r_location) at (-1, .7) {\textcolor{red}{$\hat{R}^d(\bm A)$}};
    \draw[red] (-1,0) circle [x radius= 1.4, y radius=0.4];
    \draw (0,-.3) circle [x radius= .7, y radius=0.4,rotate=90];
    \draw (8,-.3) circle [x radius= .7, y radius=0.4,rotate=90];
    \draw (1,0) circle [x radius= 1.4, y radius=0.4];
    \draw (3,0) circle [x radius= 1.4, y radius=0.4];
    \draw (5,0) circle [x radius= 1.4, y radius=0.4];
    \draw (6.9,0) circle [x radius= 1.5, y radius=0.4];
    \draw[red] (9,0) circle [x radius= 1.4, y radius=0.4];
    \node[draw=none,fill=none](r_location) at (9, .7) {\textcolor{red}{$\hat{T}^d(\bm B)$}};
    \end{tikzpicture}
    \caption{Illustration of a path in a query that does not have safe atom-to-atom paths.}
    \label{fig:atom_violation}
\end{figure}
}
The reduction of the evaluation of the query 
$Q_{RST}() = R^d(A),S^s(A,B),T^d(B)$ to the evaluation of the query $Q$ works as follows. 
We use the 
values of the variables 
$X_1$ and $X_2$ in all body atoms of $Q$, besides the atoms $\hat{R}^d$ and $\hat{T}^d$, 
  to encode the values of the variables $A$ and respectively $B$ in $Q_{RST}$.
   The values of the variables 
$X_3, \ldots , X_n$ become copies of the values of $X_2$.
We use the values of the variable $X_1$ in $\hat{R}^d$ to encode the values 
of the variable $A$ in $R^d$ and use the values of the variable $X_n$ in $\hat{T}^d$ to encode the values 
of the variable $B$ in $T^d$.    
All other variables in $Q$ are assigned a fixed dummy value. 

In the preprocessing stage, we construct all relations besides 
$\hat{R}^d$ and $\hat{T}^d$. 
Each update to $R^d$ or $T^d$ is translated 
into an update to $\hat{R}^d$ or respectively $\hat{T}^d$. 
Each enumeration request to $Q_{RST}$ is translated into an enumeration request to $Q$. The answer of  $Q_{RST}$ is true
if and only if the result of $Q$ is non-empty.

\paragraph*{Lower Bound for Queries That Are Not Head-Safe}
Consider now a conjunctive query $Q$ that
has a path connecting the head atom with a body atom that is not head-safe.
This means that the query has a path 
$\bm P = (X_1, \ldots , X_n)$ such that $X_1$ is free and $X_n$ 
is contained in a dynamic body atom $\hat{T}^d(\bm Y)$ 
such that $\bm P \cap \bm \free(Q) \cap \bm Y = \emptyset$.
Let $\bm P$ be the shortest path with this property. 
It must hold that (1) every variable appears at most once in $\bm P$, (2)
 $n \geq 2$, (3) all variables $X_2, \ldots , X_n$ are bound, and (4)
 $X_i \notin \bm Y$ for $i \in \{1, \ldots, n-1\}$.
The reduction of the evaluation of  
$Q_{ST}(A) = S^s(A,B),T^d(B)$ to the evaluation of the query $Q$ is similar to the reduction in case of the 
query
$Q_{RST}$. 
We use the 
values of the variables 
$X_1$ and $X_2$ in all body atom of $Q$, besides the atom $\hat{T}^d$, 
  to encode the values of the variables $A$ and respectively $B$ of the atom $S^s$ in $Q_{ST}$.
   The values of the variables 
$X_3, \ldots , X_n$ become copies of the values of $X_2$.
We use the values of the variable $X_n$ in $\hat{T}^d$ to encode the values 
of the variable $B$ in $T^d$. We assign a fixed dummy value  to all   
 other variables in $Q$.
In the preprocessing stage, we construct all relations besides $\hat{T}^d$. 
Each update to $T^d$ is translated 
into an update to the relation $\hat{T}^d$. 
Each enumeration request to the query $Q_{ST}$ is translated into an enumeration request to $Q$. The answer of  $Q_{ST}$ is equal to the result of $Q$ projected onto $X_1$. Since all free variables of $Q$ besides $X_1$ are assigned a fixed dummy, the 
projection of the result tuples onto $X_1$ does not increase the enumeration delay for the $X_1$-values. 

\nop{
\begin{figure}[htbp]
    \centering
    \begin{tikzpicture}
    \node[draw=none,fill=none](A) at (0, 0) {$\underline{\textbf{$X_1$}}$};
    \node[draw=none,fill=none](dotsX) at (-1.5, 0) {$\dots$};
    \node[draw=none,fill=none,rotate=90](dotsXv) at (0, -.6) {$\dots$};
    \node[draw=none,fill=none](B) at (8, 0) {$X_n$};
    \node[draw=none,fill=none](dotsY) at (9.5, 0) {$\dots$};
    \node[draw=none,fill=none,rotate=90](dotsYv) at (8, -.6)  {$\dots$};
    \node[draw=none,fill=none](X1) at (2, 0) {${X_2}$};
    \node[draw=none,fill=none](dots1) at (4, 0) {$\dots$};
    \node[draw=none,fill=none](Xn) at (6, 0) {$X_{n-1}$};
    \draw (-1,0) circle [x radius= 1.4, y radius=0.4];
    \draw (0,-.3) circle [x radius= .7, y radius=0.4,rotate=90];
    \draw(8,-.3) circle [x radius= .7, y radius=0.4,rotate=90];
    \draw (1,0) circle [x radius= 1.4, y radius=0.4];
    \draw (3,0) circle [x radius= 1.4, y radius=0.4];
    \draw (5,0) circle [x radius= 1.4, y radius=0.4];
    \draw (7,0) circle [x radius= 1.4, y radius=0.4];
    \draw[red] (9,0) circle [x radius= 1.4, y radius=0.4];
    \node[draw=none,fill=none](r_location) at (9, .7) {\textcolor{red}{$\hat{T}^d(\bm B)$}};
    \end{tikzpicture}
    \caption{ Illustration of a path in a query that does not have safe atom-to-variable paths.}
    \label{fig:variable_violation}
\end{figure}

The idea of the reduction is similar to the previous case. 
We use the relations constituting the path to encode the static  relation $S$ and use the dynamic relation  
$\hat{T}^d$ to simulate the dynamic relation $T^d$
In particular, we use the variables $X_2, \ldots , X_n$
to simulate the variable $B$ in $Q_{ST}$ and use the free 
variable $X_1$ to simulate the free variable $A$ in $Q_{ST}$. 
}

\nop{
In the preprocessing stage 
We further assume that there is only one bound variable $B \in \calX$ on the path $P$. If the intersection contains variables other than $B$, such as $\calX \cap P = \{B, C\}$, we can shrink the path $P$ to exclude them. Alternatively, we can copy the value of $B$ to all other variables in $\calX$. 

Before preprocessing, we first populate all the atoms on the path other than $T^*(\calX)$ such that the following tuples are in the atoms on the path $(X_{i\in[n]} = i, a)$, $(X_n = i, B = j, a)$ and $(B = j, a)$ if and only if $(i, j) \in S^s(A, B)$.
All atoms not on the path are filled with $(a)$.
Whenever $T^d(B)$ is updated, we update $T^*(\calX)$ such that
$$(j) \in T^d(B) \iff (B = j, a) \in T^*(\calX).$$

We have now encoded $Q_{RST}$ in $Q^*$ that violates Safe Atom-to-Atom Path property and $Q_{ST}$ in $Q'$ that violates Safe Atom-to-Variable Path property. If the algorithm mentioned in Proposition \ref{prop:lower_bound_dichotomy} exists, we can solve both $Q_{RST}$ and $Q_{ST}$ in subcubic time, thus violating Conjecture \ref{conj:OMV} and \ref{conj:OuMV}.
}

\end{document}